\def\textiid{i.i.d.\@\xspace}
\newcommand\iid{\ifmmode\text{ i.i.d. } \else \textiid \fi}
\newcommand{\beqs}{\begin{equation*}}
\newcommand{\eeqs}{\end{equation*}}
\newcommand{\beq}{\begin{equation}}
\newcommand{\eeq}{\end{equation}}
\DeclareMathOperator*{\argmin}{arg\,min}
\theoremstyle{plain}
\newtheorem{theorem}{Theorem}
\newtheorem{lemma}{Lemma}
\newtheorem{corollary}{Corollary}
\newtheorem{definition}{Definition}
\newtheorem{remark}{Remark}
\newtheorem{example}{Example}
\newcommand{\A}{{\mathsf{A}}}
\newcommand{\D}{{\mathsf{D}}}
\newcommand{\tw}{{\tilde{\textbf{w}}}}
\newcommand{\tW}{{\tilde{\textbf{W}}}}
\begin{document}

\title{
Learning under Distribution Mismatch and Model Misspecification
}

\author{Saeed Masiha\qquad Amin Gohari \qquad Mohammad Hossein Yassaee \\
 Mohammad Reza Aref}

\allowdisplaybreaks

\maketitle

\begin{abstract}
We study learning algorithms when there is a mismatch between the distributions of the training and test datasets of a learning algorithm. The effect of this mismatch on the generalization error and model misspecification are quantified. Moreover, we provide a connection between the generalization error and the rate-distortion theory, which allows one to utilize bounds from the rate-distortion theory to derive new bounds on the generalization error and vice versa. In particular, the rate-distortion based bound strictly improves over the earlier bound by Xu and Raginsky even when there is no mismatch. We also discuss how ``auxiliary loss functions" can be utilized to obtain upper bounds on the generalization error.

\end{abstract}


\section{Introduction}
In a learning algorithm, a \emph{distribution mismatch} occurs when the training dataset and the test dataset are not drawn from the same distribution. This mismatch might also occur if training data is corrupted or if the statistical distribution of the data changes from training to testing.  For example, suppose that (in the Covid era) a pharmaceutical company located in region $\mathrm{R}$ has developed a drug for Covid-19 (in statistical terms, the company has tuned the parameters of a process that describes how to mix different chemicals to make a drug). Clinical experiments show high effectiveness (say 95\%) of this treatment for the population that resides in region $\mathrm{R}$. There is an urgent need for the drug and the company lacks time to test the medicine on other populations with possibly different genetic  backgrounds (in statistical terms, with a different distribution from the distribution of the population in region $\mathrm{R}$). Hence it is required to have  some guarantee  on how the   effectiveness of treatment for the population $\mathrm{R}$ generalizes to other populations. As another example, in federated learning, a centralized model is trained based on chunks of training data originating from a number of clients, which may be mobile phones, other mobile devices, or sensors. While the training data may come from only a limited number of clients, statistical guarantees on the learning algorithm should be expressed in terms of testing on a population-averaged model of \emph{all} client distributions, which might be different from the training distribution. 

Distribution mismatch can manifest itself in different ways: consider a data scientist in a company who is given access to a training dataset and asked to make a recommendation about a  decision for the company. 
The training dataset is corrupted and its distribution slightly differs from that of the test data. The data scientist might run a learning algorithm $\A$ and utilize its output on the training data to make a recommendation. In the first part of this paper, we study the effect of distribution mismatch on the generalization error of algorithm $\A$. Next, assume that the company's manager impresses upon the data scientist the importance of the decision for the company and asks about his confidence level about his recommendation. To address this question, the data scientist needs to come up with a \emph{mathematical model} for the data and give guarantees based on that model. For instance, the data scientist might choose the parametric class of Gaussian distributions, partly based on the training data histograms (many methods to find a family of distributions for data samples are data-driven). Since the training data is corrupted, this process could lead to model misspecification. In the second part of this paper, we study how model misspecification affects theoretical guarantees of a learning algorithm.

\textbf{Generalization error under distribution mismatch:} Distribution mismatch is the subject of previous studies in transfer learning or domain adaptation \cite{mohri2019agnostic,mansour2009domain,wang2018theoretical,wu2020information,mansour2020theory}. An important goal common to domain adaptation and causal inference is to make accurate predictions when the distributions for the source (or training) domain(s) and target (or test) domain(s) differ \cite{magliacane2018domain}. Some interesting works on causal domain adaptation algorithms are discussed in \cite{chen2020domain,akbari2021recursive}. Distribution mismatch can be happened between labeled and unlabeled training data and test data in semi-supervised learning \cite{aminian2022information}. In \cite{aminian2022information}, they proposed a novel framework for self-training SSL algorithms that encompasses traditional SSL approaches such as the entropy minimization and the Pseudo-labeling approaches.  In the first part of this paper, we provide information-theoretic bounds on the generalization error under a distribution mismatch.  Designing algorithms with low generalization error is a key challenge in machine learning. It is known that under certain assumptions, the generalization error of a learning algorithm can be bounded from above in terms of the mutual information between the input and output of the algorithm \cite{russo2019much,xu2017information} (see also \cite{bu2020tightening,lopez2018generalization,wang2019information,hellstrom2020generalization,aminian2020jensen,esposito2020robust,issa2019strengthened,esposito2019generalization,jiao2017dependence,asadi2018chaining} for various generalizations and extensions using other measures of dependence). These works assume that the test data are drawn from the same distribution as the training data. Herein, we provide bounds on the generalization error of the learning algorithm assuming a bound on the KL divergence between the test and training distributions as well as a bound on the mutual information between the input and output of the learning algorithm. One of our bounds is based on (to the best of our knowledge) a novel connection between generalization error and the rate-distortion theory. When specialized to the case of no-mismatch, this bound strictly improves over the bound in
\cite{xu2017information} (see Corollary \ref{cor1} and Figure \ref{fig:screenshot001}). 

A question that we also address in this section is as follows: in case of having no mismatch between the training and test distributions, having more training data samples leads to increasingly better estimates of the unknown distribution of the data. On the other hand, in case of a mismatch, increasing the number of training samples can only provide more information about the training distribution. In the limit of the number of samples going to infinity, we will perfectly learn the training distribution but will still have a residual ambiguity about the test distribution: we will only know that the test distribution is at a certain KL distance from the training distribution. If we are in a regime where the error is dominated by this residual ambiguity in the test distribution,
the value of training samples gradually depreciates as we gather more samples. Subsequently, we might have insufficient incentive to gather more training samples. This shows that there is an ``optimal" number of samples associated with our problem. To the best of our knowledge, this question has not been addressed in the literature so far.
 We address the above question as follows:
In Corollary \ref{cor1}, we provide an upper bound of generalization error in terms of $\gamma+r/n$ where $\gamma$ is the KL-divergence between the training and test distributions, $r$ is the mutual information between the input and output of the learning algorithm and $n$ is the sample size. If $\gamma>0$ (i.e., when distribution mismatch exists), for large values of $n$ (or small values of $r$), the term $\gamma$ becomes the dominant term, and the effect of $r/n$ vanishes in the upper bound. This happens when $r/n$ is of the same order as $\gamma$. For a fixed sample size $n$ and $\gamma$, it suffices to work with algorithms that have input-output mutual information $r$ satisfying $r\approx n\gamma$.
In other words, since the training data is drawn from a different distribution than the test data, limited overfitting will not affect the generalization error.  Next, we give a  \emph{lower bound} on the generalization error in Corollary \ref{cor2} under distribution mismatch. Similar to the upper bound, this lower bound on the generalization error involves the summation of two terms. The first term is a constant (bounded from above by the KL-divergence between the training and test distribution, e.g. see \eqref{eq9911}) and another term (depending on the input-output mutual information of the algorithm) and vanishing in $n$. 
Finally, we also consider the performance of the \text{ERM} algorithm under distribution mismatch in Theorem \ref{th9}. We present an upper bound on excess risk. Increasing the number of samples does not make the upper bound vanish and we get a constant upper bound (due to distribution mismatch) when the number of samples tends to infinity.

\textbf{Model misspecification:} A learning algorithm has access to a training dataset that is drawn from an unknown distribution. This unknown distribution is commonly assumed to belong to a known family of distributions $\mathcal{P}$. A learning ``model" provides a description for the family $\mathcal{P}$, and a learning algorithm is required to have good performance when the data is drawn from any arbitrary distribution belonging to $\mathcal{P}$. We say that model misspecification occurs when the data distribution does not belong to $\mathcal{P}$. The amount of misspecification may be measured by the minimum KL-divergence from the true distribution to the family of distributions in class $\mathcal{P}$ \cite{wang2019variational}. Model misspecification is a key consideration in statistics \cite{masegosa2019learning,wang2019variational}. For instance, \cite{masegosa2019learning} shows that Bayesian methods are not optimal for learning predictive models unless the model class is perfectly specified. In the second part of the paper, we fix a uniformly stable learning algorithm $\A$ and assume a notion of sample complexity for the class $\mathcal{P}$. Then we bound the sample complexity under a distribution $\mu'\notin \mathcal{P}$ based on the minimum KL-divergence of $\mu'$ from the family $\mathcal{P}$. 

\textbf{Organization:}
The rest of this paper is organized as follows. The paper splits into two parts: section \ref{secDM} gives our results on generalization error while Section \ref{sec:Reliability} is dedicated to model misspecification. In Section \ref{sec:definition} we formally define learning with mismatched (training and test data) distributions. Section \ref{generalization} provides a connection between the rate-distortion theory and the generalization error, along with upper and lower bounds on the generalization error. The performance of the ERM algorithm on the training data when there is a distribution mismatch is also studied. Section \ref{sec:Reliability} studies model mismatch  for the class of uniformly-stable algorithms. Finally, Section \ref{further-ideas} discusses some ideas to improve the upper bounds on the generalization error given in Section \ref{generalization}.

\textbf{Notation and preliminaries:}
Random variables are shown in capital letters, whereas their realizations are shown in lowercase letters. We show sets with calligraphic font. For a random variable $X$ generated from a distribution $\mu$, we use $\mathbb{E}_{X\sim\mu}$ to denote the expectation taken over $X$ with distribution $\mu$ and $P_{X}$ means the distribution over $X$. We use $D(\mu\|\nu)$ and $D_\alpha(\mu\|\nu)=\frac{1}{\alpha-1}\log\int \left(\frac{d\mu}{d\nu}\right)^{\alpha}d\nu(x)$ to denote the KL divergence and the Renyi divergence of order $\alpha$ respectively. In particular, we have
$D_2(\mu\|\nu)=\log\left(1+\chi^{2}(\mu\|\nu)\right)$ where  $\chi^{2}$-divergence is defined as $\chi^{2}(\mu\|\nu)=\mathbb{E}_{\nu}\left(\frac{d\mu}{d\nu}-1\right)^{2}$. 
Given two random variables $X$ and $Y$, we use the shorthand $X\leq_{1-\delta} Y$ to denote
$\mathbb{P}[X\leq Y]\geq 1-\delta$. 
Observe that $X\leq_{1-\delta_1}Y$ and $Y\leq_{1-\delta_2}Z$ implies  $X\leq_{1-\delta_1-\delta_2}Z$ by the union bound.

 We write $g(n)\sim \Omega(f(n))$ when $g(n)\ge c\cdot f(n)$ for large enough $n$.
 
The concept of subgaussianity is defined as follows:
\begin{definition}\label{def2}
	The random variable $X$ is said to be sub-Gaussian with parameter $\sigma^2$ if $\forall s\in\mathbb{R}$
	\begin{align}
		\mathbb{E}[e^{s(X-\mathbb{E}[X])}]\le e^{\frac{\sigma^{2}s^{2}}{2}}.
	\end{align}
	Using the Chernoff's bound, we obtain,
	\begin{align}
		\mathbb{P}\left(|X-\mathbb{E}X|>t\right)\le e^{\frac{-t^{2}}{2\sigma^{2}}}.
	\end{align}
\end{definition}

The following lemma  relates the expectation of a measurable function over two different distributions:

\begin{lemma}[Donsker-Varadhan]\label{lemma1}
	Let $\mathcal{X}$ be a sample space and let $P$ be a distribution on $\mathcal{X}$. Let $Q$ be a distribution on $\mathcal{X}$ with the support which is a subset of the $P$ support. Then for any measurable function $\phi:\mathcal{X}\to \mathbb{R}$ with respect to $P$, we have
	\begin{align*}
		\ln\left(\mathbb{E}_{P}[e^{\phi(X)}]\right)\ge \mathbb{E}_{Q}[\phi(X)]- {D}(Q\|P).
	\end{align*}
\end{lemma}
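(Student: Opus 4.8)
The plan is to derive the inequality from Jensen's inequality after a change of measure. Since the support of $Q$ is contained in the support of $P$, we have $Q$ absolutely continuous with respect to $P$, so the Radon--Nikodym derivative $dP/dQ$ is well defined $Q$-almost surely and $Q$-a.s. positive. First I would rewrite the left-hand side as an expectation under $Q$:
\[
\mathbb{E}_P[e^{\phi(X)}] = \mathbb{E}_Q\!\left[e^{\phi(X)}\,\frac{dP}{dQ}(X)\right] = \mathbb{E}_Q\!\left[\exp\!\left(\phi(X) + \ln\frac{dP}{dQ}(X)\right)\right].
\]
Then I would apply Jensen's inequality to the concave map $\ln(\cdot)$, namely $\ln \mathbb{E}_Q[Y] \ge \mathbb{E}_Q[\ln Y]$ with $Y = \exp\!\big(\phi(X) + \ln\tfrac{dP}{dQ}(X)\big)$, to obtain
\[
\ln \mathbb{E}_P[e^{\phi(X)}] \ge \mathbb{E}_Q\!\left[\phi(X) + \ln\frac{dP}{dQ}(X)\right] = \mathbb{E}_Q[\phi(X)] - \mathbb{E}_Q\!\left[\ln\frac{dQ}{dP}(X)\right] = \mathbb{E}_Q[\phi(X)] - D(Q\|P),
\]
which is the claimed bound.

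An alternative route, which I would actually write out in full because it also identifies the equality case, is to introduce the tilted (Gibbs) measure $P_\phi$ with $\frac{dP_\phi}{dP}(x) = e^{\phi(x)}/Z$, where $Z = \mathbb{E}_P[e^{\phi(X)}]$ (assume $0 < Z < \infty$, the cases $Z = 0$ or $Z = \infty$ making the statement trivial). A direct computation using the chain rule for Radon--Nikodym derivatives gives
\[
D(Q\|P_\phi) = \mathbb{E}_Q\!\left[\ln\frac{dQ}{dP_\phi}\right] = \mathbb{E}_Q\!\left[\ln\frac{dQ}{dP}\right] + \ln Z - \mathbb{E}_Q[\phi(X)] = D(Q\|P) + \ln Z - \mathbb{E}_Q[\phi(X)].
\]
Since $D(Q\|P_\phi) \ge 0$, rearranging yields $\ln Z \ge \mathbb{E}_Q[\phi(X)] - D(Q\|P)$, with equality exactly when $Q = P_\phi$.

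The only care needed is with integrability and infinite values, and this is where the minor bookkeeping lies rather than any genuine difficulty: if $\mathbb{E}_P[e^{\phi(X)}] = +\infty$, or $D(Q\|P) = +\infty$, or $\mathbb{E}_Q[\phi(X)] = -\infty$, the inequality holds trivially, so one may assume all three are finite; in that case $\phi$ and $\ln\tfrac{dP}{dQ}$ are $Q$-integrable, the variable $Y$ above is positive with finite mean, and Jensen's inequality (equivalently, nonnegativity of $D(Q\|P_\phi)$) applies without issue. Donsker--Varadhan is a classical fact, so I do not expect a real obstacle; the main thing to get right is invoking the support-containment hypothesis to justify $Q \ll P$ and hence the change-of-measure and chain-rule steps.
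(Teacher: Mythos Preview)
Your proof is correct and follows the standard argument; both routes you give (Jensen after change of measure, and nonnegativity of $D(Q\|P_\phi)$ for the tilted measure) are the usual textbook derivations. Note, however, that the paper does not actually prove this lemma: it is stated in the preliminaries as a classical fact (the Donsker--Varadhan variational formula) and invoked without proof, so there is no ``paper's own proof'' to compare against.
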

\begin{lemma}\label{lmm2} [Coupling]
Given the marginal distributions $\mu$ and $\mu'$ on $\mathcal{Z}$, one can find a coupling $\pi(z,z')$ on $(z,z')\in\mathcal{Z}\times\mathcal{Z}$ such that $(Z,Z')\sim \pi$ satisfy $Z\sim \mu$, $Z'\sim \mu'$ and
$\mathbb{P}_{\pi}\left[Z\neq Z'\right]=\|\mu-\mu'\|_{TV}$ where $\|\mu-\mu'\|_{TV}$ is defined as
\[
\|\mu-\mu'\|_{TV}=\sup_{A\in \mathcal{Z}}\left[\mu(A)-\mu'(A)\right].
\]
\end{lemma}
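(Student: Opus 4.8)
The plan is to give the classical maximal-coupling construction. Fix a $\sigma$-finite measure $\lambda$ dominating both $\mu$ and $\mu'$ (for instance $\lambda=\mu+\mu'$), and let $p=d\mu/d\lambda$, $q=d\mu'/d\lambda$ be the corresponding densities. Set $m(z)=\min\{p(z),q(z)\}$ and $\beta=\int_{\mathcal{Z}}m\,d\lambda$. The first step is to record the identity $\beta=1-\|\mu-\mu'\|_{TV}$: the supremum defining the total variation is attained on $A^*=\{z:p(z)>q(z)\}$, so $\|\mu-\mu'\|_{TV}=\int_{A^*}(p-q)\,d\lambda$, and splitting the identity $\int p\,d\lambda=1$ over $A^*$ and its complement gives $1-\|\mu-\mu'\|_{TV}=\int_{A^*}q\,d\lambda+\int_{(A^*)^c}p\,d\lambda=\int m\,d\lambda=\beta$.

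Next I would build $\pi$ explicitly as a two-component mixture. Toss a coin with $\mathbb{P}[\text{heads}]=\beta$. On heads, draw $W$ from the probability density $m/\beta$ and set $Z=Z'=W$. On tails, draw $Z$ from the density $(p-m)/(1-\beta)$ and, independently, $Z'$ from the density $(q-m)/(1-\beta)$; both are genuine probability densities because $p-m\ge 0$, $q-m\ge 0$, and $\int(p-m)\,d\lambda=\int(q-m)\,d\lambda=1-\beta$. (If $\beta\in\{0,1\}$ one of the branches is vacuous and is simply omitted.) Let $\pi$ be the resulting joint law of $(Z,Z')$ on $\mathcal{Z}\times\mathcal{Z}$.

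It then remains to verify the two required properties. For the marginal of $Z$, mixing the two branches with weights $\beta$ and $1-\beta$ yields density $\beta\cdot(m/\beta)+(1-\beta)\cdot(p-m)/(1-\beta)=m+(p-m)=p$, so $Z\sim\mu$; the computation for $Z'$ is identical with $q$ in place of $p$, so $Z'\sim\mu'$. For the disagreement probability, on heads $Z=Z'$ by construction, while on tails $Z$ is supported on $\{p>q\}$ and $Z'$ on $\{q>p\}$ (precisely where $p-m$ and $q-m$ are positive), and these sets are disjoint, so $Z\neq Z'$ almost surely there. Hence $\mathbb{P}_\pi[Z\neq Z']=\mathbb{P}[\text{tails}]=1-\beta=\|\mu-\mu'\|_{TV}$, as claimed.

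I do not expect a genuine obstacle; the only points needing care are the measure-theoretic setup (choosing a common dominating measure so that the pointwise minimum of the densities is well defined) and the degenerate cases $\beta=0$ (mutually singular $\mu,\mu'$) and $\beta=1$ ($\mu=\mu'$), where one branch of the mixture disappears. For completeness one may also note the matching lower bound: for any coupling and any event $A$, $\mu(A)-\mu'(A)=\mathbb{P}[Z\in A]-\mathbb{P}[Z'\in A]\le\mathbb{P}[Z\in A,\,Z'\notin A]\le\mathbb{P}[Z\neq Z']$, so $\|\mu-\mu'\|_{TV}\le\mathbb{P}[Z\neq Z']$ for every coupling, which shows the constructed $\pi$ is optimal — though this direction is not needed for the statement as written.
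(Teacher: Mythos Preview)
Your proof is correct: this is the standard maximal-coupling construction, carried out carefully (including the degenerate cases and the optional optimality direction). The paper itself states Lemma~\ref{lmm2} as a preliminary fact without proof, so there is nothing to compare against; your argument would serve perfectly well as the omitted justification.
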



\section{Generalization error under distribution mismatch}
\label{secDM}
\subsection{Problem Definition}
\label{sec:definition}
Consider an instance space $\mathcal{Z}$, a hypothesis space $\mathcal{W}$ and a non-negative loss function $\ell:\mathcal{W}\times \mathcal{Z}\to \mathbb{R}^{+}$. Assume that the test and training samples are produced (in an i.i.d. fashion) from two unknown distributions $\mu$ and $\mu'$ on $\mathcal{Z}$ respectively. A training  dataset of size $n$ is shown by the $n$-tuple, $S'=(Z'_{1},Z'_{2},\cdots,Z'_{n})\in\mathcal{Z}^n$ of i.i.d. random elements according to an unknown distribution $\mu'$. 
A learning algorithm is characterized by a probabilistic mapping $\A(\cdot)$ (a Markov Kernel) that maps training data $S'$ to the  random variable $W'=\A(S')\in\mathcal{W}$ as the output hypothesis. The population risk of a hypothesis $w\in \mathcal{W}$ is computed on the test distribution $\mu$ as follows: 
\begin{align}
	L_{\mu}(w)\triangleq\mathbb{E}_{\mu}[\ell(w,Z)]=\int_{\mathcal{Z}}\ell(w,z)\mu(dz), \qquad\forall w\in\mathcal{W}.
\end{align}
The goal of learning is to ensure that under any data generating distribution $\mu$, the population risk of the output hypothesis $W'$ is small, either
in expectation or with high probability. Since $\mu$ and $\mu'$ are unknown, the learning algorithm cannot directly compute $L_{\mu}(w)$ for any $w\in \mathcal{W}$, but can compute the empirical risk of $w$ on the training dataset $S'$ as an approximation, which is defined as
\begin{align}
	L_{S'}(w)\triangleq\frac{1}{n}\sum_{i=1}^{n}\ell(w,Z'_{i}).
\end{align}
The true objective of the learning algorithm, $L_{\mu}(W')$, is unknown to the learning algorithm while the empirical risk $L_{S'}(W')$ is known.  The generalization gap is defined as the difference between these two quantities as  \cite{wang2018theoretical, wu2020information}
\begin{align}\label{new_notion}
	\mathrm{gen}_{\mu}(W',S')= L_{\mu}(W')-L_{S'}(W'),
\end{align}
where $W'=\A(S')$ is the output of the algorithm $\A$ on the input $S'\sim (\mu')^{\otimes n}$. In common algorithms such as empirical risk minimization (ERM) and gradient descent, $L_{S'}(W')$ is minimized \cite{shalev2010learnability, hardt2016train}. Therefore, to control $L_{\mu}(W')$ we need to bound $\mathrm{gen}_{\mu}(W',S')$ from above (in expectation or with high probability). Observe that $\mathrm{gen}_{\mu}(W',S')$, as defined in \eqref{new_notion}, is a random variable and a function of $(S',W')$. The generalization error is the expected value of $\mathrm{gen}_{\mu}(W',S')$:
\begin{align}\label{new_notion2}
	\mathrm{gen}\left(\mu,\mu',\A\right)= \mathbb{E}\left[L_{\mu}(W')-L_{S'}(W')\right].
\end{align}
When there is no-mismatch, \emph{i.e.,} $\mu=\mu'$, we  denote the generalization error  by $\mathrm{gen}\left(\mu, \A\right)$ for simplicity. 

\subsection{Upper bound on the generalization error}\label{generalization}

The following upper bound on the generalization error is given in \cite{xu2017information} (see also \cite{russo2019much}):
\begin{theorem}[\cite{xu2017information}]\label{th1}  Assume that there is no distribution mismatch, \emph{i.e.,} $\mu'=\mu$.
	Suppose $\ell(w,Z)$ is $\sigma^{2}$-subgaussian under $Z\sim \mu$ for all $w\in\mathcal{W}$. Take an arbitrary algorithm $\A$ that runs on a training dataset $S'$. Then the generalization error is bounded as
	\[
	\mathrm{gen}(\mu,\A)\le\sqrt{\frac{2\sigma^{2}}{n}I\big(S';\A(S')\big)}.
	\]
\end{theorem}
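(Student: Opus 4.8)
The plan is to use the Donsker--Varadhan variational inequality (Lemma~\ref{lemma1}) to transfer the subgaussian moment control from the product distribution $P_{S'}\otimes P_{W'}$ to the joint distribution $P_{S',W'}$, paying a price equal to the mutual information $I(S';W')$. Concretely, fix $\lambda\in\mathbb{R}$ and consider the function $\phi(s',w') = \lambda\big(L_{\mu}(w') - L_{s'}(w')\big)$ on $\mathcal{Z}^n\times\mathcal{W}$. Under the \emph{independent} pairing, where $\Wt$ is an independent copy of $W'$ drawn from $P_{W'}$ and $S'$ is drawn from $\mu^{\otimes n}$ independently, the quantity $L_{\mu}(\Wt) - L_{S'}(\Wt)$ has mean zero (since $\mathbb{E}_{S'}[L_{S'}(\Wt)\mid \Wt] = L_{\mu}(\Wt)$ when $S'\sim\mu^{\otimes n}$ is independent of $\Wt$, using $\mu=\mu'$), and it is a normalized sum of $n$ i.i.d.\ $\sigma^2$-subgaussian terms, hence $\tfrac{\sigma^2}{n}$-subgaussian. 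Therefore $\ln\mathbb{E}_{P_{S'}\otimes P_{W'}}\big[e^{\phi}\big] \le \tfrac{\lambda^2\sigma^2}{2n}$.

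Next I would apply Lemma~\ref{lemma1} with $P = P_{S'}\otimes P_{W'}$ and $Q = P_{S',W'}$ (the support condition holds since the joint is absolutely continuous w.r.t.\ the product of marginals when $I(S';W')<\infty$), yielding
\[
\lambda\,\mathrm{gen}(\mu,\A) = \mathbb{E}_{Q}[\phi] \le \ln\mathbb{E}_{P}[e^{\phi}] + D(Q\|P) \le \frac{\lambda^2\sigma^2}{2n} + I(S';W'),
\]
where the last equality $D(P_{S',W'}\|P_{S'}\otimes P_{W'}) = I(S';W')$ is the definition of mutual information. This holds for every $\lambda$. The final step is the standard optimization over $\lambda$: the inequality $\lambda\,\mathrm{gen}(\mu,\A) - \tfrac{\lambda^2\sigma^2}{2n} \le I(S';W')$ holding for all $\lambda\in\mathbb{R}$ forces the discriminant condition $\mathrm{gen}(\mu,\A)^2 \le \tfrac{2\sigma^2}{n}I(S';W')$; taking $\lambda>0$ to bound $\mathrm{gen}$ from above and $\lambda<0$ for the lower bound on $-\mathrm{gen}$ gives $|\mathrm{gen}(\mu,\A)| \le \sqrt{\tfrac{2\sigma^2}{n}I(S';W')}$, which is the claim.

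The main obstacle — really the only subtle point — is making the subgaussianity bookkeeping precise: one must verify that under the product measure $P_{S'}\otimes P_{W'}$, conditionally on $\Wt=w'$, the random variable $L_{\mu}(w') - L_{S'}(w') = -\tfrac1n\sum_{i=1}^n\big(\ell(w',Z_i') - \mathbb{E}_\mu[\ell(w',Z)]\big)$ is a sum of $n$ independent centered $\sigma^2$-subgaussian variables, hence $\tfrac{\sigma^2}{n}$-subgaussian, and that this per-$w'$ bound integrates cleanly against $P_{W'}$ so that the unconditional MGF bound $\tfrac{\lambda^2\sigma^2}{2n}$ survives. A minor care point is the degenerate case $I(S';W')=\infty$, where the bound is vacuous, and the case of small $n$; neither causes real difficulty. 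I would also remark that the same argument, with $S'$ still drawn from $\mu^{\otimes n}$ but under a genuine mismatch $\mu\neq\mu'$, is exactly where an extra divergence term $D(\mu'\|\mu)$ or $D(\mu\|\mu')$ must enter, foreshadowing the mismatched bounds developed later in the paper.
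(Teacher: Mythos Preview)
Your proof is correct and is essentially the original Xu--Raginsky argument: apply Donsker--Varadhan (Lemma~\ref{lemma1}) with $Q=P_{S',W'}$, $P=P_{S'}\otimes P_{W'}$, use that $L_\mu(w')-L_{S'}(w')$ is $\sigma^2/n$-subgaussian under the product measure (uniformly in $w'$), and optimize over $\lambda$. The bookkeeping points you flag (conditioning on $\Wt=w'$, then integrating over $P_{W'}$; the $I=\infty$ case) are handled exactly as you describe.

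The paper does not give its own proof of Theorem~\ref{th1}; it simply cites \cite{xu2017information}. What the paper does instead is recover Theorem~\ref{th1} as the no-mismatch ($\gamma=0$) special case of Corollary~\ref{cor1}, which in turn is obtained from the rate-distortion machinery: Theorem~\ref{Thm2} single-letterizes the problem to $\D_2(r/n)$, and then Theorem~\ref{th3} bounds that single-letter rate-distortion function via Donsker--Varadhan applied at the \emph{one-sample} level (with the auxiliary reference measure $\eta=\mu$ absorbing the mismatch term $D(\mu'\|\mu)$). So both routes rest on Donsker--Varadhan and the subgaussian MGF bound; the difference is that you apply it directly to the $n$-sample joint $P_{S',W'}$, whereas the paper first reduces to a one-shot problem and applies it there. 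Your route is shorter for this specific statement; the paper's route is what yields the strict improvement over Theorem~\ref{th1} (Figures~\ref{fig:screenshot001}--\ref{fig:screenshot002}) and the clean extension to mismatch.
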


Let us write the \emph{sharpest} possible bound on the generalization error given an upper bound $r$ on $I(S';\A(S'))$:
\begin{align}
\D_1(r)&\triangleq \sup_{P_{W'|S'}:~I(W';S')\leq r}
\mathbb{E}\left[L_{\mu}(W')-L_{S'}(W')\right]\label{eqnRD1}
\end{align}
where the supremum in \eqref{eqnRD1} is over all Markov kernels $P_{W'|S'}$ with a bounded input/output mutual information and $S'\sim (\mu')^{\otimes n}$. 
We claim that $\D_1(r)$ is related to the rate-distortion function. To see this, consider a rate-distortion problem where the input symbol space is $\mathcal{S}$, the reproduction space is $\mathcal{W}$ and the following distortion function between a symbol $w$ and an input symbol $s$ is used:\footnote{
While the literature commonly takes the reproduction space to be the same as the input symbol space, the rate-distortion theory does not formally require that.
}
 $$\Delta(w,s)=L_{s}(w)-L_{\mu}(w).$$
With this definition, from \eqref{eqnRD1}, we obtain
\begin{align}
-\D_1(r)&
=\inf_{P_{W'|S'}:~I(W';S')\leq r}
\mathbb{E}\left[\Delta(W',S')\right]\label{eqnRD2}
\end{align}
which is in the rate-distortion form. 

With $\D_1(r)$ defined as in \eqref{eqnRD1}, it follows that for any arbitrary algorithm $\A$ with $I\big(S';\A(S')\big)\leq r$ we have
\[
\mathrm{gen}\left(\mu,\mu',\A\right)\le \D_1(r).
\]
This upper bound does not require any subgaussianity assumption on the loss function. From this viewpoint, Theorem \ref{th1} is just a convenient and explicit lower bound on a rate-distortion function under an extra assumption on the loss function (for the no distribution mismatch case). We formalize this intuition in Theorem \ref{th3}.

Computing the upper bound $\D_1(r)$ is a convex optimization problem and there are efficient algorithms for computing it \cite{blahut1972computation}. However, computation of the bound can be practically difficult if the sample size $n$ is large. The following theorem provides a computable upper bound that requires running an optimization when the sample size is just one. 
\begin{theorem} For any arbitrary loss function $\ell(w,z)$, and algorithm $\A$ that runs on a training dataset $S'$ of size $n$, we have
\[\mathrm{gen}\left(\mu,\mu',\A\right)\le \D_2\left(\frac{I(S';\A(S'))}{n}\right)
\]
where
\begin{equation}
\D_2(r)\triangleq \max_{P_{\hat W|Z'}:~I(\hat W;Z')\leq r}
\mathbb{E}\left[L_{\mu}(\hat W)-\ell(\hat W,Z')\right]\label{eqnRD22}
\end{equation}
where $Z'\in\mathcal{Z}$ is distributed according to $\mu'$. Furthermore, to compute the maximum in \eqref{eqnRD22}, it suffices to compute the maximum over all conditional distributions $P_{\hat W|Z}$ for $\hat{W}\in \mathcal{W}$ such that the support of $\hat W$ can be chosen of size at most $|\mathcal{Z}|+1$.\label{Thm2}
\end{theorem}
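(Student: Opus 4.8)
The plan is to bound the $n$-sample generalization error by $n$ single-letter quantities, each controlled by the very definition of $\D_2$, and then to recombine these using concavity of $\D_2$ together with the tensorization of mutual information over the independent samples. Write $W'=\A(S')$ and $S'=(Z'_1,\dots,Z'_n)$. Since $L_{S'}(W')=\tfrac1n\sum_{i=1}^n\ell(W',Z'_i)$, linearity of expectation gives
\[
\mathrm{gen}(\mu,\mu',\A)=\frac1n\sum_{i=1}^n\mathbb{E}\bigl[L_\mu(W')-\ell(W',Z'_i)\bigr].
\]
For each $i$ the joint law of $(W',Z'_i)$ has $Z'_i\sim\mu'$ and conditional kernel $P_{W'|Z'_i}$, hence it is a feasible point of the maximization \eqref{eqnRD22} at level $r=I(W';Z'_i)$, and therefore $\mathbb{E}[L_\mu(W')-\ell(W',Z'_i)]\le\D_2\bigl(I(W';Z'_i)\bigr)$.

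Two elementary properties of $\D_2$ then finish the inequality. First, $\D_2$ is non-decreasing, since increasing $r$ only enlarges the feasible set. Second, $\D_2$ is concave: for $Z'\sim\mu'$ held fixed, the objective $P_{\hat W|Z'}\mapsto\mathbb{E}[L_\mu(\hat W)-\ell(\hat W,Z')]$ is affine in the channel while $P_{\hat W|Z'}\mapsto I(\hat W;Z')$ is convex in the channel, so mixing the optimal channels for rates $r_1,r_2$ with weights $\lambda,1-\lambda$ produces a channel feasible at rate $\lambda r_1+(1-\lambda)r_2$ with objective value $\lambda\D_2(r_1)+(1-\lambda)\D_2(r_2)$. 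Applying Jensen's inequality to the concave $\D_2$ and then using $\sum_{i=1}^n I(W';Z'_i)\le I(W';Z'_1,\dots,Z'_n)=I(S';\A(S'))$ (which holds because the $Z'_i$ are independent, via $H(S')=\sum_i H(Z'_i)$ and subadditivity of conditional entropy) and monotonicity of $\D_2$, we obtain $\mathrm{gen}(\mu,\mu',\A)\le\D_2\bigl(I(S';\A(S'))/n\bigr)$. (Alternatively, the recombination can be carried out without invoking concavity, by introducing a uniform index $J$ on $\{1,\dots,n\}$ independent of $S'$ and bounding the single pair $(W',Z'_J)$, whose mutual information is at most $I(S';\A(S'))/n$.)

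For the cardinality claim, I would regard $\D_2(r)=-\min_{P_{\hat W|Z'}:~I(\hat W;Z')\le r}\mathbb{E}[\Delta(\hat W,Z')]$ with ``distortion'' $\Delta(w,z')=\ell(w,z')-L_\mu(w)$ as a distortion--rate problem. For any feasible channel, let $\nu$ be the law of the posterior $P_{Z'|\hat W}$ on the simplex $\mathcal{P}(\mathcal{Z})$; then $\mathbb{E}[\Delta(\hat W,Z')]\ge\mathbb{E}_{\pi\sim\nu}[\tilde\Delta(\pi)]$, where $\tilde\Delta(\pi):=\min_{w\in\mathcal{W}}\{\langle\pi,\ell(w,\cdot)\rangle-L_\mu(w)\}$ depends on the posterior alone, and $\nu$ also determines the marginal $\int\pi\,d\nu=\mu'$ and the mutual information $H(\mu')-\int H(\pi)\,d\nu$. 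The relevant data thus reduce to the expectations of the $|\mathcal{Z}|+1$ continuous functions $\pi\mapsto\bigl(\pi(z_1),\dots,\pi(z_{|\mathcal{Z}|-1}),H(\pi),\tilde\Delta(\pi)\bigr)$ over the \emph{connected} set $\mathcal{P}(\mathcal{Z})$, so the Fenchel--Eggleston--Carath\'eodory theorem yields a $\nu'$ supported on at most $|\mathcal{Z}|+1$ posteriors with identical such expectations. Assigning to each of these posteriors a reproduction symbol attaining the minimum defining $\tilde\Delta$ produces a feasible channel whose output is supported on at most $|\mathcal{Z}|+1$ symbols and whose value matches $\D_2(r)$; the only degenerate case, where two chosen posteriors get the same symbol, is handled by merging those atoms, which leaves the objective unchanged and can only increase $H(Z'|\hat W)$ (concavity of entropy), hence preserves feasibility.

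The first two paragraphs are essentially routine once one invokes the standard affineness/convexity of mutual information in the channel; the delicate step is the cardinality bound, where one must justify that $\tilde\Delta$ is a bona fide continuous function of the posterior — for an infinite hypothesis space $\mathcal{W}$ this needs a mild compactness/continuity assumption on $\ell$ so the minimum is attained — and verify that the channel reconstructed from $\nu'$ is indeed feasible, which is the real point at issue.
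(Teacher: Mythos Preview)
Your proof is correct and follows essentially the same route as the paper: decompose the generalization error as $\frac{1}{n}\sum_i \mathbb{E}[L_\mu(W')-\ell(W',Z'_i)]$, bound each summand by $\D_2(I(W';Z'_i))$, and then combine via concavity of $\D_2$ together with $\sum_i I(W';Z'_i)\le I(W';S')$ for independent $Z'_i$. Your version is in fact slightly more direct than the paper's (which detours through an auxiliary $n$-letter quantity $\bar{\D}_1$ and proves the equality $\bar{\D}_1(r)=\D_2(r/n)$), and you supply a concrete Fenchel--Eggleston--Carath\'eodory argument for the cardinality bound that the paper simply declares ``standard'' and omits; your caveat about needing continuity/compactness of $\ell$ over $\mathcal{W}$ for the minimum in $\tilde\Delta$ to be attained is well taken.
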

The proof is given in the Section \ref{ProofThm2}.


\begin{remark}
 Remember that $D_1(r)$ is
the \emph{sharpest} possible bound on the generalization error given $I(S';\A(S'))\leq r$.
Thus, 
the upper bound 
$\D_{2}(r/n)$
of  Theorem \ref{Thm2} on the generalization error  is worse than the bound based on $\D_1(r)$, \emph{i.e.,} $\D_1(r)\leq \D_2(r/n)$. However, the bound  $\D_2(r/n)$ is easier to compute than $\D_1(r)$ because the optimization problem in \eqref{eqnRD22} is for a single symbol $Z'$ whereas the  optimization problem in \eqref{eqnRD1} is for a sequence $S'$ of $n$ symbols. Even though $\D_{2}(\cdot)$ is a rate-distortion function and does not admit an explicit closed-form expression in general, the Balhut-Arimoto algorithm can be used to compute it \cite{blahut1972computation} even when the cardinality of instance space $\mathcal{Z}$ is  infinite (see also \cite{blahut1972computation2}).\footnote{
Rate-distortion theory for continuous or abstract alphabets is discussed at length in the literature, e.g. see 
\cite{csiszar1974extremum, rose1994mapping}. See also  \cite{berger1998lossy} for a survey. 
} 

Later in Section \ref{further-ideas}, we show that ``auxiliary loss functions" can be utilized to tighten the gap between $\D_{1}(r)$ and $\D_{2}(r/n)$.

\end{remark}
While $\D_2(r)$ is easier to compute than $\D_1(r)$  and does not require any subguassianity assumption on the loss function, the bound in Theorem \ref{th1} is in a very explicit form.
Moreover, the bound in Theorem \ref{th1} (for the case of no-mismatch) depends only on mutual information $I\big(S';\A(S')\big)$ while the bound in Theorem \ref{Thm2}  depends on  $\mu$, $\mu'$ and $I\big(S';\A(S')\big)$. However, one can obtain a bound from Theorem \ref{Thm2} that does not depend on  $\mu$, $\mu'$ by maximizing the bound in Theorem \ref{Thm2} over all distributions $\mu$ and $\mu'$ such that  $D(\mu'\|\mu)\leq \gamma$ for some $\gamma>0$. We show that even after this maximization, the bound in Theorem \ref{Thm2} is still an improvement over Theorem \ref{th1}. To show this, we need to prove that the bound in Theorem \ref{Thm2} is always less than or equal to the bound in Theorem \ref{th1} for any arbitrary $\mu$ and $\mu'$ satisfying $D(\mu'\|\mu)\leq \gamma$ and the subgaussianity assumption on the loss function. Below, we give a general result for the rate-distortion function and deduce the relation between the bounds in
Theorem \ref{th1} and Theorem \ref{Thm2} as a corollary to it.

\begin{theorem}\label{th3}
    Consider a generic rate-distortion problem for $X\sim \zeta$ and a distortion function $d(x,\hat{x})\in\mathbb{R}$. Let 
$\phi(\cdot)$ be a function defined on $(-b,0]$ for some $b\in(0,\infty]$ as follows:
    \[
   \phi(\lambda)=\sup_{\hat{x}} \log\mathbb{E}_{\eta}\left[e^{\lambda d(X,\hat{x})}\right],
    \]
   for some distribution $\eta$ on $\mathcal{X}$ (possibly different from $\zeta$). Then,
     \begin{align}
     \inf_{P_{\hat{X}|X}:\,I(\hat{X};X)\le r}\mathbb{E}\left[d(X,\hat{X})\right]\ge \sup_{-b<\lambda<0}\left\lbrace\frac{1}{\lambda}\left[r+D(\zeta_X\|\eta_X)\right]+\frac{1}{\lambda}\phi(\lambda)\right\rbrace.
    \end{align}
\end{theorem}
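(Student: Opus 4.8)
The plan is to obtain the bound from two applications of the Donsker--Varadhan inequality (Lemma \ref{lemma1}): one "in the $\hat X$--variable", which converts the mutual-information constraint into an exponential-moment statement involving the true marginal $P_{\hat X}$, and one "in the $X$--variable", which trades the expectation under $\zeta$ for an expectation under $\eta$ at the cost of the divergence $D(\zeta_X\|\eta_X)$ and brings in the quantity $\phi(\lambda)$.

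Fix $\lambda\in(-b,0)$ and a feasible kernel $P_{\hat X|X}$ with $I(\hat X;X)\le r$, and let $P_{\hat X}$ be its induced marginal (we may assume the relevant expectations are well defined, otherwise the bound is trivial). Applying Lemma \ref{lemma1} with $P=P_{\hat X}$, $Q=P_{\hat X|X=x}$ (legitimate since $P_{\hat X|X=x}\ll P_{\hat X}$ for $\zeta$-a.e.\ $x$) and test function $\hat x\mapsto\lambda d(x,\hat x)$ gives, for $\zeta$-a.e.\ $x$,
\[
D\bigl(P_{\hat X|X=x}\,\big\|\,P_{\hat X}\bigr)\ge \lambda\,\mathbb{E}_{P_{\hat X|X=x}}[d(x,\hat X)]-\log\mathbb{E}_{\hat X\sim P_{\hat X}}\bigl[e^{\lambda d(x,\hat X)}\bigr].
\]
Integrating over $X\sim\zeta$ and using $I(X;\hat X)=\mathbb{E}_X D(P_{\hat X|X}\|P_{\hat X})\le r$ yields
\[
r\ge \lambda\,\mathbb{E}[d(X,\hat X)]-\mathbb{E}_{X\sim\zeta}\Bigl[\log\mathbb{E}_{\hat X\sim P_{\hat X}}\bigl[e^{\lambda d(X,\hat X)}\bigr]\Bigr].
\]

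Next I bound the last expectation by moving from $\zeta$ to $\eta$. If $D(\zeta_X\|\eta_X)=\infty$ (or $\phi(\lambda)=\infty$) the claimed inequality is vacuous, so assume $\zeta\ll\eta$ and apply Lemma \ref{lemma1} once more with $P=\eta$, $Q=\zeta$, and test function $x\mapsto\log\mathbb{E}_{\hat X\sim P_{\hat X}}[e^{\lambda d(x,\hat X)}]$, obtaining
\[
\mathbb{E}_{X\sim\zeta}\Bigl[\log\mathbb{E}_{\hat X\sim P_{\hat X}}[e^{\lambda d(X,\hat X)}]\Bigr]\le D(\zeta_X\|\eta_X)+\log\mathbb{E}_{X\sim\eta}\,\mathbb{E}_{\hat X\sim P_{\hat X}}\bigl[e^{\lambda d(X,\hat X)}\bigr].
\]
Since the integrand is nonnegative, Tonelli lets me swap the two outer expectations, and for every fixed $\hat x$ one has $\log\mathbb{E}_{X\sim\eta}[e^{\lambda d(X,\hat x)}]\le\phi(\lambda)$ by definition of $\phi$; hence $\mathbb{E}_{X\sim\eta}\mathbb{E}_{\hat X\sim P_{\hat X}}[e^{\lambda d(X,\hat X)}]=\mathbb{E}_{\hat X\sim P_{\hat X}}\bigl[e^{\log\mathbb{E}_{X\sim\eta}[e^{\lambda d(X,\hat X)}]}\bigr]\le e^{\phi(\lambda)}$, so the right-hand side above is at most $D(\zeta_X\|\eta_X)+\phi(\lambda)$. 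Combining with the previous display gives $r\ge\lambda\,\mathbb{E}[d(X,\hat X)]-D(\zeta_X\|\eta_X)-\phi(\lambda)$; dividing by $\lambda<0$ reverses the inequality, so $\mathbb{E}[d(X,\hat X)]\ge\frac1\lambda\bigl[r+D(\zeta_X\|\eta_X)\bigr]+\frac1\lambda\phi(\lambda)$, and taking the infimum over feasible kernels on the left and the supremum over $\lambda\in(-b,0)$ on the right finishes the proof.

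The computations are routine; the only step needing care is the "swap" in the second application: the definition of $\phi$ puts the $\eta$-expectation \emph{inside} and $\sup_{\hat x}$ \emph{outside}, whereas the term arising naturally has the $P_{\hat X}$-expectation inside and the $\zeta$-expectation outside, and it is precisely the change of measure $\zeta\to\eta$ (producing $D(\zeta_X\|\eta_X)$) followed by Tonelli and the pointwise bound by $\phi(\lambda)$ that reconciles the two orders. One should also keep track of the trivial reductions (vacuity when $D(\zeta_X\|\eta_X)=\infty$ or $\phi(\lambda)=\infty$, exclusion of $\lambda=0$ so that $1/\lambda$ is finite) and the standard measurability bookkeeping for the conditional divergence identity and the composed test function.
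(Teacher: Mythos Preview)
Your proof is correct. It is essentially the same Donsker--Varadhan argument as the paper's, but packaged differently: the paper applies Lemma~\ref{lemma1} \emph{once} on the joint space $(X,\hat X)$, taking the reference measure $q=\eta_X\otimes\zeta_{\hat X}$ and then using the chain-rule identity $D(\zeta_{X,\hat X}\|\eta_X\otimes\zeta_{\hat X})=I(X;\hat X)+D(\zeta_X\|\eta_X)$, whereas you perform that decomposition by hand via two separate applications (first on $\hat X$ given $X$ with reference $P_{\hat X}$, then on $X$ with reference $\eta$). The two routes are equivalent because the reference is a product measure; the paper's version is a line shorter, while yours makes the role of each term (mutual information vs.\ marginal mismatch) slightly more explicit and avoids invoking the chain-rule identity.
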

Proof of Theorem \ref{th3} is in Section \ref{appenB}.

We apply the above theorem to obtain an upper bound on the bound given in 
Theorem \ref{Thm2} as follows: let $X=Z'\sim\mu'$, $\hat{X}=\hat{W}$ and $d(z',\hat w)=-	\left[L_{\mu}(\hat w)-\ell(\hat w,z')\right]$.

\begin{corollary}
	\label{cor1} 
	Suppose that $\ell(w,Z)$ is $\sigma^{2}$-subgaussian  for every $w\in\mathcal{W}$ under the distribution $\mu$ on $Z$. Take an arbitrary algorithm $\A$ that runs on a training dataset $S'$. Then when $I\big(S';\A(S')\big)\le r$ and $D(\mu'\|\mu)\leq \gamma$ for some $r,\gamma\geq 0$, then
	\begin{align}\label{rd&Mi}
	 \D_2(r)
		\le	\sqrt{2\sigma^{2}\gamma+\frac{2\sigma^{2}}{n}r}.
	\end{align}
\end{corollary}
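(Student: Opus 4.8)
The plan is to apply Theorem \ref{th3} with the identifications $X = Z' \sim \mu'$ (so $\zeta = \mu'$), $\hat X = \hat W$, and distortion $d(z', \hat w) = -[L_\mu(\hat w) - \ell(\hat w, z')] = \ell(\hat w, z') - L_\mu(\hat w)$. With this choice, the left-hand side of Theorem \ref{th3}, namely $\inf_{P_{\hat W|Z'}: I(\hat W; Z') \le r} \mathbb{E}[d(Z', \hat W)]$, equals $-\D_2(r)$ by the definition \eqref{eqnRD22}. So Theorem \ref{th3} will give a lower bound on $-\D_2(r)$, equivalently an upper bound on $\D_2(r)$. The key freedom in Theorem \ref{th3} is the auxiliary distribution $\eta$; I would take $\eta = \mu$ (the \emph{test} distribution), so that $D(\zeta_X \| \eta_X) = D(\mu' \| \mu) \le \gamma$ and, crucially, the subgaussianity hypothesis — which is stated with respect to $\mu$ — becomes usable in bounding $\phi(\lambda)$.

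The second step is to compute (or upper bound) $\phi(\lambda) = \sup_{\hat w} \log \mathbb{E}_{Z \sim \mu}[e^{\lambda d(Z, \hat w)}]$ for $\lambda < 0$. Here $d(Z, \hat w) = \ell(\hat w, Z) - L_\mu(\hat w) = \ell(\hat w, Z) - \mathbb{E}_{Z \sim \mu}[\ell(\hat w, Z)]$, which is exactly a centered version of $\ell(\hat w, \cdot)$ under $\mu$. By the $\sigma^2$-subgaussianity of $\ell(w, Z)$ under $\mu$ (Definition \ref{def2}), for every fixed $\hat w$ and every real $s$ we have $\mathbb{E}_{\mu}[e^{s(\ell(\hat w, Z) - \mathbb{E}_\mu \ell(\hat w, Z))}] \le e^{\sigma^2 s^2/2}$; taking $s = \lambda$ gives $\log \mathbb{E}_\mu[e^{\lambda d(Z, \hat w)}] \le \sigma^2 \lambda^2 / 2$ uniformly in $\hat w$, hence $\phi(\lambda) \le \sigma^2 \lambda^2 / 2$. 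Note this bound holds for all $\lambda \in \mathbb{R}$, so we may take $b = \infty$.

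The third step is to substitute into the conclusion of Theorem \ref{th3}. We get
\[
-\D_2(r) \ge \sup_{\lambda < 0}\left\{\frac{1}{\lambda}\big[r + \gamma\big] + \frac{1}{\lambda} \cdot \frac{\sigma^2 \lambda^2}{2}\right\} = \sup_{\lambda < 0}\left\{\frac{r + \gamma}{\lambda} + \frac{\sigma^2 \lambda}{2}\right\}.
\]
Wait — I must be careful with the direction of the inequality: since $\lambda < 0$, replacing $\phi(\lambda)$ by the \emph{upper} bound $\sigma^2\lambda^2/2$ makes $\frac{1}{\lambda}\phi(\lambda)$ \emph{smaller} (more negative), which weakens the lower bound in the correct direction, so the substitution is legitimate and yields a valid lower bound on $-\D_2(r)$. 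Also the term $\frac{1}{\lambda}[r + D(\mu'\|\mu)] \ge \frac{1}{\lambda}[r + \gamma]$ for $\lambda < 0$ since $D(\mu'\|\mu) \le \gamma$, again weakening in the right direction. Now optimize the scalar function $g(\lambda) = \frac{r+\gamma}{\lambda} + \frac{\sigma^2 \lambda}{2}$ over $\lambda < 0$: writing $\lambda = -t$ with $t > 0$, $g = -\left(\frac{r+\gamma}{t} + \frac{\sigma^2 t}{2}\right)$, whose supremum over $t>0$ is $-\sqrt{2\sigma^2(r+\gamma)}$ attained at $t = \sqrt{2(r+\gamma)/\sigma^2}$ by AM–GM. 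Therefore $-\D_2(r) \ge -\sqrt{2\sigma^2(r+\gamma)} = -\sqrt{2\sigma^2\gamma + 2\sigma^2 r}$, i.e. $\D_2(r) \le \sqrt{2\sigma^2\gamma + 2\sigma^2 r}$.

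The last step is to reconcile this with the claimed bound $\sqrt{2\sigma^2\gamma + \frac{2\sigma^2}{n} r}$, which has $r/n$ rather than $r$. This is where the statement of Theorem \ref{Thm2} enters: there, $\D_2$ is evaluated at the argument $I(S';\A(S'))/n$, and the corollary's hypothesis is $I(S';\A(S')) \le r$; so one applies the displayed inequality $\D_2(r') \le \sqrt{2\sigma^2\gamma + 2\sigma^2 r'}$ with $r' = r/n$ (using monotonicity of $\D_2$, or simply reading the corollary as a bound on $\D_2(r/n)$), giving the stated $\sqrt{2\sigma^2\gamma + \frac{2\sigma^2}{n} r}$. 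The main obstacle is purely bookkeeping: getting every inequality to point the right way after dividing by the negative $\lambda$, and correctly tracking whether the free parameter in the final bound is $r$ or $r/n$ — the mathematical content is just subgaussianity plugged into Theorem \ref{th3} followed by a one-variable AM–GM optimization. I would also double-check that the supremum over $\hat w$ defining $\phi$ is finite (it is, by the uniform subgaussian bound) so that Theorem \ref{th3} applies with $b = \infty$.
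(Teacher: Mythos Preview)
Your proposal is correct and follows exactly the approach the paper indicates: apply Theorem~\ref{th3} with $X=Z'\sim\mu'$, $\hat X=\hat W$, $d(z',\hat w)=\ell(\hat w,z')-L_\mu(\hat w)$, and auxiliary measure $\eta=\mu$, then use the $\sigma^2$-subgaussianity under $\mu$ to bound $\phi(\lambda)\le\sigma^2\lambda^2/2$ and optimize over $\lambda<0$. Your careful tracking of the inequality directions when dividing by $\lambda<0$ is correct, and your final remark about the $r$ versus $r/n$ bookkeeping is exactly right---the paper's statement is really a bound on $\D_2(r/n)$ (equivalently, $\D_2(r')\le\sqrt{2\sigma^2(\gamma+r')}$ applied at $r'=r/n$), as confirmed by its later use in the proof of Theorem~\ref{Thm2b2}.
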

\begin{remark}
Under the assumptions of Corollary \ref{cor1}, we deduce that 
\begin{align}
	\mathrm{gen}\left(\mu,\mu',\A\right)
		&\le
	\sqrt{2\sigma^{2}\gamma+\frac{2\sigma^{2}}{n}r}.
	\end{align}
	This  generalizes the bound in  Theorem \ref{th1} to the case of having mismatch.
\end{remark}

\begin{figure}
	   \centering
	\includegraphics[scale=1,width=0.8\linewidth]{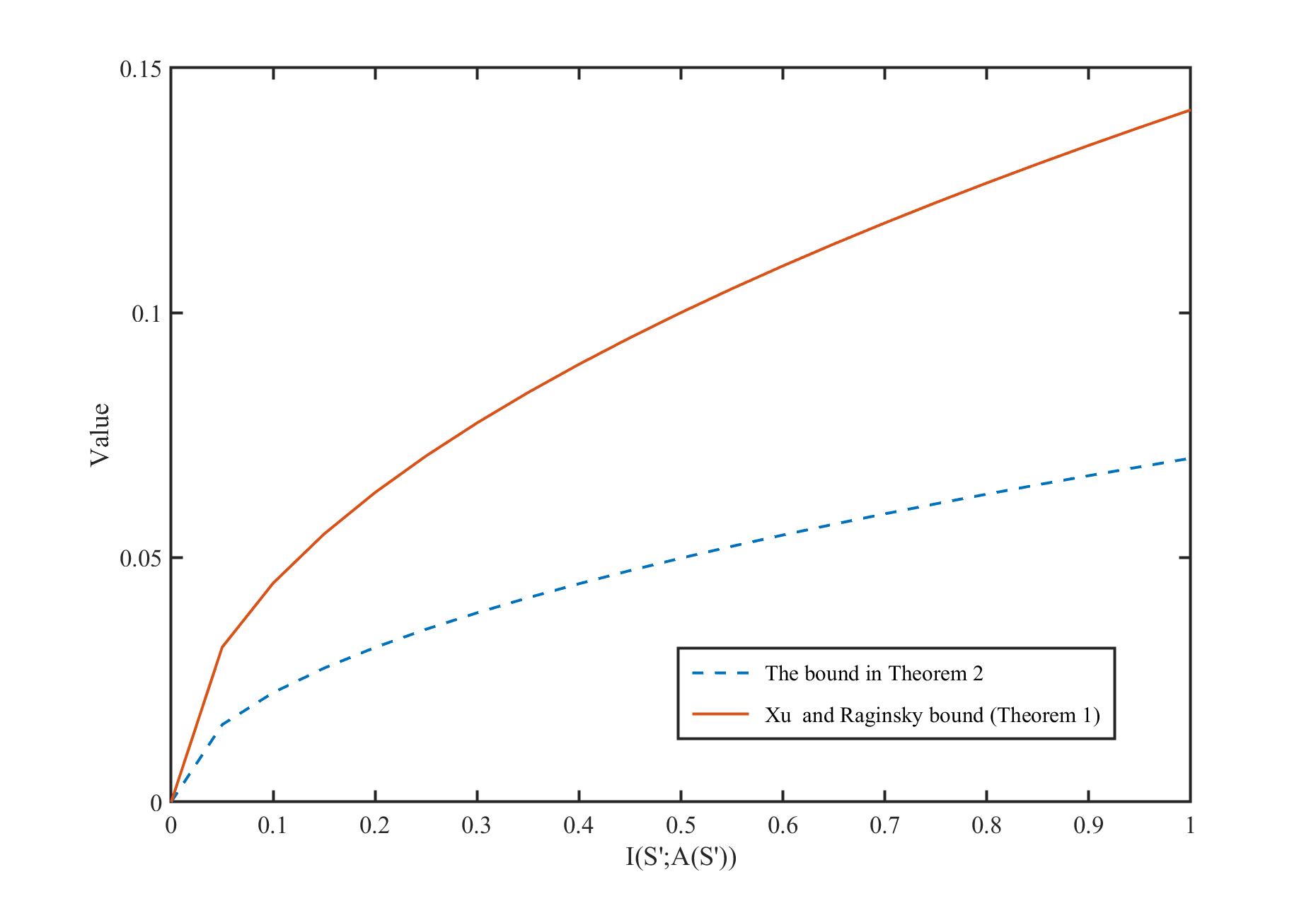}
	\caption{The bound in Theorem \ref{th1} versus the maximum of the upper bound in Theorem \ref{Thm2} over all distributions $\mu$, assuming no distribution mismatch, $\mathcal{W}=\mathcal{Z}=\{0,1\}$ and  $\ell(w,z)=w\cdot z$.}
	\label{fig:screenshot001}
\end{figure}
\begin{figure}
    \centering
    \includegraphics[scale=1,width=0.8\linewidth]{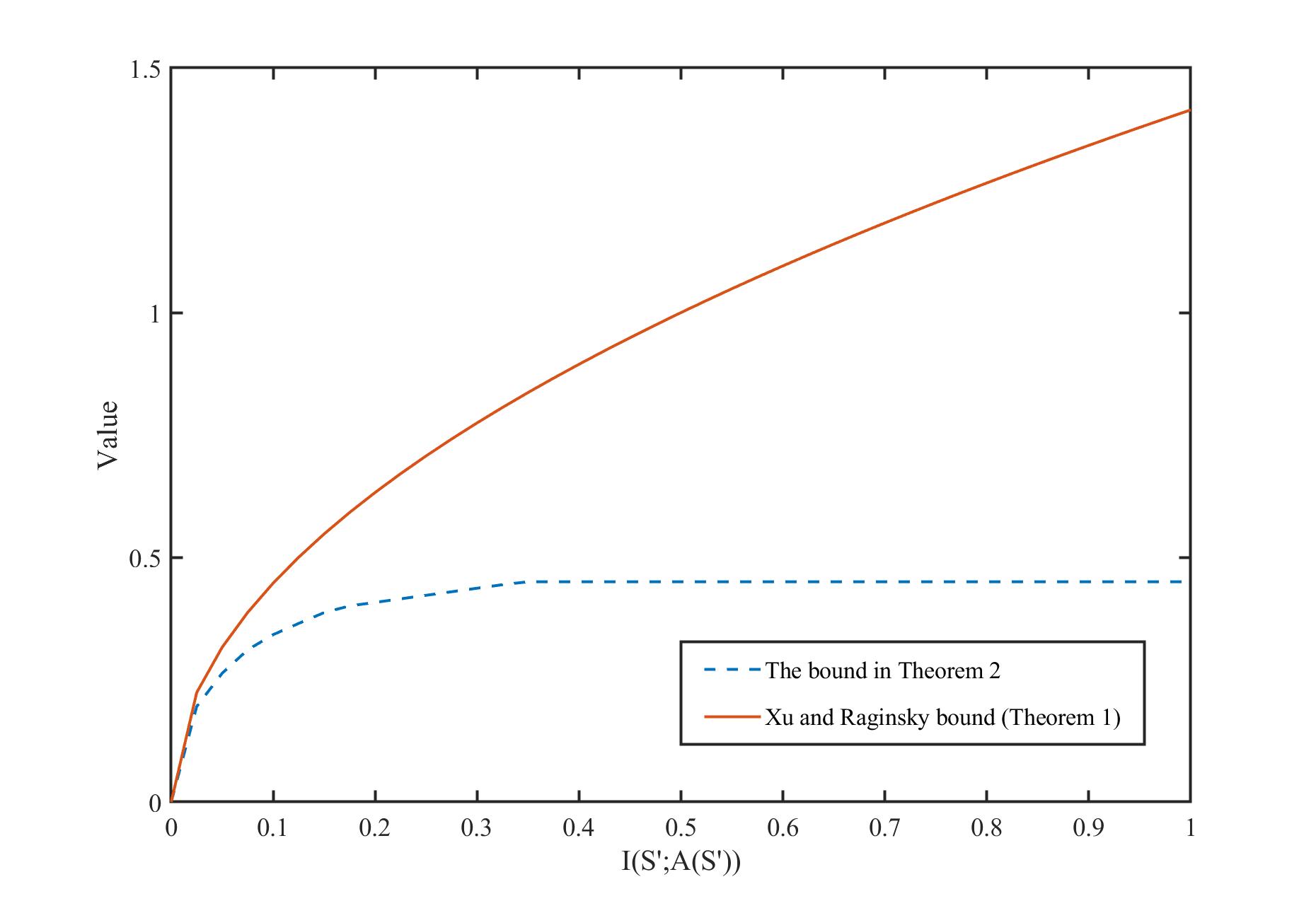}
    \caption{The bound in Theorem \ref{th1} versus the maximum of the upper bound in Theorem \ref{Thm2} over all distributions $\mu$, assuming no distribution mismatch, $\mathcal{W}=[0,1],\,\mathcal{Z}=\{0,1\}$ and  $\ell(w,z)=|w-z|$.}
    \label{fig:screenshot002}
\end{figure}
\begin{example}\label{ex1}
Let $\mathcal{W}=\mathcal{Z}=\lbrace0,1\rbrace$ and consider a learning problem on a data set $S'$ with the size $n=1$ with loss function $\ell(w,z)=w\cdot z$. Figure \ref{fig:screenshot001} depicts the bound  in Theorem \ref{th1} versus the maximum of the bound in Theorem \ref{Thm2}  over all distributions $\mu$ on $\lbrace0,1\rbrace$ for the case of no-mismatch for a particular loss function. Note that the distortion function itself depends on the choice of $\mu$ and this makes it difficult to find a closed form expression for the maximum of the bound in Theorem \ref{Thm2} over all distributions $\mu$. 
\end{example}
\begin{example}\label{ex2}
Let $\mathcal{W}=[0,1],\,\mathcal{Z}=\lbrace0,1\rbrace$ and consider a learning problem on a data set $S'$ with the size $n=1$ with loss function $\ell(w,z)=|w-z|$. Figure \ref{fig:screenshot002} depicts the bound  in Theorem \ref{th1} versus the maximum of the bound in Theorem \ref{Thm2}  over all distributions $\mu$ on $\lbrace0,1\rbrace$ for the case of no-mismatch for a particular loss function.
\end{example}

\subsubsection{An improved upper bound} In \cite{bu2020tightening}, a strengthened version of Theorem \ref{th1} is given as follows:
\begin{theorem}[\cite{bu2020tightening}]\label{thm5}
    Suppose that the  loss function $\ell(w,Z)$ is $\sigma^{2}$-subgaussian under the distribution $\mu$ on $Z$ for any $w\in \mathcal{W}$. For $\mu'=\mu$, we have:
    \begin{align}
    \mathrm{gen}(\mu,\mu',\A)\le\frac{1}{n}\sum_{i=1}^{n}\sqrt{{2\sigma^{2}}I\big(Z_{i}';\A(S')\big)}.
    \end{align}
\end{theorem}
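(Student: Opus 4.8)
The plan is to run the Donsker--Varadhan / exponential-moment argument of Xu--Raginsky \emph{per sample} rather than on the whole dataset $S'$ at once, and then average. Fix an index $i\in\{1,\dots,n\}$. Let $\bar S'=(Z_1',\dots,Z_n')$ be the training set with its true product law $P_{S'}=(\mu')^{\otimes n}=\mu^{\otimes n}$ (using $\mu'=\mu$), and let $\widetilde S'$ denote an independent copy of $S'$ with the \emph{same} marginals; equivalently, introduce $\widetilde Z_i'\sim\mu$ independent of $(S',W')$ while keeping the other coordinates fixed. The key observation is the decoupling identity
\[
\mathrm{gen}(\mu,\mu,\A)=\mathbb{E}\!\left[L_\mu(W')-L_{S'}(W')\right]
=\frac1n\sum_{i=1}^n\Big(\mathbb{E}[\ell(W',\widetilde Z_i')]-\mathbb{E}[\ell(W',Z_i')]\Big),
\]
because $W'$ is independent of $\widetilde Z_i'$ so $\mathbb{E}[\ell(W',\widetilde Z_i')]=\mathbb{E}_{W'}[L_\mu(W')]$, while $\mathbb{E}[\ell(W',Z_i')]$ is the $i$-th term of $\mathbb{E}[L_{S'}(W')]$. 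So it suffices to bound each term $\mathbb{E}[\ell(W',\widetilde Z_i')]-\mathbb{E}[\ell(W',Z_i')]$ by $\sqrt{2\sigma^2 I(Z_i';\A(S'))}=\sqrt{2\sigma^2 I(Z_i';W')}$.

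For the $i$-th term I would apply Lemma~\ref{lemma1} (Donsker--Varadhan). Work on the pair $(W',Z_i')$: its joint law is $P_{W',Z_i'}$ and the product of its marginals is $P_{W'}\otimes P_{Z_i'}=P_{W'}\otimes\mu$, and the latter is exactly the law of $(W',\widetilde Z_i')$. Apply Donsker--Varadhan with $P=P_{W'}\otimes\mu$, $Q=P_{W',Z_i'}$, and $\phi(w,z)=\lambda\big(\ell(w,z)-L_\mu(w)\big)$ for $\lambda\in\mathbb{R}$:
\[
\lambda\,\mathbb{E}_{P_{W',Z_i'}}\!\big[\ell(W',Z_i')-L_\mu(W')\big]\ \le\ D\big(P_{W',Z_i'}\,\|\,P_{W'}\otimes\mu\big)
+\log\mathbb{E}_{P_{W'}\otimes\mu}\Big[e^{\lambda(\ell(W',\widetilde Z_i')-L_\mu(W'))}\Big].
\]
The divergence term is precisely $I(W';Z_i')$. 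For the log-moment-generating term, condition on $W'=w$: since $\widetilde Z_i'\sim\mu$ is independent of $w$ and $\ell(w,Z)$ is $\sigma^2$-subgaussian under $Z\sim\mu$ with mean $L_\mu(w)$, Definition~\ref{def2} gives $\mathbb{E}_\mu[e^{\lambda(\ell(w,\widetilde Z_i')-L_\mu(w))}]\le e^{\sigma^2\lambda^2/2}$, and averaging over $w$ preserves the bound. Hence for every $\lambda\in\mathbb{R}$,
\[
\lambda\,\Big(\mathbb{E}[\ell(W',Z_i')]-\mathbb{E}_{W'}[L_\mu(W')]\Big)\ \le\ I(W';Z_i')+\tfrac{\sigma^2\lambda^2}{2}.
\]
Taking $\lambda<0$, dividing by $|\lambda|$, and optimizing (the standard choice $\lambda=-\sqrt{2I(W';Z_i')}/\sigma$) yields
\[
\mathbb{E}_{W'}[L_\mu(W')]-\mathbb{E}[\ell(W',Z_i')]\ \le\ \sqrt{2\sigma^2\,I(Z_i';W')}.
\]
Summing over $i$, dividing by $n$, and using the decoupling identity gives the claim.

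The routine parts (the Donsker--Varadhan manipulation, the subgaussian moment bound, the quadratic optimization in $\lambda$) are exactly as in the proof of Theorem~\ref{th1}; the only genuinely new ingredient is recognizing that $\mathbb{E}[L_{S'}(W')]$ splits coordinatewise and that the $i$-th coordinate pairs naturally with the single-letter mutual information $I(Z_i';W')$ rather than $I(S';W')$. The main thing to be careful about is the measurability/integrability bookkeeping: one needs $\ell(W',\widetilde Z_i')$ and $\ell(W',Z_i')$ to be integrable (guaranteed by subgaussianity) and needs the absolutely-continuity hypothesis of Lemma~\ref{lemma1}, i.e.\ $P_{W',Z_i'}\ll P_{W'}\otimes\mu$, which holds automatically since $P_{Z_i'}=\mu$. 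Finally, since $I(Z_i';W')\le I(S';W')$, the bound in Theorem~\ref{thm5} is indeed never worse than that of Theorem~\ref{th1}, by concavity of $\sqrt{\cdot}$ (Jensen), which can be noted as a remark.
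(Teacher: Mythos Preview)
Your proof is correct and is essentially the standard argument from \cite{bu2020tightening}. Note that the paper does not supply its own proof of Theorem~\ref{thm5}; it quotes the result and instead proves the mismatch generalization in Theorem~\ref{Thm2b2}. There the paper takes a slightly different route: it uses the same per-sample decomposition you wrote, but bounds each summand by the rate-distortion quantity $\D_2(I(Z_i';W'))$ and then invokes Corollary~\ref{cor1} (which in turn rests on the Donsker--Varadhan computation of Theorem~\ref{th3}) to obtain the $\sqrt{2\sigma^2[\cdot]}$ form. Your argument simply collapses that detour and applies Donsker--Varadhan directly to each pair $(W',Z_i')$, which is exactly what the underlying proof of Corollary~\ref{cor1}/Theorem~\ref{th3} does when $\gamma=0$. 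So the two approaches are the same in substance; the paper's version just packages the per-sample step as ``$\leq \D_2(I(Z_i';W'))$'' to emphasize the rate-distortion connection.

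One small remark on your closing comment: to conclude that Theorem~\ref{thm5} dominates Theorem~\ref{th1} you need the \emph{sum} inequality $\sum_i I(Z_i';W')\le I(S';W')$ (valid since the $Z_i'$ are independent; cf.\ \eqref{eqnRR12}) together with Jensen for $\sqrt{\cdot}$, not merely $I(Z_i';W')\le I(S';W')$ for each $i$.
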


The following variant of Theorem \ref{thm5} holds for the case with distribution mismatch:
\begin{theorem}
    \label{Thm2b2}
    For any arbitrary loss function $\ell(w,z)$, and algorithm $\A$ that runs on a training dataset $S'=(Z'_1, Z'_2, \cdots, Z'_n)$ of size $n$, we have
\[\mathrm{gen}\left(\mu,\mu',\A\right)\le \frac1n\sum_{i=1}^n \D_2\left(I(Z'_i;\A(S'))\right)
\]
where $\D_2(r)$ is given in \eqref{eqnRD22}. 
Moreover, if the loss function $\ell(w,Z)$ under $\mu$ is $\sigma^{2}$-subgaussian  for all $w\in \mathcal{W}$, we further have
\[\mathrm{gen}\left(\mu,\mu',\A\right)\le \frac1n\sum_{i=1}^n \D_2\left(I(Z'_i;\A(S'))\right)\le \frac{1}{n}\sum_{i=1}^{n}\sqrt{{2\sigma^{2}}\left[I\big(Z_{i}';\A(S')\big)+D(\mu'\|\mu)\right]}.
\]
\end{theorem}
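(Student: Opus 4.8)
\textbf{Proof proposal for Theorem \ref{Thm2b2}.}

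The plan is to mimic the proof of Theorem \ref{Thm2} but carried out coordinate by coordinate, and then to invoke Theorem \ref{th3} (via the same specialization used to prove Corollary \ref{cor1}) on each term of the resulting sum. First I would write the generalization error as an average of per-sample gaps,
\[
\mathrm{gen}(\mu,\mu',\A)=\mathbb{E}\big[L_\mu(W')-L_{S'}(W')\big]=\frac1n\sum_{i=1}^n\mathbb{E}\big[L_\mu(W')-\ell(W',Z'_i)\big].
\]
For each fixed $i$, the pair $(Z'_i,W')$ has some joint law with $Z'_i\sim\mu'$ and $I(Z'_i;W')\le I(Z'_i;\A(S'))$; since the quantity $\mathbb{E}[L_\mu(W')-\ell(W',Z'_i)]$ depends only on this joint law (note $W'$ and $Z'_i$ enter only through the pair), it is upper bounded by the supremum over all kernels $P_{\hat W|Z'}$ with $I(\hat W;Z')\le I(Z'_i;\A(S'))$ of $\mathbb{E}[L_\mu(\hat W)-\ell(\hat W,Z')]$, which is exactly $\D_2(I(Z'_i;\A(S')))$. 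Summing over $i$ and dividing by $n$ gives the first inequality. The cardinality bound on the support of $\hat W$ carries over verbatim from Theorem \ref{Thm2} since each $\D_2$ is the same optimization problem.

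For the second (subgaussian) inequality, I would bound $\D_2(r)$ itself. This is precisely what Corollary \ref{cor1} does: applying Theorem \ref{th3} with $X=Z'\sim\mu'=\zeta$, $\hat X=\hat W$, $\eta=\mu$, and $d(z',\hat w)=-[L_\mu(\hat w)-\ell(\hat w,z')]$, one uses $\sigma^2$-subgaussianity of $\ell(w,Z)$ under $Z\sim\mu$ to get $\phi(\lambda)\le \sup_{\hat w}(\lambda^2\sigma^2/2 - \lambda\,\mathbb{E}_\mu[\ell(\hat w,Z)] + \lambda\,\mathbb{E}_\mu[\ell(\hat w,Z)]) = \lambda^2\sigma^2/2$ — wait, more carefully: $d(z',\hat w)=\ell(\hat w,z')-L_\mu(\hat w)$ has $\mathbb{E}_\mu[d(Z,\hat w)]=0$ for each $\hat w$, and subgaussianity gives $\log\mathbb{E}_\mu[e^{\lambda d(Z,\hat w)}]\le \lambda^2\sigma^2/2$, hence $\phi(\lambda)\le\lambda^2\sigma^2/2$ on all of $\mathbb{R}$ (so $b=\infty$). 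Then Theorem \ref{th3} yields, with $r$ replaced by $I(Z'_i;\A(S'))$ and $D(\zeta_X\|\eta_X)=D(\mu'\|\mu)$,
\[
-\D_2(I(Z'_i;\A(S')))\ \ge\ \sup_{\lambda<0}\Big\{\tfrac1\lambda\big[I(Z'_i;\A(S'))+D(\mu'\|\mu)\big]+\tfrac{\lambda\sigma^2}{2}\Big\},
\]
and optimizing the quadratic in $1/\lambda$ (choosing $\lambda=-\sqrt{2(I(Z'_i;\A(S'))+D(\mu'\|\mu))}/\sigma$) produces $\D_2(I(Z'_i;\A(S')))\le\sqrt{2\sigma^2[I(Z'_i;\A(S'))+D(\mu'\|\mu)]}$. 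Summing and dividing by $n$ gives the claimed chain.

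The main technical point to get right — and the only place where care is genuinely needed — is the reduction step in the first paragraph: that $\mathbb{E}[L_\mu(W')-\ell(W',Z'_i)]$ is a functional only of the joint law of $(Z'_i,W')$, so that enlarging the feasible set from "kernels $\A$ on $S'$ with $I(Z'_i;\A(S'))\le r_i$" to "all kernels $P_{\hat W|Z'}$ with $I(\hat W;Z')\le r_i$" can only increase the value, hence $\D_2(r_i)$ is a valid per-coordinate bound. This is immediate once one observes $L_\mu(W')=\mathbb{E}_{Z\sim\mu}[\ell(W',Z)]$ with $Z$ fresh, so both terms are expectations of functions of $(W',Z'_i)$ alone; I do not anticipate a real obstacle here, only the need to state it cleanly. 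Everything else is either a direct quotation of Theorem \ref{th3} or the routine one-variable optimization already carried out in Corollary \ref{cor1}.
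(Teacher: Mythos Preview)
Your proposal is correct and matches the paper's approach essentially line for line: decompose the generalization error as an average of per-coordinate gaps, bound each by $\D_2(I(Z'_i;W'))$ directly from the definition of $\D_2$, and then invoke Corollary~\ref{cor1} (i.e., the specialization of Theorem~\ref{th3} you spell out) term by term for the subgaussian inequality. The extra care you take in justifying that $\mathbb{E}[L_\mu(W')-\ell(W',Z'_i)]$ depends only on the joint law of $(W',Z'_i)$ is exactly the point the paper leaves implicit when it says ``follows from the definition of $\D_2$''; note also that since $W'=\A(S')$ there is no gap between $I(Z'_i;W')$ and $I(Z'_i;\A(S'))$, and the cardinality remark is unnecessary here since the statement of this theorem does not claim it.
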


Proof of Theorem \ref{Thm2b2} is in Section \ref{appenThm2b2}.


\subsection{Lower bound on the generalization error}
Next, we consider \emph{lower bounds} on the generalization error. Similar to \eqref{eqnRD1}, the following lower bound on the generalization error given an upper bound $r$ on $I(S';\A(S'))$ can be written:
\begin{align} \inf_{P_{W'|S'}:~I(W';S')\leq r}
\mathbb{E}\left[L_{\mu}(W')-L_{S'}(W')\right]\label{eqnRD1lower}
\end{align}
where the infimum in \eqref{eqnRD1lower} is over all Markov kernels $P_{W'|S'}$ with a bounded input/output mutual information and $S'\sim (\mu')^{\otimes n}$. However, the bound in this form may not be useful. To see this, assume that $\mu=\mu'$. One possible choice for $W'$ in \eqref{eqnRD1lower} is a constant random variable. For this choice, $I(W';S')=0\leq r$ and the bound in \eqref{eqnRD1lower} vanishes. It follows that $\D_3(r)\leq 0$. However, we are interested in a lower bound on the generalization error  in terms of the population risk. In order to prevent $W'$ from being a constant random variable, we attempt to find a lower bound on the generalization error in terms of both $I(S';\A(S'))$ and an assumption about the marginal distribution of the output of the algorithm $\A(S')$. In particular, we assume that $I(S';\A(S'))\leq r$ and $\A(S')\sim p_{W'}\in \mathcal{M}$ for a family $\mathcal{M}$ of distributions on $\mathcal{W}$. 

We aim to find a lower bound on $\mathrm{gen}\left(\mu,\mu',\A\right)$ that depends on both $r$ and $\mathcal M$. The sharpest such bound is 
\begin{align}
\D_3\big(r,\mathcal{M}\big)= \inf_{P_{W'}\in\mathcal{M}}~~\inf_{\substack{P_{W',S'}\in U(P_{W'},P_{S'}):\\I(W';S')\leq r}}
\mathbb{E}\left[L_{\mu}(W')-L_{S'}(W')\right]\label{eqnRD1lower2}
\end{align}
where $U(P_{W'},P_{S'})$ is the set of all couplings of two marginal distribution $P_{W'}$ and $P_{S'}$ on $\mathcal{W}\times \mathcal{S}$. 

 When $r=0$, the set $U(P_{W'},P_{S'})$ includes only the product distribution $P_{W'}P_{S'}$ and $\D_3\big(0,\mathcal{M}\big)$ can be computed explicitly. 
 The following theorem gives an explicit lower bound on the generalization error when $r>0$:
 \begin{theorem}\label{thm4}Let $\psi(\lambda)$ be a function satisfying
 \begin{align*}
     \psi(\lambda)\geq \sup_{\nu\in \mathcal{M}}\mathbb{E}_{\nu}\left[e^{\lambda\left[\ell(W,z)-\mathbb{E}_{\nu}[\ell(W,z)]\right]}\right], \quad\forall z\in \mathcal{Z}.
\end{align*}
Then, we have:
     \[\D_3\big(0,\mathcal{M}\big)\geq \D_3\big(r,\mathcal{M}\big)
    \geq
	\D_3\big(0,\mathcal{M}\big)-\inf_{\lambda\ge0}\left[\lambda r-\lambda(\psi({1}/{n\lambda})^{n}-1)\right].
	\]
 \end{theorem}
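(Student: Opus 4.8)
## Proof proposal for Theorem \ref{thm4}

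The plan is to establish the two inequalities separately. The left inequality $\D_3(0,\Mc) \geq \D_3(r,\Mc)$ is almost immediate: when $r$ grows, the inner infimum in \eqref{eqnRD1lower2} is taken over a larger set of couplings (the constraint $I(W';S') \leq r$ becomes weaker), so the infimum can only decrease; moreover the product coupling $P_{W'}P_{S'}$ always satisfies $I(W';S') = 0 \leq r$, so $\D_3(r,\Mc)$ is bounded above by the value of the objective at the product coupling, which (after the outer infimum over $P_{W'} \in \Mc$) is exactly $\D_3(0,\Mc)$. This disposes of the easy direction.

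For the substantive lower bound, I would fix an arbitrary $P_{W'} \in \Mc$ and an arbitrary coupling $P_{W',S'}$ with $I(W';S') \leq r$, and try to lower bound $\mathbb{E}[L_\mu(W') - L_{S'}(W')]$ in terms of $\D_3(0,\Mc)$ minus a penalty. Write $S' = (Z'_1,\dots,Z'_n)$, so $L_{S'}(W') = \frac1n\sum_i \ell(W',Z'_i)$, and note that $\mathbb{E}_{P_{W'}P_{S'}}[L_\mu(W') - L_{S'}(W')] \geq \D_3(0,\Mc)$ by definition of the product-coupling value. Hence it suffices to control the difference
\[
\Big| \mathbb{E}_{P_{W',S'}}\big[L_{S'}(W')\big] - \mathbb{E}_{P_{W'}P_{S'}}\big[L_{S'}(W')\big] \Big|
\]
by the change-of-measure / Donsker–Varadhan machinery (Lemma \ref{lemma1}). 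The standard approach: for $\lambda > 0$ apply Donsker–Varadhan with $P = P_{W'}P_{S'}$, $Q = P_{W',S'}$ and the function $\phi = \lambda\big(\mathbb{E}_{P_{W'}P_{S'}}[L_{S'}(W')] - L_{S'}(W')\big)$, giving
\[
\lambda\Big(\mathbb{E}_{P_{W'}P_{S'}}[L_{S'}] - \mathbb{E}_{P_{W',S'}}[L_{S'}]\Big) \leq D\big(P_{W',S'}\,\|\,P_{W'}P_{S'}\big) + \log \mathbb{E}_{P_{W'}P_{S'}}\big[e^{\lambda(\mathbb{E}[L_{S'}] - L_{S'})}\big],
\]
and the relative entropy term is exactly $I(W';S') \leq r$. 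The remaining task is to bound the log-MGF term: since under $P_{W'}P_{S'}$ the $Z'_i$ are i.i.d. and independent of $W'$, conditioning on $W'=w$ and using independence across $i$ factorizes the MGF of $-\frac1n\sum_i \ell(w,Z'_i)$ into a product of $n$ single-sample MGFs, each of which is controlled by the hypothesis $\psi(\lambda/n)$ with the centering chosen at $\mathbb{E}_\nu[\ell(W,z)]$ — here I will need to be careful about which expectation is the centering constant, and I expect the cleanest route is to center at $\mathbb{E}_\mu[\ell(W',Z)] = L_\mu(W')$ (rather than $\mathbb{E}_{P_{S'}}[L_{S'}(w)] = L_{\mu'}(w)$) and absorb the discrepancy into the $\D_3(0,\Mc)$ term, or to handle it by a symmetrization between $L_{S'}$ and $L_{\mu'}$. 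This gives $\log\mathbb{E}[\cdots] \leq n\log\psi(1/(n\lambda))$ after replacing $\lambda$ by $1/(n\lambda)$ appropriately, and using $\log x \leq x - 1$ to pass to $n(\psi(1/(n\lambda)) - 1)$; rearranging and optimizing over $\lambda \geq 0$ yields the stated penalty $\inf_{\lambda\geq 0}[\lambda r - \lambda(\psi(1/(n\lambda))^n - 1)]$.

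The main obstacle I anticipate is the bookkeeping around the centering constant in the sub-exponential/MGF hypothesis on $\psi$: the theorem states $\psi$ bounds $\mathbb{E}_\nu[e^{\lambda(\ell(W,z) - \mathbb{E}_\nu[\ell(W,z)])}]$ pointwise in $z$, whereas what naturally appears after the change of measure is an MGF of $L_{S'}(w) = \frac1n\sum_i\ell(w,Z'_i)$ centered at its $P_{S'}$-mean $L_{\mu'}(w)$, with the randomness in $Z'_i \sim \mu'$ rather than in $W$. One must swap the roles: Fubini / tower property lets us first take expectation over $W' \sim \nu$ at fixed $z$ (matching the form of $\psi$) or first over $S'$ at fixed $W'$; reconciling these and getting the exponent bookkeeping ($\lambda$ vs $1/(n\lambda)$, the $n$-th power from the i.i.d. product) to land exactly on $\psi(1/(n\lambda))^n$ is the delicate part. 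A secondary subtlety is ensuring that the outer infimum over $P_{W'} \in \Mc$ can be pulled through — since the penalty term depends only on $r$, $n$, and $\psi$ (which is a uniform bound over all $\nu \in \Mc$), it is constant in $P_{W'}$, so taking the infimum over $P_{W'}$ on both sides of the per-$P_{W'}$ inequality is harmless and produces the claimed bound.
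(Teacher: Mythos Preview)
Your approach via Donsker--Varadhan is essentially the route the paper takes, though the paper packages it differently: it first relaxes the mutual-information constraint with a Lagrange multiplier $\lambda$, invokes weak duality $\min\max\geq\max\min$, and then lower-bounds the inner problem
\[
\min_{P_{W',S'}\in U(P_{W'},P_{S'})}\Big[\mathbb{E}\,\Delta(W',S')+\lambda D(P_{W',S'}\Vert P_{W'}P_{S'})\Big]
\]
via the full entropic-transport dual (Legendre transform $\phi^*(y)=e^y-1$, with dual functions $f(w')$ and $g(s')$ enforcing the two marginals). With the specific choices $f(w')=L_\mu(w')$ and $g(s')=-\mathbb{E}_{P_{W'}}[L_{s'}(W')]$, the dual bound collapses to exactly what you would get from Donsker--Varadhan applied with test function $\phi(w',s')=\tfrac{1}{\lambda}\bigl(L_{s'}(w')-\mathbb{E}_{P_{W'}}[L_{s'}(W')]\bigr)$. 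So the two presentations coincide; the paper's framework just makes the right centering fall out of the dual-variable selection rather than being guessed.

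That choice of $g$ is precisely the resolution of the obstacle you flag: one must center $\ell(W',z)$ at $\mathbb{E}_{P_{W'}}[\ell(W',z)]$ (a function of $z$ alone), not at $L_{\mu'}(w)$, so that after conditioning on $S'=s'$ the remaining moment generating function is over $W'$ at fixed $z'_i$'s and matches the hypothesis on $\psi$ directly --- your instinct to ``swap the roles'' and integrate over $W'$ first at fixed $z$ is the correct one, and the paper simply commits to it. One bookkeeping remark: the paper's Legendre-transform form $\phi^*(y)=e^y-1$ yields the penalty $\lambda\bigl(\psi(1/(n\lambda))^n-1\bigr)$ directly, whereas your Donsker--Varadhan route first gives $\lambda\log\bigl(\psi(1/(n\lambda))^n\bigr)$; to land on the stated form you must apply $\log y\leq y-1$ with $y=\psi^n$ (applying it per factor would give the distinct quantity $n\lambda(\psi-1)$ instead).
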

\begin{corollary}\label{cor2}
 Suppose that $\ell(W',z)$ is $\alpha^{2}$-subgaussian under any $P_{W'}\in\mathcal{M}$ for all $z\in \mathcal{Z}$. Considering the special choice  of $\lambda={\alpha}/{\sqrt{2nr}}$, we deduce 
\[\D_3\big(0,P_{W'}\big)\geq \D_3\big(r,P_{W'}\big)
    \geq
	\D_3\big(0,P_{W'}\big)-\frac{1}{\sqrt{n}}\left[\frac{\alpha\sqrt{r}}{\sqrt{2}}+\frac{\alpha}{\sqrt{2r}} (e^{r}-1)\right].
	\]
	Therefore,
	\[\mathrm{gen}\left(\mu,\mu',\A\right)\ge
	\D_3\big(0,P_{W'}\big)-\frac{1}{\sqrt{n}}\left[\frac{\alpha\sqrt{I(S';\A(S'))}}{\sqrt{2}}+\frac{\alpha}{\sqrt{2I(S';\A(S'))}} (e^{I(S';\A(S'))}-1)\right].
	\]
\end{corollary}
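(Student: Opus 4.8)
The plan is to obtain Corollary \ref{cor2} as a direct specialization of Theorem \ref{thm4} to the singleton family $\mathcal{M}=\{P_{W'}\}$, followed by the elementary observation that the joint law induced by $\A$ is a feasible point of the coupling optimization in \eqref{eqnRD1lower2}. The monotonicity $\D_3(0,P_{W'})\ge \D_3(r,P_{W'})$ is immediate from \eqref{eqnRD1lower2}, since increasing $r$ only enlarges the feasible set of couplings; so the only real content is the lower bound on $\D_3(r,P_{W'})$.

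First I would choose the comparison function $\psi$ required by Theorem \ref{thm4}. By hypothesis $\ell(W',z)$ is $\alpha^{2}$-subgaussian under $P_{W'}$ for every $z\in\mathcal Z$, so Definition \ref{def2} gives $\mathbb{E}_{P_{W'}}\!\big[e^{s(\ell(W',z)-\mathbb{E}_{P_{W'}}[\ell(W',z)])}\big]\le e^{\alpha^{2}s^{2}/2}$ for all $s\in\mathbb{R}$ and all $z$; hence $\psi(\lambda)=e^{\alpha^{2}\lambda^{2}/2}$ is admissible, and then $\psi(1/(n\lambda))^{n}=e^{\alpha^{2}/(2n\lambda^{2})}$.

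Next I would bound the infimum appearing in Theorem \ref{thm4} from above by evaluating the bracketed expression at the single point $\lambda_{0}=\alpha/\sqrt{2nr}$. This choice is engineered so that $\alpha^{2}/(2n\lambda_{0}^{2})=r$, hence $\psi(1/(n\lambda_{0}))^{n}-1=e^{r}-1$, while $\lambda_{0}r=\alpha\sqrt r/\sqrt{2n}$ and $\lambda_{0}(e^{r}-1)=\alpha(e^{r}-1)/\sqrt{2nr}$. Bounding the bracketed quantity by the sum of the magnitudes of its two terms yields $\inf_{\lambda\ge0}[\cdots]\le \tfrac{1}{\sqrt n}\big[\tfrac{\alpha\sqrt r}{\sqrt 2}+\tfrac{\alpha}{\sqrt{2r}}(e^{r}-1)\big]$, and substituting into Theorem \ref{thm4} produces the first displayed chain of inequalities. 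Note that one does not need to optimize over $\lambda$ exactly — any fixed feasible $\lambda_{0}$ already gives a valid (if not necessarily tight) bound.

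Finally, to convert the bound on $\D_3$ into a bound on the generalization error, I would observe that with $S'\sim(\mu')^{\otimes n}$ and $W'=\A(S')$ the pair $(S',W')$ is a coupling of $P_{S'}=(\mu')^{\otimes n}$ and the marginal $P_{W'}$ of $\A(S')$ with $I(W';S')=I(S';\A(S'))$; thus it is feasible for the inner infimum in \eqref{eqnRD1lower2} with $r=I(S';\A(S'))$, so $\mathrm{gen}(\mu,\mu',\A)=\mathbb{E}[L_\mu(W')-L_{S'}(W')]\ge \D_3\big(I(S';\A(S')),P_{W'}\big)$. Applying the first part with $r=I(S';\A(S'))$ delivers the second display. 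The argument is essentially mechanical; the only steps meriting a moment's care are verifying that the algorithm-induced coupling is feasible for \eqref{eqnRD1lower2} (so that $\mathrm{gen}\ge\D_3$) and the bookkeeping of constants when substituting $\lambda_{0}$.
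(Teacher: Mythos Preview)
Your proposal is correct and matches the paper's intended approach: the corollary is obtained precisely by specializing Theorem \ref{thm4} with the subgaussian choice $\psi(\lambda)=e^{\alpha^{2}\lambda^{2}/2}$ and evaluating the infimum at the single point $\lambda_{0}=\alpha/\sqrt{2nr}$, after which feasibility of the algorithm-induced coupling in \eqref{eqnRD1lower2} gives the bound on $\mathrm{gen}(\mu,\mu',\A)$. Your ``sum of magnitudes'' step gracefully absorbs the sign typo in the statement of Theorem \ref{thm4} (its proof in Section \ref{Sec:ProofThm4} actually yields $\lambda r+\lambda(\psi(1/(n\lambda))^{n}-1)$, with a plus), so the substitution in fact produces the displayed bound exactly rather than merely as an upper estimate.
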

Proof of the above theorem can be found in Section \ref{Sec:ProofThm4}.
The following theorem gives another lower bound on the generalization error which can be compared with the upper bound in Theorem \ref{Thm2}:
\begin{theorem} For any arbitrary loss function $\ell(w,z)$, and algorithm $\A$ that runs on a training dataset $S'$ of size $n$ and induces a marginal distribution on $\A(S')$ in $\mathcal{M}$, we have
\[\mathrm{gen}\left(\mu,\mu',\A\right)\ge \D_4\left(\frac{I(S';\A(S')}{n},\mathcal{M}\right)
\]
where
\begin{equation}
\D_4(r,\mathcal{M})\triangleq \inf_{P_{W'}\in\mathcal{M}}~~\min_{P_{\hat W,Z'}\in U(P_{W'},\mu'):~I(\hat W;Z')\leq r}
\mathbb{E}\left[L_{\mu}(\hat W)-\ell(\hat W,Z')\right].\label{eqnRD22b}
\end{equation}
\label{Thm2b}
\end{theorem}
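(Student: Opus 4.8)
The plan is to run the lower-bound mirror of the proof of Theorems~\ref{Thm2} and~\ref{Thm2b2}, carrying the constraint $P_{W'}\in\mathcal M$ through every step. Write $W'=\A(S')$ and let $P_{W'}$ be its law, which lies in $\mathcal M$ by hypothesis. Since $L_{S'}(W')=\frac1n\sum_{i=1}^n\ell(W',Z'_i)$ and $\mathbb{E}[L_\mu(W')]$ does not depend on $i$,
\begin{align*}
\mathrm{gen}\left(\mu,\mu',\A\right)=\frac1n\sum_{i=1}^n\mathbb{E}\left[L_\mu(W')-\ell(W',Z'_i)\right].
\end{align*}
First I would lower bound each summand. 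For a fixed $i$, the joint law of $(W',Z'_i)$ is a coupling in $U(P_{W'},\mu')$ (its $Z'_i$-marginal is $\mu'$), with mutual information $r_i:=I(W';Z'_i)$, and hence it is feasible for the minimization in \eqref{eqnRD22b} taken with $\mathcal M$ replaced by the singleton $\{P_{W'}\}$. Writing $\D_4(r,P)$ for that inner minimum, this gives $\mathbb{E}[L_\mu(W')-\ell(W',Z'_i)]\ge\D_4(r_i,P_{W'})$, and therefore
\begin{align*}
\mathrm{gen}\left(\mu,\mu',\A\right)\ge\frac1n\sum_{i=1}^n\D_4(r_i,P_{W'}).
\end{align*}

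Next I would record two structural properties of $r\mapsto\D_4(r,P_{W'})$. It is non-increasing, since enlarging $r$ enlarges the feasible set. And it is convex: the feasible set $\{P_{\hat W,Z'}\in U(P_{W'},\mu'):I(\hat W;Z')\le r\}$ is convex, because $U(P_{W'},\mu')$ is a transportation polytope and, with both marginals frozen, $I(\hat W;Z')=D(P_{\hat W,Z'}\,\|\,P_{W'}\otimes\mu')$ is a convex function of $P_{\hat W,Z'}$, while the objective $\mathbb{E}[L_\mu(\hat W)-\ell(\hat W,Z')]$ is linear in $P_{\hat W,Z'}$; so $\D_4(\cdot,P_{W'})$ is the value function of a parametric linear program and is convex in $r$ (by the usual convex-combination-of-optimizers argument). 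Applying Jensen's inequality to the average above yields $\frac1n\sum_{i=1}^n\D_4(r_i,P_{W'})\ge\D_4\big(\tfrac1n\sum_{i=1}^n r_i,\,P_{W'}\big)$.

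It then remains to replace $\tfrac1n\sum_i r_i$ by $\tfrac1n I(S';\A(S'))$ and to relax $P_{W'}$ to $\mathcal M$. Since $S'=(Z'_1,\dots,Z'_n)$ has i.i.d.\ coordinates, the chain rule together with the fact that conditioning reduces entropy gives $\sum_{i=1}^n I(W';Z'_i)\le I(W';S')$, exactly as in the proof of Theorem~\ref{Thm2b2}; combined with monotonicity of $\D_4(\cdot,P_{W'})$, this gives $\D_4\big(\tfrac1n\sum_i r_i,P_{W'}\big)\ge\D_4\big(\tfrac1n I(S';\A(S')),P_{W'}\big)$. Finally, since $P_{W'}\in\mathcal M$, we have $\D_4\big(\tfrac1n I(S';\A(S')),P_{W'}\big)\ge\inf_{P\in\mathcal M}\D_4\big(\tfrac1n I(S';\A(S')),P\big)=\D_4\big(\tfrac1n I(S';\A(S')),\mathcal M\big)$, which is the claim.

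The main obstacle I anticipate is justifying the convexity and monotonicity of $\D_4$ in its first argument cleanly; the key point is that in \eqref{eqnRD22b} \emph{both} marginals of the coupling are pinned down, so that $I(\hat W;Z')$ is genuinely convex on the feasible set (it would not be if only the $Z'$-marginal were fixed). A secondary point is the attainment of the minimum in \eqref{eqnRD22b}, which is automatic when $\mathcal Z$ and $\mathcal W$ are finite (weak compactness of $U(P_{W'},\mu')$ and lower semicontinuity of $I$) and in general costs nothing: one simply reads each minimum as an infimum throughout.
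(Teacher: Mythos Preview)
Your proof is correct and follows essentially the same single-letterization route as the paper: bound each summand $\mathbb{E}[L_\mu(W')-\ell(W',Z'_i)]$ by $\D_4(r_i,P_{W'})$, then use convexity and monotonicity of $\D_4(\cdot,P_{W'})$ together with the chain-rule inequality $\sum_i I(W';Z'_i)\le I(W';S')$ for i.i.d.\ $Z'_i$. The paper's proof phrases this via an auxiliary $n$-letter quantity $\tilde\D_3(r)$ (mirroring the proof of Theorem~\ref{Thm2}) and then single-letterizes; your version skips that intermediate and works directly termwise, which is a little cleaner. Your explicit observation that convexity of $r\mapsto\D_4(r,P_{W'})$ relies on \emph{both} marginals being pinned (so that $I(\hat W;Z')=D(P_{\hat W,Z'}\|P_{W'}\otimes\mu')$ is convex on the coupling polytope) is exactly the point one needs here and is glossed over in the paper.
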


The proof is given in the Section \ref{appenDE}.

\textbf{High probability guarantees:}
Just as the \emph{excess distortion} probability of a rate-distortion code has been subject of many studies in information theory (see \cite{matsuta2015non, marton1974error} for two examples), a number of ``high probability" upper bounds on the generalization gap are also reported in the literature. Here the problem is to find an upper bound on 
\begin{align*}
    \mathbb{P}[\mathrm{gen}_{\mu}(W',S')\geq \eta]
&=\mathbb{P}[
L_{\mu}(W')-L_{S'}(W')\geq \eta]
\end{align*}
for some given $\eta$.

The following bound is a generalization of a bound in  \cite{esposito2019generalization} to include distribution mismatch. Our method for deriving this inequality is different from the one used in  \cite{esposito2019generalization}, and similar to the one used in \cite{hellstrom2020generalization}.

\begin{theorem}\label{lhighprobability_on_gen}
Take some algorithm $\A$ that runs on a training dataset $S'$ and produces an output hypothesis $W'=\A(S')$. Let $\ell(w,Z)$ be a loss function which is $\sigma^{2}$-subgaussian under the  distribution $\mu$ on $Z$ for all $w\in\mathcal{W}$. Then, we have    
\begin{align}
    \mathbb{P}[|\mathrm{gen}_{\mu}(W',S')|\geq \eta]\leq 2\exp\left(-\frac{n\left(\frac{\eta^{2}}{2}-\sigma^{2}D_{2}(\mu'\|\mu)\right)-\sigma^{2}D_{2}(P_{W'S'}\|P_{W'}P_{S'})}{3\sigma^{2}}\right).
\end{align}
\end{theorem}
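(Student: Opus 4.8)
The plan is to use the ``change of measure and truncation'' technique of \cite{hellstrom2020generalization}, paying separately for the two sources of slack we must absorb: the dependence of $W'=\A(S')$ on $S'$, and the mismatch between the training law $(\mu')^{\otimes n}$ and the law $\mu^{\otimes n}$ under which the sub-Gaussian hypothesis holds. It suffices to bound $\mathbb{P}_{P_{W'S'}}[L_\mu(W')-L_{S'}(W')\ge\eta]$; the lower-tail event is symmetric, so a union bound produces the stated two-sided inequality. Introduce the information density $\imath(w,s)=\log\frac{dP_{W'S'}}{dP_{W'}P_{S'}}(w,s)$ and the log-likelihood ratio $\jmath(s)=\log\frac{d(\mu')^{\otimes n}}{d\mu^{\otimes n}}(s)$, write $E=\{L_\mu(W')-L_{S'}(W')\ge\eta\}$, and observe that $\frac{dP_{W'S'}}{d(P_{W'}\otimes\mu^{\otimes n})}=e^{\imath+\jmath}$, so that $\mathbb{P}_{P_{W'S'}}[E]=\mathbb{E}_{P_{W'}\otimes\mu^{\otimes n}}[e^{\imath+\jmath}\ind_E]$.

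For a threshold $\gamma>0$ the pointwise inequality $e^{\imath+\jmath}\ind_E\le e^\gamma\ind_E+e^{\imath+\jmath}\ind\{\imath+\jmath>\gamma\}$ gives
\[
\mathbb{P}_{P_{W'S'}}[E]\ \le\ e^\gamma\,\mathbb{P}_{P_{W'}\otimes\mu^{\otimes n}}[E]\ +\ \mathbb{P}_{P_{W'S'}}[\imath+\jmath>\gamma].
\]
For the first term, under $P_{W'}\otimes\mu^{\otimes n}$ the output $W'$ is independent of $S'\sim\mu^{\otimes n}$, and conditionally on $W'=w$ the variable $L_\mu(w)-L_{S'}(w)=\frac1n\sum_{i=1}^n\bigl(L_\mu(w)-\ell(w,Z'_i)\bigr)$ is an average of $n$ i.i.d.\ centered $\sigma^2$-sub-Gaussian terms, hence $\tfrac{\sigma^2}{n}$-sub-Gaussian, so Definition \ref{def2} yields $\mathbb{P}_{P_{W'}\otimes\mu^{\otimes n}}[E]\le e^{-n\eta^2/(2\sigma^2)}$. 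For the second term, since $\{\imath+\jmath>\gamma\}\subseteq\{\imath>\gamma_1\}\cup\{\jmath>\gamma_2\}$ whenever $\gamma_1+\gamma_2=\gamma$, Markov's inequality applied to $e^{\imath}$ and to $e^{\jmath}$ under $P_{W'S'}$ gives
\[
\mathbb{P}_{P_{W'S'}}[\imath>\gamma_1]\le e^{-\gamma_1}\mathbb{E}_{P_{W'S'}}[e^{\imath}]=e^{-\gamma_1+D_2(P_{W'S'}\|P_{W'}P_{S'})},\qquad \mathbb{P}_{P_{W'S'}}[\jmath>\gamma_2]\le e^{-\gamma_2+nD_2(\mu'\|\mu)},
\]
because $\mathbb{E}_{P_{W'S'}}[e^{\imath}]=\mathbb{E}_{P_{W'}P_{S'}}[(dP_{W'S'}/dP_{W'}P_{S'})^2]=1+\chi^2(P_{W'S'}\|P_{W'}P_{S'})$ and $\mathbb{E}_{(\mu')^{\otimes n}}[d(\mu')^{\otimes n}/d\mu^{\otimes n}]=1+\chi^2((\mu')^{\otimes n}\|\mu^{\otimes n})$, together with the tensorization $D_2((\mu')^{\otimes n}\|\mu^{\otimes n})=nD_2(\mu'\|\mu)$.

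Combining the three estimates leaves $\mathbb{P}_{P_{W'S'}}[E]\le e^{\gamma-n\eta^2/(2\sigma^2)}+e^{-\gamma_1+D_2(P_{W'S'}\|P_{W'}P_{S'})}+e^{-\gamma_2+nD_2(\mu'\|\mu)}$, and the final step is to choose $\gamma_1,\gamma_2$ (with $\gamma=\gamma_1+\gamma_2$) so that the three exponents coincide; the resulting common value is exactly $-\bigl(n(\tfrac{\eta^2}{2}-\sigma^2D_2(\mu'\|\mu))-\sigma^2D_2(P_{W'S'}\|P_{W'}P_{S'})\bigr)/(3\sigma^2)$, after which the two-sided bound follows. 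I expect the crux to be arranging that the two changes of measure deliver precisely the order-$2$ Rényi (i.e.\ $\chi^2$) divergences appearing in the statement rather than a higher-order divergence: this is what forces the Markov exponent to be exactly $1$, it is why the two truncations must be kept separate (a single threshold on $\imath+\jmath$ would leave the non-product divergence $D_2(P_{W'S'}\|P_{W'}\otimes\mu^{\otimes n})$, which does not split), and it is the three-way balancing of $\gamma_1,\gamma_2,\gamma$ that generates the $3\sigma^2$ in the denominator. A secondary subtlety is that sub-Gaussianity is imposed under the test distribution $\mu$, not under the training distribution $\mu'$, which is exactly why the $\jmath$ change of measure has to precede the use of Definition \ref{def2}.
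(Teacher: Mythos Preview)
Your argument is correct and yields the stated exponent, though the balancing of three terms produces a prefactor of $3$ (hence $6$ for the two-sided bound) rather than the $2$ in the statement; this is a harmless constant, and in fact the paper's own proof seems to give the one-sided bound with a $2$, so the two-sided constant there is also slightly loose.

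The route you take is genuinely different from the paper's. The paper works at the moment-generating-function level: it writes the sub-Gaussian MGF under $P_{W'}\otimes\mu^{\otimes n}$, changes measure to $P_{W'S'}$, applies Markov, and then controls the single random log-likelihood ratio $\log\frac{dP_{W'S'}}{dP_{W'}dP_{S}}$ by a Chernoff bound in terms of $D_\alpha(P_{W'S'}\|P_{W'}P_S)$. The splitting into a dependence part and a mismatch part is done \emph{after} this, via a H\"older inequality (Lemma~\ref{lemm4}) that decomposes
\[
D_\alpha(P_{W'S'}\|P_{W'}P_S)\le D_{1+(\alpha-1)p}(P_{W'S'}\|P_{W'}P_{S'})+nD_{1+(\alpha-1)q}(\mu'\|\mu),
\]
and the specific choice $\alpha=3/2$, $p=q=2$ is what lands both pieces at order $2$. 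By contrast, you split \emph{before} applying Markov, factoring the Radon--Nikodym derivative as $e^{\imath}e^{\jmath}$ and truncating $\imath$ and $\jmath$ separately; Markov with exponent $1$ on each then produces $D_2$ directly, with no need for H\"older or for a free parameter $\alpha$. Your approach is a bit more elementary and makes the appearance of $D_2$ (rather than a higher-order R\'enyi divergence) immediate; the paper's approach is more flexible in that the H\"older exponents $p,q$ and the order $\alpha$ can be tuned, which is why their intermediate bound \eqref{1000} is stated for general $\alpha,p,q$ before specializing.
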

Proof of the above theorem is given in Section \ref{Sec:highProb}.

\textbf{Performance of the ERM algorithm:} 
As an application of Theorem \ref{lhighprobability_on_gen}, let us consider the \text{ERM} algorithm which is defined as follows:
\begin{align}\label{eq125}
	W_{ERM}(S')=\argmin_{w\in \mathcal{W}}L_{S'}(w).
\end{align}
Then, we claim the following upper bound on excess risk of the \text{ERM} algorithm: 

\begin{theorem}\label{th9}
Let $\ell(w,Z)$ be  $\sigma^{2}$-subgaussian under the distribution $\mu$ on $Z$ for every $w$. Consider the \text{ERM} learning algorithm $\A_{ERM}$ as defined in \eqref{eq125}. Then, with probability of at least $1-\delta$,
    \begin{align}
        {L}_{\mu}(\A_{ERM}(S'))\le&\min_{w\in \mathcal{W}}{L}_{\mu}(w)+\sqrt{2\sigma^{2}D_2(\mu'\|\mu)+2\sigma^{2}\frac{\log\left(\frac{4}{\delta}\right)}{n}}\nonumber\\
        &+\sqrt{2\sigma^{2}D_2(\mu'\|\mu)+\frac{2\sigma^{2}\left[D_{2}(P_{W'S'}\|P_{W'}P_{S'})+3\log\left(\frac{4}{\delta}\right)\right]}{n}}.
    \end{align}
    \label{thm8}
\end{theorem}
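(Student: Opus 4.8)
Assume first that the minimum $\min_{w\in\mathcal{W}}L_{\mu}(w)$ is attained, say at $w^{\star}$ (the general case follows by running the argument below on $\varepsilon$-minimizers and letting $\varepsilon\downarrow 0$, using continuity of the measure from below). Since $W'=\A_{ERM}(S')$ minimizes $w\mapsto L_{S'}(w)$, we have $L_{S'}(W')\le L_{S'}(w^{\star})$, so writing out the three-term decomposition
\[
L_{\mu}(W')-L_{\mu}(w^{\star})=\big(L_{\mu}(W')-L_{S'}(W')\big)+\big(L_{S'}(W')-L_{S'}(w^{\star})\big)+\big(L_{S'}(w^{\star})-L_{\mu}(w^{\star})\big)\le \mathrm{gen}_{\mu}(W',S')-\mathrm{gen}_{\mu}(w^{\star},S'),
\]
because the middle term is $\le 0$ and $L_{S'}(w^{\star})-L_{\mu}(w^{\star})=-\mathrm{gen}_{\mu}(w^{\star},S')$. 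Thus it suffices to upper bound $\mathrm{gen}_{\mu}(W',S')$ and $-\mathrm{gen}_{\mu}(w^{\star},S')$ separately, each with failure probability at most $\delta/2$, and then take a union bound.

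For the first quantity I would apply Theorem \ref{lhighprobability_on_gen} verbatim with the algorithm $\A=\A_{ERM}$. Setting the right-hand side of that bound equal to $\delta/2$ and solving for the threshold shows that, with probability at least $1-\delta/2$,
\[
\mathrm{gen}_{\mu}(W',S')\le |\mathrm{gen}_{\mu}(W',S')|\le \sqrt{2\sigma^{2}D_{2}(\mu'\|\mu)+\frac{2\sigma^{2}\big[D_{2}(P_{W'S'}\|P_{W'}P_{S'})+3\log(4/\delta)\big]}{n}},
\]
which is exactly the second term in the claim; here $D_{2}(P_{W'S'}\|P_{W'}P_{S'})$ is the Rényi-$2$ information of the randomized ERM map.

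For the second quantity the key observation is that $w^{\star}$ is \emph{data-independent}, so $\mathrm{gen}_{\mu}(w^{\star},S')=L_{\mu}(w^{\star})-\frac1n\sum_{i=1}^{n}\ell(w^{\star},Z'_i)$ is (up to sign) a $\mu$-centred empirical average of i.i.d.\ samples drawn from $\mu'$. I would run the same change-of-measure argument that underlies Theorem \ref{lhighprobability_on_gen}, specialised to this constant predictor: a Chernoff bound on the empirical average, combined with a Cauchy–Schwarz/Hölder change of measure from $(\mu')^{\otimes n}$ to $\mu^{\otimes n}$ that turns the $\sigma^{2}$-subgaussianity of $\ell(w^{\star},\cdot)$ under $\mu$ into a deviation bound under $\mu'$ at the cost of the factor $\mathbb{E}_{\mu}\!\big[(d\mu'/d\mu)^{2}\big]^{n}=e^{nD_{2}(\mu'\|\mu)}$. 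Because no output randomization is present, no mutual-information term appears, and one gets a bound of the form $\mathbb{P}\big[|\mathrm{gen}_{\mu}(w^{\star},S')|\ge \eta\big]\le 2\exp\!\big(-n(\tfrac{\eta^{2}}{2}-\sigma^{2}D_{2}(\mu'\|\mu))/\sigma^{2}\big)$; equating this with $\delta/2$ yields that with probability at least $1-\delta/2$,
\[
-\mathrm{gen}_{\mu}(w^{\star},S')\le \sqrt{2\sigma^{2}D_{2}(\mu'\|\mu)+\frac{2\sigma^{2}\log(4/\delta)}{n}},
\]
the first term in the claim. A union bound over the two displayed events then gives $L_{\mu}(\A_{ERM}(S'))\le \min_{w}L_{\mu}(w)+(\text{first term})+(\text{second term})$ with probability at least $1-\delta$.

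The step I expect to be the main obstacle is the second one: controlling the fluctuation of $L_{S'}(w^{\star})$ under distribution mismatch. Subgaussianity is assumed only under the test distribution $\mu$, while the data come from $\mu'$, so a change of measure is unavoidable, and the delicate part is to carry it out in a way that (i) produces exactly the divergence $D_{2}(\mu'\|\mu)$ that also appears in Theorem \ref{lhighprobability_on_gen}, so that the two high-probability bounds add cleanly into the stated two-term expression, and (ii) keeps the constants consistent with the $\log(4/\delta)$ bookkeeping. A minor secondary technicality is the existence/measurable selection of the optimizer $w^{\star}$, handled by the $\varepsilon$-minimizer limiting argument noted at the outset.
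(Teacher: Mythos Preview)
Your proposal is correct and follows essentially the same strategy as the paper: the three-term decomposition using $L_{S'}(W')\le L_{S'}(w^{\star})$, one application of Theorem~\ref{lhighprobability_on_gen} to $\A_{ERM}$ for the second square root, a separate high-probability bound on the fluctuation at the fixed point $w^{\star}$ for the first square root, and a union bound with $\delta/2$ on each event.

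The only difference is in how the fluctuation at $w^{\star}$ is handled. You propose to redo the change-of-measure argument from scratch for the constant predictor, and you flag this as the main obstacle. The paper avoids this entirely: it simply applies Theorem~\ref{lhighprobability_on_gen} a second time, to the trivial algorithm $\A^{*}(s')\equiv w^{\star}$. Since $W'$ is then independent of $S'$, one has $D_{2}(P_{W'S'}\|P_{W'}P_{S'})=0$, and the bound of Theorem~\ref{lhighprobability_on_gen} collapses to exactly the first square-root term. So the step you anticipated as delicate is in fact already packaged inside Theorem~\ref{lhighprobability_on_gen}; no separate argument is needed.
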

Proof of the above theorem can be found in Section \ref{sec:proof8}.


\section{Learning under Model Misspecification }\label{sec:Reliability}
Take an algorithm $\A$ along with a sample-complexity guarantee for a family of distributions $\mathcal{P}$, \emph{i.e.,} the model has specified the class $\mathcal{P}$. Given $\delta,\epsilon>0$, sample complexity is defined as
\begin{align*}
	n(\A, \mathcal{P}, \epsilon, \delta)=\min\left\{N: \forall n>N, ~\sup_{\mu\in\mathcal{P}}\mathbb{P}\left[L_{\mu}(\A(S))-\min_{w\in \mathcal{W}}L_{\mu}(w)\le \epsilon\right]\ge1-\delta\right\}
\end{align*} 
where $S=(Z_1,Z_2,\cdots,Z_n)\sim \mu^{\otimes n}$ is the training data. 
We would like to find the increase in sample-complexity if we expand the set $\mathcal{P}$ to 
$$\mathcal{P}_{\gamma}=\{\mu': \inf_{\mu\in \mathcal{P}}D(\mu'\|\mu)\le \gamma\}.$$
The set $\mathcal{P}_{\gamma}$ relates to model misspecification when it is llimited in KL divergence of at most $\gamma$. We utilize the following alternative equivalent definition of sample-complexity:
\begin{align*}
	\epsilon(\A, \mathcal{P}, n, \delta)=\inf\left\{x\in \mathbb{R}:~\sup_{\mu\in\mathcal{P}}\mathbb{P}\left[L_{\mu}(\A(S))-\min_{w\in \mathcal{W}}L_{\mu}(w)\le x\right]\ge1-\delta\right\}.
\end{align*} 
We restrict to uniformly-stable algorithms. In general terms, a learning algorithm is said to be stable if a small change of the input to the algorithm does not change the output of the algorithm much. Examples of stability definitions include uniform stability defined by Bousquet and Elisseeff \cite{bousquet2002stability}. 
The definition of stability that we adopt in this paper is as follows:
\begin{definition}\label{def0}
Given non-negative real numbers $\beta_i(n)$ we say that the $\A$ is called uniformly-stable if for any $s_1=(z_{11}, z_{12}, \cdots, z_{1n}), s_2=(z_{21}, z_{22}, \cdots, z_{2n})\in\mathcal{Z}^n$,  the following inequality holds (almost surely):
	\begin{align}
		|\ell(\A(s_1),z)-\ell(\A(s_2),z)|\le \sum_{i=1}^{n}\beta_{i}(n)\boldsymbol{1}[z_{1i}\neq z_{2i}],\quad\forall z\in\mathcal{Z}.
	\end{align}
\end{definition}


	\begin{theorem}
		\label{thm:sample_mix_comp}
		Let $\ell(w,Z)$ be $\sigma^{2}$-subgaussian over $Z$ with distribution $\mu\in \mathcal{P}$ for every $w$. Then, for every $n,\gamma>0$ and $ \delta\in[0,1]$, we have
		\begin{align}\label{eq100}
		    	\epsilon(\A, \mathcal{P}_\gamma, n, {\delta})\le \epsilon(\A,  \mathcal{P}, n, \delta)+\sum_{i=1}^{n}\beta_{i}+2\sqrt{2\sigma^{2}\gamma},
		\end{align}
and
		\begin{align}\label{eq101}
		\epsilon(\A, \mathcal{P}_\gamma, n, {\delta})\le \epsilon(\A,  \mathcal{P}, n, \delta/2)+f(\delta),
		\end{align}
		where the function $f$ is defined as
		\begin{align}
	f(\delta)&\triangleq\sqrt{2\sigma^{2}\gamma}+\sqrt{2\sigma^{2}\left[\log(2/\delta)+\gamma\right]}\nonumber+\frac{1}{g(\delta/2)}\sum_{i=1}^{n}\ln\left(1+\frac{1}{2}(\exp(g(\delta/2)\beta_{i})-1)\sqrt{\gamma}\right)
,\\
g(\delta)&\triangleq\sqrt{\frac{2\left[\log(1/\delta)+\gamma\right]}{\sigma^{2}}}.
		\end{align}
		
	\end{theorem}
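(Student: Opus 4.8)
The plan is to fix a target distribution $\mu'\in\mathcal P_\gamma$, pick $\mu\in\mathcal P$ achieving (or nearly achieving) $D(\mu'\|\mu)\le\gamma$, and transfer the $\mathcal P$-guarantee from $\mu$ to $\mu'$ by decomposing the excess risk along the chain
\[
L_{\mu'}(\A(S'))-\min_w L_{\mu'}(w)
\;\le\;\bigl[L_{\mu'}(\A(S'))-L_{\mu}(\A(S'))\bigr]
+\bigl[L_{\mu}(\A(S'))-\min_w L_{\mu}(w)\bigr]
+\bigl[\min_w L_{\mu}(w)-\min_w L_{\mu'}(w)\bigr].
\]
Wait — $S'\sim(\mu')^{\otimes n}$, not $\mu^{\otimes n}$, so the middle term is not directly controlled by the $\mathcal P$-sample-complexity. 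To fix this I would instead first couple $S'$ to a sample $S\sim\mu^{\otimes n}$ (Lemma \ref{lmm2}, applied coordinatewise), and compare $\A(S')$ with $\A(S)$ using uniform stability (Definition \ref{def0}): on the coupling, $|\ell(\A(S'),z)-\ell(\A(S),z)|\le\sum_i\beta_i\ind[Z'_i\ne Z_i]$, whose expectation is $\sum_i\beta_i\|\mu'-\mu\|_{TV}$. This handles the passage from the training distribution $\mu'$ back to $\mu$.

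For the first inequality \eqref{eq100}: having reduced to $S\sim\mu^{\otimes n}$, the event $L_\mu(\A(S))-\min_w L_\mu(w)\le\epsilon(\A,\mathcal P,n,\delta)$ holds with probability $\ge1-\delta$. On that event I would bound the three remaining gaps: (i) $|L_{\mu'}(w)-L_\mu(w)|$ for any fixed $w$ is controlled by subgaussianity plus a Donsker–Varadhan/change-of-measure step — the standard $D_2$ or $D$ bound gives $|L_{\mu'}(w)-L_\mu(w)|\le\sqrt{2\sigma^2 D(\mu'\|\mu)}\le\sqrt{2\sigma^2\gamma}$ (this is exactly the $\gamma$-term appearing twice, once for $\A(S)$ and once at the minimizers, giving the $2\sqrt{2\sigma^2\gamma}$); (ii) the stability term $\sum_i\beta_i$, absorbing the $\|\mu'-\mu\|_{TV}\le1$ bound crudely; and (iii) $\min_w L_\mu(w)\le L_\mu(w^\star_{\mu'})\le L_{\mu'}(w^\star_{\mu'})+\sqrt{2\sigma^2\gamma}$. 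Assembling these, with probability $\ge1-\delta$, $L_{\mu'}(\A(S'))-\min_w L_{\mu'}(w)\le\epsilon(\A,\mathcal P,n,\delta)+\sum_i\beta_i+2\sqrt{2\sigma^2\gamma}$; since $\mu'\in\mathcal P_\gamma$ was arbitrary, take the sup and recall the definition of $\epsilon(\cdot)$ as an infimum over $x$.

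For the sharper inequality \eqref{eq101}: here the stability term $\sum_i\beta_i$ must be improved to something scaling with $\sqrt\gamma$ rather than $1$. Instead of a coupling/TV argument I would use an information-theoretic / change-of-measure bound on the stability contribution directly under $(\mu')^{\otimes n}$ versus $\mu^{\otimes n}$: write the deviation as an exponential moment, apply the variational (Donsker–Varadhan) bound of Lemma \ref{lemma1} with the $\lambda$-tilted version of $\sum_i\beta_i\ind[\cdot]$, and use $D((\mu')^{\otimes n}\|\mu^{\otimes n})=nD(\mu'\|\mu)\le n\gamma$, together with the per-coordinate moment generating function $\mathbb E_{\mu'}[e^{\lambda\beta_i\ind[Z'_i\ne Z_i]}]$ bounded via $\sqrt\gamma$-type estimates (this is where the $\tfrac12(\exp(g\beta_i)-1)\sqrt\gamma$ factor and the outer $\tfrac1{g}\sum\ln(1+\cdots)$ come from, optimized at $\lambda=g(\delta/2)$). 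The $\sqrt{2\sigma^2\gamma}+\sqrt{2\sigma^2[\log(2/\delta)+\gamma]}$ part is the high-probability version of the risk-transfer step, splitting the failure probability as $\delta/2+\delta/2$ and using a Chernoff/$D_2$ bound as in Theorem \ref{lhighprobability_on_gen}.

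The main obstacle is the second bound \eqref{eq101}: getting the stability penalty to scale like $\sqrt\gamma$ (so that the bound is meaningful for small $\gamma$) rather than the trivial $\sum_i\beta_i$, which requires carefully tilting the indicator random variables $\ind[Z'_i\ne Z_i]$ and matching the exponential-moment computation against $n\gamma$ through the Donsker–Varadhan inequality, then performing the $\lambda$-optimization to land precisely on $g(\delta/2)$. Everything else is routine subgaussian/change-of-measure bookkeeping.
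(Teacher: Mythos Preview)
Your approach to \eqref{eq100} is essentially the paper's: couple $S'\sim(\mu')^{\otimes n}$ with $S\sim\mu^{\otimes n}$ via Lemma~\ref{lmm2}, use uniform stability crudely as $\sum_i\beta_i$, invoke the $\mathcal P$-guarantee on $S$, and transfer $L_\mu\leftrightarrow L_{\mu'}$ twice via Donsker--Varadhan plus subgaussianity to pick up $2\sqrt{2\sigma^2\gamma}$.

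For \eqref{eq101}, however, your proposed mechanism is confused and would not produce the stated bound. You write that you would abandon the coupling and instead ``use $D((\mu')^{\otimes n}\|\mu^{\otimes n})=n\gamma$ through Donsker--Varadhan'', yet in the same breath you keep the indicators $\ind[Z'_i\ne Z_i]$, which exist only on the coupled space. These two ideas are incompatible, and the $n\gamma$ route is fatal anyway: a change of measure on the full training sample would inject a term of order $n\gamma/\lambda$, and after optimizing $\lambda$ the bound would grow like $\sqrt{n\gamma}$, not the desired $n$-free expression.

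The paper \emph{does not} abandon the coupling for \eqref{eq101}. The actual architecture is:
\begin{enumerate}
\item Apply Donsker--Varadhan (Lemma~\ref{lemma1}) to the \emph{population} risk only, i.e.\ to $\lambda\ell(\A(S'),\cdot)$ as a function of a single test point, changing measure from $\mu'$ to $\mu$; this costs $D(\mu'\|\mu)\le\gamma$, not $n\gamma$.
\item Inside the resulting log-MGF $\log\mathbb E_{Z\sim\mu}e^{\lambda\ell(\A(S'),Z)}$, replace $\A(S')$ by $\A(S)$ via stability on the coupling, picking up $\lambda\sum_i\beta_i\ind[Z_i\ne Z'_i]$; then use subgaussianity in $Z$ and the $\mathcal P$-guarantee (probability $\delta/2$) on $\A(S)$.
\item The random stability penalty $\exp\bigl(\lambda\sum_i\beta_i\ind[Z_i\ne Z'_i]\bigr)$ is controlled by \emph{Markov's inequality} (probability $\delta/2$), not by Donsker--Varadhan; its MGF factorizes under the coupling as $\prod_i\bigl(1+(e^{\lambda\beta_i}-1)\|\mu-\mu'\|_{TV}\bigr)$, and Pinsker gives $\|\mu-\mu'\|_{TV}\le\tfrac12\sqrt{\gamma}$, which is the source of the $\sqrt\gamma$ inside the logarithm.
\item Optimizing $\lambda$ over the three additive pieces $\tfrac1\lambda\log(2/\delta)+\tfrac{\lambda\sigma^2}{2}+\tfrac1\lambda\gamma$ yields $\lambda=g(\delta/2)$ and the term $\sqrt{2\sigma^2[\log(2/\delta)+\gamma]}$; the remaining $\sqrt{2\sigma^2\gamma}$ comes, as in part one, from the deterministic comparison $\min_w L_\mu(w)\le\min_w L_{\mu'}(w)+\sqrt{2\sigma^2\gamma}$.
\end{enumerate}
So the $\sqrt\gamma$-scaling of the stability term is a coupling\,+\,Pinsker phenomenon, and the only Donsker--Varadhan step is single-letter on the test point. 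Your outline needs to be rewired along these lines.
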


Proof of Theorem \ref{thm:sample_mix_comp} is given in Section \ref{appenDD}.

\begin{remark}

	Equation \eqref{eq101} yields the following inequality on the sample complexities:
	\[
	n(\A,\mathcal{P}_{\gamma},\epsilon,\delta)\le n(\A,\mathcal{P},\epsilon-f(\delta,\gamma),\delta/2)
	\]
	for every $\epsilon\geq f(\delta,\gamma)$.

	\end{remark}

\begin{remark}
While the upper bound \eqref{eq100} is in terms of $\sum_{i=1}^{n}\beta_{i}$, the upper bound \eqref{eq101} is approximately in terms of $\sqrt{\gamma}\sum_{i=1}^{n}\beta_{i}$. Numerical simulations suggest that the bound \eqref{eq101} is better than the bound  \eqref{eq100} if $\beta_{i}\sim \Omega(\frac{1}{\sqrt{n}})$. More specifically, if we can choose $\gamma=o(1/n)$ in order to have non-divergent upper bound when $\beta_{i}\sim \Omega(\frac{1}{\sqrt{n}})$ i.e. there is a trade-off between $\gamma$ and $\beta_{i}$'s in order to have gaurantee on performance of a machine learning algorithm on misspecified model. The regime
$\beta_{i}\sim \Omega(\frac{1}{\sqrt{n}})$ could be of importance, e.g., see  \cite{shalev2010learnability,hardt2016train}.
\end{remark}

\section{Further ideas to improve the rate-distortion upper bound}
\label{further-ideas}
While $\D_1(r)$ (as defined in \eqref{eqnRD1}) is the sharpest 
possible bound on the generalization error given an upper bound $r$ on $I(S';\A(S'))$, the single-letter bound 
$\D_2(r/n)$ in Theorem \ref{Thm2} is not. In fact, the following relaxation is used in  the proof of Theorem \ref{Thm2}: instead of producing one output hypothesis $W$ for the entire sequence $S'=(Z'_1, Z'_2, \cdots, Z'_n)$, we produce $n$ output hypothesis $\tilde{W}_1, \tilde{W}_2, \cdots, \tilde{W}_n$. To tighten the gap between  $\D_1(r)$ and $\D_2(r/n)$, one needs to answer the following question: given a joint distribution $(W,Z'_1, Z'_2, \cdots, Z'_n)$, what are the set of marginal distributions on $(W,Z'_i)$? For instance if $W$ is a binary random variable and $(Z'_1, Z'_2, \cdots, Z'_n)$ are i.i.d., $W$ cannot have high dependence with all of the $Z'_i$'s.\footnote{In particular, using mutual information as the measure of dependence we have the following: for a binary $W$ and a sequence $(Z'_1, \cdots, Z'_n)$ of independent random variables, we have $1\geq I(W;Z'_1,Z'_2,\cdots,Z'_n)\geq \sum_{i}I(W;Z'_i)$. See \eqref{eqnRR12} for a proof. Thus, sum of  correlations between $W$ and $Z'_i$ is no more than one bit. }

Motivated by the above question, in the rest of this section we present a general idea which may be used on its own, or in conjunction with the  ideas in the previous section to improve the upper bound given in Theorem \ref{Thm2}. 
Let $\tilde\ell(w,z)$ be an ``auxiliary" loss function; an arbitrary loss function of our choice which can be different from the original loss function $\ell(w,z)$. We show that the average risk of the ERM algorithm on the auxiliary loss function $\tilde{\ell}$ can be used to bound the generalization error of a different algorithm $\A$, which runs on the same training data as the ERM algorithm, but with the original loss function $\ell(w,z)$. Let
$$\mathsf{ERM}(z'_1, \cdots, z'_n)=\min_{w}\sum_{i=1}^n\frac1n\tilde\ell(w,z'_i)$$
be the risk of the ERM algorithm given a training sequence $s'=(z'_1, z'_2, \cdots, z'_n)$ according to $\tilde\ell$. Let
$$v_n=\mathbb{E}_{S'\sim (\rho')^{\otimes n}}\mathsf{ERM}(Z'_1, \cdots, Z'_n)$$
be the average risk of the ERM algorithm. Let us, for now, assume that $v_n$ is known to us.

Take an arbitrary algorithm $\A$. Let $W'=\A(S')$ Then, the risk of $\A$ with respect to $\tilde{\ell}$ is greater than or equal the risk of the ERM algorithm, \emph{i.e.,}
\begin{align}\mathbb{E}\left[\sum_{i=1}^n\frac1n\tilde\ell(W',Z'_i)\right]\geq v_n\label{eqnArg1}.\end{align}
Let $Q$ be a random variable, independent of all previously defined variables, and uniform on the set $\{1,2,\cdots, n\}$. Set $\tilde{Z}=Z'_{Q}$. Observe that $\tilde{Z}\sim \mu'$ because $Z'_i\sim \mu'$ for all $i$ and $Q$ is independent of $(Z'_1,\cdots, Z'_n)$. 
Using this definition for $\tilde{Z}$, the risk of $\A$ with respect to the loss $\tilde\ell$ equals
\begin{align}\mathbb{E}[\tilde\ell(W',\tilde{Z})]=\mathbb{E}\left[\sum_{i=1}^n\frac1n\tilde\ell(W',Z'_i)\right]\label{eqnArg2}\end{align}
and the generalization error with respect to the loss $\ell$ can be characterized as
\begin{align}	\mathrm{gen}\left(\mu,\mu',\A\right)=\frac1n\sum_{i=1}^n\mathbb{E}\left[L_{\mu}(W')-\ell(W',Z'_i)\right]= \mathbb{E}\left[L_{\mu}(W')-\ell(W',\tilde{Z})\right].\label{eqnArg3}\end{align}
From \eqref{eqnArg1}, \eqref{eqnArg2} and \eqref{eqnArg3} we obtain the following upper bound on the generalization error of $\ell$:
\begin{align}\mathrm{gen}\left(\mu,\mu',\A\right)\le  \max_{P_{\hat W|\tilde{Z}}:~\mathbb{E}[\tilde\ell(\hat W,\tilde{Z})]\geq v_n}
\mathbb{E}\left[L_{\mu}(\hat W)-\ell(\hat W,\tilde{Z})\right]\label{eqnAA}
\end{align}
where $\tilde{Z}\in\mathcal{Z}$ is distributed according to $\mu'$.  The above bound has a similar form as the one given in Theorem \ref{Thm2}. Observe that \eqref{eqnAA} provides a generalization bound on the algorithm $\A$ based on the sole assumption that it uses a training data of size $n$. If more is known about the algorithm, \emph{e.g.} an upper bound on the input and output mutual information, we can write better bounds as follows:

\begin{theorem}\label{thm11}
Let
\begin{equation}
\tilde{\D}_2(r)\triangleq \max_{P_{\hat W|\tilde{Z}}:~I(\hat W;\tilde{Z})\leq r,~~\mathbb{E}[\tilde\ell(\hat W,\tilde{Z})]\geq v_n}
\mathbb{E}\left[L_{\mu}(\hat W)-\ell(\hat W,\tilde{Z})\right].
\end{equation}
Then,
\[
\D_{1}(r)\le\tilde{\D}_2(r/n)\le {\D}_2(r/n).
\]
\end{theorem}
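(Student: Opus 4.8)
The plan is to prove the two inequalities of Theorem~\ref{thm11} separately; the right-hand one is immediate and the left-hand one carries all the content. For $\tilde{\D}_2(r/n)\le\D_2(r/n)$ I would simply note that the optimization defining $\tilde{\D}_2$ ranges over conditional kernels $P_{\hat W\mid\tilde Z}$ (with $\tilde Z\sim\mu'$) that satisfy \emph{both} $I(\hat W;\tilde Z)\le r/n$ \emph{and} $\mathbb{E}[\tilde\ell(\hat W,\tilde Z)]\ge v_n$, whereas the optimization defining $\D_2$ in \eqref{eqnRD22} ranges over the same kernels with only the first constraint; since the two objectives are identical (both equal $\mathbb{E}[L_\mu(\hat W)-\ell(\hat W,\cdot)]$ under the common marginal $\mu'$), maximizing over the smaller feasible set can only decrease the value.

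For $\D_1(r)\le\tilde{\D}_2(r/n)$ I would fix an arbitrary algorithm $\A$ with $I(S';\A(S'))\le r$, set $W'=\A(S')$, and, exactly as in Section~\ref{further-ideas}, introduce $Q$ uniform on $\{1,\dots,n\}$ independent of $(W',S')$ and put $\tilde Z=Z'_Q\sim\mu'$. Inequalities \eqref{eqnArg1}--\eqref{eqnArg3} already supply $\mathbb{E}[\tilde\ell(W',\tilde Z)]\ge v_n$ and $\mathbb{E}[L_\mu(W')-\ell(W',\tilde Z)]=\mathrm{gen}(\mu,\mu',\A)$, so the only thing left to verify is the mutual-information constraint $I(W';\tilde Z)\le r/n$. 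I would obtain this in two steps: first, superadditivity of mutual information over independent coordinates --- the inequality \eqref{eqnRR12} recalled in the footnote --- gives $\sum_{i=1}^n I(W';Z'_i)\le I(W';Z'_1,\dots,Z'_n)=I(W';S')\le r$; second, since $P_{W'\tilde Z}=\frac1n\sum_i P_{W'Z'_i}$ and $P_{W'}\otimes P_{\tilde Z}=\frac1n\sum_i P_{W'}\otimes P_{Z'_i}$ (all the $P_{Z'_i}$ being $\mu'$), joint convexity of relative entropy yields $I(W';\tilde Z)=D\big(P_{W'\tilde Z}\,\big\|\,P_{W'}\otimes P_{\tilde Z}\big)\le\frac1n\sum_i D\big(P_{W'Z'_i}\,\big\|\,P_{W'}\otimes P_{Z'_i}\big)=\frac1n\sum_i I(W';Z'_i)\le r/n$. (Equivalently, $I(W';\tilde Z)\le I(W';\tilde Z,Q)=I(W';\tilde Z\mid Q)=\frac1n\sum_i I(W';Z'_i)$, using $Q\perp W'$ and the chain rule.) Therefore $P_{W'\mid\tilde Z}$ is feasible for the $\tilde{\D}_2(r/n)$ problem and attains objective value $\mathrm{gen}(\mu,\mu',\A)$, whence $\mathrm{gen}(\mu,\mu',\A)\le\tilde{\D}_2(r/n)$; taking the supremum over all such $\A$ and recalling from \eqref{eqnRD1} that $\D_1(r)$ is exactly this supremum gives $\D_1(r)\le\tilde{\D}_2(r/n)$.

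I expect the mutual-information step $I(W';\tilde Z)\le r/n$ to be the only genuine obstacle, and it dissolves once one reads it as superadditivity of $I$ over i.i.d.\ coordinates combined with convexity of $D(\cdot\|\cdot)$, both standard. One minor point to dispatch in passing is that the feasible set of the $\tilde{\D}_2(r/n)$ problem is non-empty (so $\tilde{\D}_2(r/n)>-\infty$ and the bound is meaningful): this is automatic, since any constant output $W'\equiv w_0$ has $I(W';\tilde Z)=0\le r/n$ and satisfies $\mathbb{E}[\tilde\ell(w_0,\tilde Z)]\ge v_n$, because $v_n=\mathbb{E}[\min_w\frac1n\sum_i\tilde\ell(w,Z'_i)]\le\mathbb{E}[\frac1n\sum_i\tilde\ell(w_0,Z'_i)]=\mathbb{E}[\tilde\ell(w_0,\tilde Z)]$.
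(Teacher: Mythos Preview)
Your proposal is correct and follows essentially the same approach as the paper's proof: the right inequality by inclusion of feasible sets, and the left inequality via the time-sharing variable $Q$ combined with superadditivity of mutual information over i.i.d.\ coordinates. Your parenthetical alternative $I(W';\tilde Z)\le I(W';\tilde Z,Q)=I(W';\tilde Z\mid Q)$ is in fact exactly the route the paper takes (the paper writes it as $I(W';Z'_Q)\le I(Q,W';Z'_Q)=I(W';Z'_Q\mid Q)$, using $I(Q;Z'_Q)=0$ rather than $I(W';Q)=0$, but these are equivalent decompositions); your convexity-of-KL version is a clean alternative that yields the same bound.
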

Proof of the Theorem \ref{thm11} can be found in Section \ref{appenEE}.

\begin{example}
Consider the setting in Example \ref{ex1}. Figure \ref{fig:fig2} illustrates this improvement in $\D_2(r/n)$ when  $\tilde{\ell}(w,z)=-\mathbf{1}[w\neq z]$ and $n=10$.
\end{example}
\begin{example}
Consider the setting in Example \ref{ex2}. Figure \ref{fig:fig2} illustrates this improvement in $\D_2(r/n)$ when  $\tilde{\ell}(w,z)=(w-z)^{2}$ and $n=10$.
\end{example}

\begin{figure}
	   \centering
	\includegraphics[scale=1,width=0.8\linewidth]{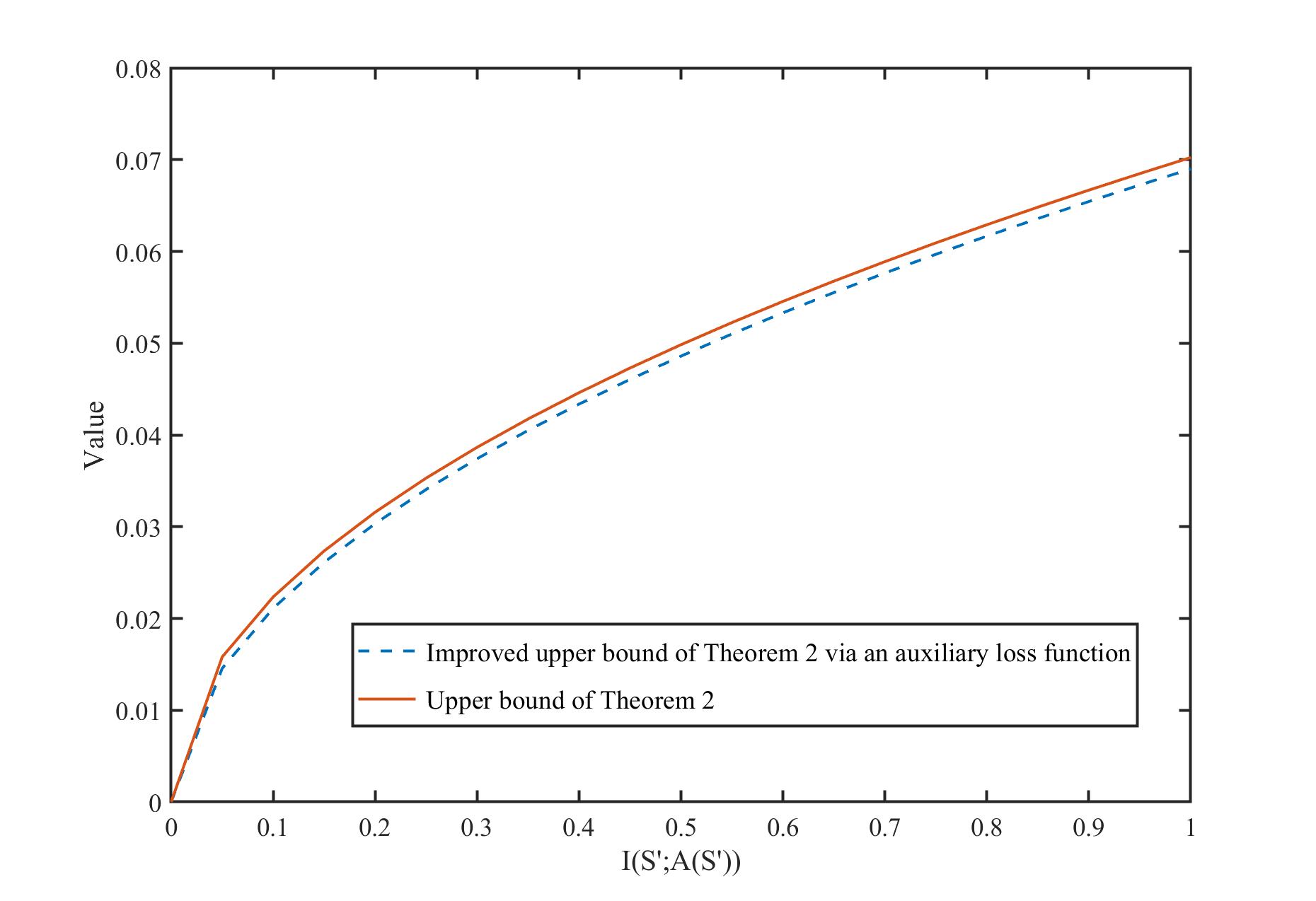}
	\caption{The bound in  Theorem \ref{Thm2} and its improved version via the auxiliary loss function $\tilde{\ell}(w,z)=-\mathbf{1}[w\neq z]$ for $\mathcal{W}=\mathcal{Z}=\{0,1\}$ and $n=10$ and the original loss function $\ell(w,z)=w\cdot z$.}
	\label{fig:fig2}
\end{figure}
\begin{figure}
    \centering
    \includegraphics[scale=1,width=0.8\linewidth]{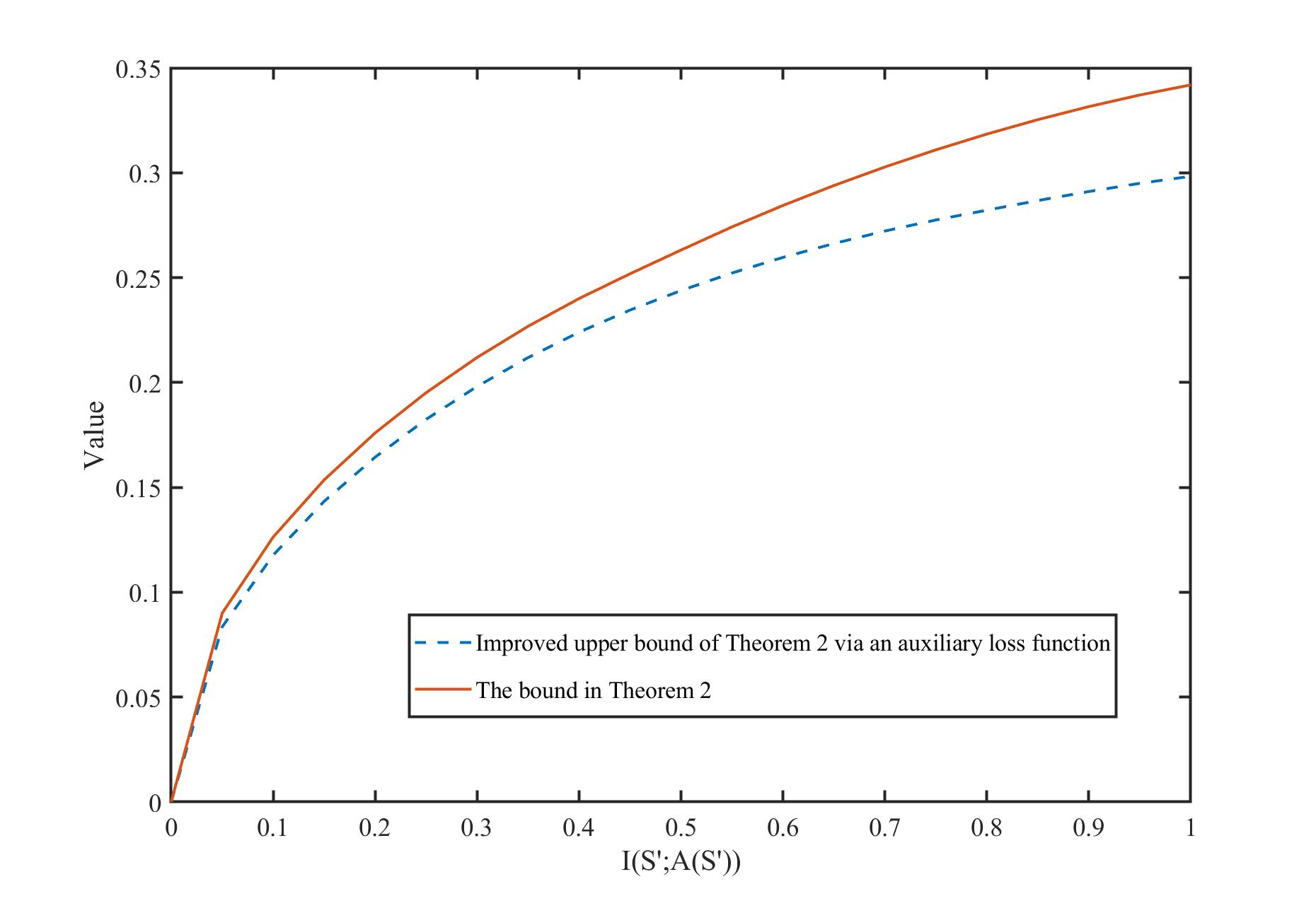}
    \caption{The bound in  Theorem \ref{Thm2} and its improved version via the auxiliary loss function $\tilde{\ell}(w,z)=(w-z)^{2}$ for the learning setting $\mathcal{W}=[0,1],\,\mathcal{Z}=\{0,1\}$ and $n=10$ and the original loss function $\ell(w,z)=|w-z|$.}
    \label{fig:fig2new}
\end{figure}

In order to use the bound in Theorem \ref{thm11}, one must know the value of $v_n$. However, this is not known in practice. For instance, consider the special case of loss function $\tilde{\ell}(w,z)=(w-z)^{2}$. Given a training data $(z'_1, z'_2, \cdots, z'_n)$, the output of the ERM algorithm with the quadratic loss is just the average of the traning data samples and $v_n$ equals \[\dfrac{n-1}{n}\mathsf{Var}_{\mu'}(Z').\]
The variance of the test data is not known, but can be estimated from the training dataset itself. Below we show how to estimate  $v_n$ by running the ERM algorithm on the available training data.  
Assume that the auxiliary loss satisfies $|\tilde{\ell}(w,z)-\tilde{\ell}(w,z')|\leq c$ for all $w,z,z'$. Then, we have
\begin{align*}
    \mathsf{ERM}(z'_1,z'_2,\cdots,z'_n)&=\min_{w}\frac1n\sum_{i=1}^n\tilde\ell(w,z_i)
    \\&\leq \min_{w}\left[\frac{c}{n}+
    \frac1n\tilde\ell(w,z''_1)+
    \frac1n\sum_{i=2}^n\tilde\ell(w,z'_i)
    \right]
    \\&=\frac{c}{n}+\mathsf{ERM}(z''_1,z'_2,\cdots,z'_n).
\end{align*}
Then McDiarmid's inequality implies high concentration around expected value for the ERM algorithm:
$$\mathbb{P}\left[\big|\mathsf{ERM}-\mathbb{E}[\mathsf{ERM}]\big|\geq t\right]\leq 2e^{-\frac{2nt^2}{c^2}}.
$$
Thus, one can find an estimate for $v_n$ with high probability based on the available training data sequence.

At the end, we remark that it is also possible to write bounds based on multiple auxiliary loss functions rather than just one.

\section{Acknowledgment}
The first author is also grateful to Dr.\,\,Mohammad Mahdi Mojahedian for helpful discussions on learning from heterogeneous data in mixture models which gave birth to some ideas in this work.


\section{Proofs of the results}
\label{sec:proofs}
In the following sections we present the proofs of the results stated in the previous section in their order of appearance.

\subsection{Proof of Theorem \ref{Thm2}}
\label{ProofThm2}
Let $\tw=(\tilde{w}_1, \tilde{w}_2, \cdots, \tilde{w}_n)\in\mathcal{W}^n$ be a sequence of length $n$. Let
\begin{align}
\bar{\D}_1(r)&\triangleq \sup_{P_{\tW|S'}:~I(\tW;S')\leq r}
\frac1n\sum_{i=1}^n\mathbb{E}\left[L_{\mu}(\tilde{W}_i)-\ell(\tilde{W}_i,Z'_i)\right]\label{tildeD}
\end{align}
where $S'=(Z'_1, Z'_2, \cdots, Z'_n)$ and
$\tW=(\tilde{W}_1, \tilde{W}_2, \cdots, \tilde{W}_n)$. Observe that if the entries of the vector $\tW$ are all equal, the expression in \eqref{tildeD} reduces to the one  in \eqref{eqnRD1}. Therefore, in \eqref{tildeD} we are taking the supremum over a larger set. Thus, $\bar{\D}_1(r)\geq {\D}_1(r)$. It follows that for any algorithm $\A$ satisfying $I\big(S';\A(S')\big)\leq r$, we have
\[
\mathrm{gen}\left(\mu,\mu',\A\right)
\leq \bar{\D}_1(r).\]
We claim that
$\bar{\D}_1(r)=\D_2(r/n)$. The proof follows similar steps as in \cite[Section 3.6.2]{el2011network} for lossy compression. However, we provide a proof for completeness. We first claim that $\bar{\D}_1(r)\geq \D_2(r/n)$. To see this, take some $P_{\hat W|Z'}$  in \eqref{eqnRD22} and take 
\[p(\tw|s')=\prod_{i=1}^np_{\hat W|Z'}(\tilde{w}_i|z'_i).
\]
This special choice for $P_{\hat W|Z'}$ in \eqref{tildeD} shows that $\bar{\D}_1(r)\geq \D_2(r/n)$. 

It remains to show that $\bar{\D}_1(r)\leq \D_2(r/n)$. Take some arbitrary $P_{\tW|S'}$ satisfying $I(\tW;S')\leq r$. We have
\begin{align}
    r&\geq I(\tW;S')\nonumber
    \\&=\sum_{i}I(\tW;Z'_i|Z'^{i-1})\nonumber
    \\&=\sum_{i}I(\tW,Z'^{i-1};Z'_i)\label{eqnRR1}
    \\&\geq \sum_{i}I(\tilde{W}_i;Z'_i)\label{eqnRR12}
\end{align}
where \eqref{eqnRR1} follows from the fact that $Z'_i$ are i.i.d. random variables. We also have
\begin{align}
    \frac1n\sum_{i=1}^n\mathbb{E}\left[L_{\mu}(\tilde{W}_i)-\ell(\tilde{W}_i,Z'_i)\right]&\leq 
    \frac1n\sum_{i=1}^n\D_2(I(\tilde{W}_i;Z'_i))\label{eqnpp1}
    \\&
    \leq 
    \D_2\left(\frac1n\sum_{i=1}^nI(\tilde{W}_i;Z'_i)\right)\label{eqnpp2}
    \\&\leq \D_2\left(\frac{r}{n}\right)\label{eqnpp3}
\end{align}
where \eqref{eqnpp1} follows from
the definition of $\D_2$, \eqref{eqnpp2} follows from concavity of $\D_2(\cdot)$, and \eqref{eqnpp3}
follows from \eqref{eqnRR12} and the fact that $\D_2(\cdot)$ is an increasing function. Concavity of $\D_2(\cdot)$ follows from the fact that mutual information $I(\tW;S')$ is convex in $P_{\tW|S'}$ for a fixed distribution on $P_{S'}$.

Since $P_{\tW|S'}$ was an arbitrary conditional distribution satisfying $I(\tW;S')\leq r$, we deduce from \eqref{eqnpp1}-\eqref{eqnpp3} that $\D_2(r/n)\geq\bar{\D}_1(r)$ as desired.

The cardinality bounds on the auxiliary random variable $\hat{W}$ in the definition of $\D_2$ comes from the standard Caratheodory-Bunt \cite{bun34} arguments and is omitted.

\subsection{Proof of Theorem \ref{th3}}
\label{appenB}

Given the distribution $\zeta(x)$ and some arbitrary conditional distribution $\zeta(\hat{x}|x)$, let $\zeta(x,\hat{x})=\zeta(\hat{x}|x)\zeta(x)$. Set $q(x,\hat{x})= \eta(x)\zeta(\hat{x})$ and $f(x,\hat{x})=\lambda d(x,\hat{x})$ where $-b<\lambda<0$.
From the Donsker-Varadhan representation, we obtain that 
\begin{align}\label{eq1234}
D(\zeta_{X,\hat{X}}\|q_{X,\hat{X}})\ge\lambda\mathbb{E}_{\zeta}[d(\hat{X},X)]-\log\mathbb{E}_{q}\left[e^{\lambda(d(X,\hat{X}))}\right],
\end{align}
Using independence of $X$ and $\hat{X}$ under $q$ we can write for $-b<\lambda<0$,
\[
\log\mathbb{E}_{q}\left[e^{\lambda(d(X,\hat{X}))}\right]
=
\log\mathbb{E}_{\hat X\sim\zeta}\left\{\mathbb{E}_{X\sim\eta}\left[e^{\lambda(d(X,\hat{X}))}\right]\right\}
\leq \sup_{\hat{x}}\log \mathbb{E}_{\eta}\left[e^{\lambda d(X,\hat{x})}\right]\leq \phi(\lambda).
\]
Then from \eqref{eq1234} and consider $\lambda<0$,
\[
\mathbb{E}_{\zeta}[d(\hat{X},X)]\ge \frac{1}{\lambda}D(\zeta_{X,\hat{X}}\|q_{X,\hat{X}})+\frac{1}{\lambda}\phi(\lambda).
\]

Moreover, $D(\zeta_{X,\hat{X}}\|q_{X,\hat{X}})=I_{\zeta}(\hat{X};X)+D(\zeta_X\|\eta_X)$ and $I_{\zeta}(\hat{X};X)\le r$. Thus,
\[
\mathbb{E}_{\zeta}[d(\hat{X},X)]\ge \sup_{-b<\lambda<0}\left\lbrace\frac{1}{\lambda}\left[r+D(\zeta_X\|\eta_X)\right]+\frac{1}{\lambda}\phi(\lambda)\right\rbrace.
\]
In conclusion,
 \begin{align}
      \inf_{P_{\hat{X}|X}:\,I_{\zeta}(\hat{X};X)\le r}\mathbb{E}_{ \zeta}\left[d(X,\hat{X})\right]\ge \sup_{-b<\lambda<0}\left\lbrace\frac{1}{\lambda}\left[r+D(\zeta_X\|\eta_X)\right]+\frac{1}{\lambda}\phi(\lambda)\right\rbrace.
    \end{align}
\subsection{Proof of Theorem \ref{Thm2b2}}
\label{appenThm2b2}
The inequality
\[\frac1n\sum_{i=1}^n \D_2\left(I(Z'_i;\A(S'))\right)\le \frac{1}{n}\sum_{i=1}^{n}\sqrt{{2\sigma^{2}}\left[I\big(Z_{i}';\A(S')\big)+D(\mu'\|\mu)\right]}.
\]
follows from Corollary \ref{cor1}. To show the inequality
\[
\mathrm{gen}\left(\mu,\mu',\A\right)\le \frac1n\sum_{i=1}^n \D_2\left(I(Z'_i;\A(S'))\right)
\]
take some algorithm $\A$ and let $W'=\A(S'^n)$. Then, 
\begin{align}
   \mathrm{gen}\left(\mu,\mu',\A\right)&= \frac1n\sum_{i=1}^n\mathbb{E}\left[L_{\mu}(W')-\ell(W',Z'_i)\right]\leq 
    \frac1n\sum_{i=1}^n\D_2(I(W';Z'_i))\label{eqnpp1dd}
\end{align}
where \eqref{eqnpp1dd} follows from
the definition of $\D_2$.

\subsection{Proof of Theorem \ref{thm4}}
\label{Sec:ProofThm4}

It suffices to prove the lower bound when a fixed distribution $P_{W'}\in \mathcal{M}$ is chosen for the output of the algorithm because a minimum can be taken over all $P_{W'}\in \mathcal{M}$ from both sides of the desired inequality at the end. We have
\begin{align}
	\D_3(r)=&\min_{{P_{W'S'}\in U(P_{W'},P_{S'})}: I(W';S')\le r }\mathbb{E}\left[L_{\mu}(W')-L_{S'}(W')\right]
	\\
	&=
	\min_{{P_{W'S'}\in U(P_{W'},P_{S'})}}\max_{\lambda\geq 0}\mathbb{E}\left[L_{\mu}(W')-L_{S'}(W')\right]+\lambda D(P_{W'S'}\|P_{W'}P_{S'})-\lambda r
	\\
	&\geq
	\max_{\lambda\geq 0}\min_{{P_{W'S'}\in U(P_{W'},P_{S'})}}\mathbb{E}\left[L_{\mu}(W')-L_{S'}(W')\right]+\lambda D(P_{W'S'}\|P_{W'}P_{S'})-\lambda r
	\\
	&\geq 
	\max_{\lambda\geq 0}\left[\D_3(0)+\lambda-\lambda\psi(1/(n\lambda))^{n}-\lambda r\right]\label{eqn:frl}
	\\
	&=\D_3(0)-
	\min_{\lambda\geq 0}\left[\lambda r+\lambda (\psi(1/(n\lambda))^{n}-1)\right].
\end{align}
where \eqref{eqn:frl} follows from Lemma \ref{lemma:frl}.

\begin{lemma}\label{lemma:frl}
\label{lowerbound_on_gen}
	Let $\ell(W',z)$ satisfies
	 \begin{align*}
     \psi(\lambda)\geq \mathbb{E}_{P_{W'}}\left[e^{\lambda\left[\ell(W',z)-\mathbb{E}_{P_{W'}}[\ell(W',z)]\right]}\right], \quad\forall z\in \mathcal{Z}.
\end{align*}
	Then, for any $\lambda\geq 0$
	\begin{align}
	\min_{{P_{W'S'}\in U(P_{W'},P_{S'})}}\mathbb{E}_{P_{W'S'}}\left[L_{\mu}(W')-L_{S'}(W')\right]+\lambda D(P_{W'S'}\|P_{W'}P_{S'})
	\geq 
	\D_3(0)-
    \lambda (\psi(1/(n\lambda))^{n}-1).\label{eqnLemmal1}
	\end{align}
\end{lemma}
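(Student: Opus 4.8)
The plan is to fix any $\lambda>0$ and any coupling $P_{W'S'}\in U(P_{W'},P_{S'})$, establish the inequality for this coupling, and then take the infimum over couplings; for $\lambda=0$ the right‑hand side is $-\infty$, so nothing is claimed. First I would clear away the bookkeeping. Because the coupling has marginals $P_{W'}$ and $P_{S'}=(\mu')^{\otimes n}$, we have $D(P_{W'S'}\|P_{W'}P_{S'})=I(W';S')$ and $\mathbb{E}[L_\mu(W')]=\mathbb{E}_{P_{W'}}[L_\mu(W')]$; moreover $\D_3(0)$ is attained at the product coupling (the only one with $I(W';S')\le 0$), so $\D_3(0)=\mathbb{E}_{P_{W'}}[L_\mu(W')]-c_0$ where $c_0:=\mathbb{E}_{W'\sim P_{W'},\,Z\sim\mu'}[\ell(W',Z)]$. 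After these substitutions the claim reduces to
\[
\mathbb{E}[L_{S'}(W')]\ \le\ c_0+\lambda\,I(W';S')+\lambda\bigl(\psi(1/(n\lambda))^{n}-1\bigr).
\]

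The main step is a coordinate‑wise change of measure. For each $i\in\{1,\dots,n\}$ I would apply the Donsker--Varadhan inequality (Lemma \ref{lemma1}) to the pair $(W',Z'_i)$, with reference measure the product $P_{W'}\otimes\mu'$ and test function $\phi_i(w,z)=\tfrac{1}{n\lambda}\bigl(\ell(w,z)-\mathbb{E}_{P_{W'}}[\ell(W',z)]\bigr)$. Since $D\bigl(P_{W'Z'_i}\,\|\,P_{W'}\otimes\mu'\bigr)=I(W';Z'_i)$, since averaging $\mathbb{E}_{P_{W'}}[\ell(W',z)]$ over $z\sim\mu'$ gives exactly $c_0$, and since the hypothesis on $\psi$ yields $\mathbb{E}_{P_{W'}\otimes\mu'}[e^{\phi_i}]=\mathbb{E}_{Z\sim\mu'}\bigl[\mathbb{E}_{P_{W'}}[e^{\frac{1}{n\lambda}(\ell(W',Z)-\mathbb{E}_{P_{W'}}[\ell(W',Z)])}]\bigr]\le\psi(1/(n\lambda))$, rearranging the Donsker--Varadhan bound gives $\mathbb{E}[\ell(W',Z'_i)]\le c_0+n\lambda\,I(W';Z'_i)+n\lambda\ln\psi(1/(n\lambda))$.

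Finally I would average these $n$ estimates. Using $L_{S'}(W')=\tfrac1n\sum_{i=1}^n\ell(W',Z'_i)$ and the super‑additivity $\sum_{i=1}^n I(W';Z'_i)\le I(W';Z'_1,\dots,Z'_n)=I(W';S')$ — which holds because the $Z'_i$ are i.i.d.\ under $P_{S'}$, i.e.\ exactly the chain‑rule computation in \eqref{eqnRR1}--\eqref{eqnRR12} — one gets $\mathbb{E}[L_{S'}(W')]\le c_0+\lambda\,I(W';S')+\lambda n\ln\psi(1/(n\lambda))$; and $n\ln\psi(1/(n\lambda))=\ln\bigl(\psi(1/(n\lambda))^{n}\bigr)\le\psi(1/(n\lambda))^{n}-1$ by the elementary bound $\ln x\le x-1$, which is precisely the reduced claim. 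I expect this last combination step to be the only delicate point: the per‑coordinate Donsker--Varadhan estimates live at the level of the individual $I(W';Z'_i)$, and replacing $\sum_i I(W';Z'_i)$ by the single quantity $I(W';S')$ appearing in the statement genuinely uses the independence of the $Z'_i$; the remaining manipulations are routine.
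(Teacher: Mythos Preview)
Your proof is correct and takes a genuinely different route from the paper's. The paper argues via Lagrangian duality for the entropy-regularized transport problem: it computes the Legendre dual $\phi^*(y)=e^y-1$ of $\phi(x)=x\log x-x+1$, invokes weak duality to obtain, for any dual potentials $f,g$,
\[
\mathbb{E}[f(W')]+\mathbb{E}[g(S')]+\lambda-\lambda\,\mathbb{E}_{P_{W'}P_{S'}}\!\bigl[e^{(f(W')+g(S')-\Delta(W',S'))/\lambda}\bigr]\ \le\ \text{LHS of \eqref{eqnLemmal1}},
\]
plugs in $f(w')=L_\mu(w')$, $g(s')=-\mathbb{E}_{P_{W'}}[L_{s'}(W')]$, and then factorizes the resulting exponential moment over the $n$ coordinates of $S'$ under the product law $P_{W'}P_{S'}$ to produce the $\psi^n$ term.

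Your approach bypasses the optimal-transport duality entirely and stays within tools already used elsewhere in the paper: a per-coordinate Donsker--Varadhan bound, the chain-rule inequality $\sum_i I(W';Z'_i)\le I(W';S')$ from \eqref{eqnRR1}--\eqref{eqnRR12}, and the relaxation $\ln x\le x-1$. Two small dividends of your route: first, it yields the slightly sharper intermediate bound with $n\lambda\ln\psi(1/(n\lambda))$ in place of $\lambda(\psi(1/(n\lambda))^n-1)$ before the final relaxation; second, it avoids the factorization step $\mathbb{E}_{P_{S'}P_{W'}}[\exp(\cdot)]=\prod_i\mathbb{E}_{P_{Z'_i}P_{W'}}[\exp(\cdot)]$ that the paper writes as an equality --- a step that is delicate because the same $W'$ appears in every factor --- whereas you only use independence of the $Z'_i$ through the mutual-information chain rule, which is unambiguous.
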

\begin{proof}Assume that $W'\sim \zeta$ and $S'\sim \beta$ are the marginal distributions of $W'$ and $S'$. Setting  $$\Delta(w,s)=L_{\mu}(w)-L_{s}(w),$$ we can express the left hand side of \eqref{eqnLemmal1} as
\begin{align}
	\min_{(W',S')\sim\pi\in U(\zeta,\beta)}\mathbb{E}_{\pi}\Delta(W',S')+\lambda \int_{\mathcal{W}\times \mathcal{Z}^{\otimes n}}\phi\left(\frac{d\pi(w',s')}{d\zeta(w')d\beta(s')}\right)d\zeta(w')d\beta(s')
\end{align} 
where $\phi(x)=x\log(x)-x+1$.
We find the dual problem of the above optimization problem. 
Introducing the Lagrange multipliers $f$
and $g$ associated to the constraints, the Lagrangian reads
\begin{align*}
	\mathcal{L}(\lambda,\zeta,\beta)=&\mathbb{E}_{\pi}\Delta(W',S')+\lambda \int_{\mathcal{W}\times \mathcal{Z}^{\otimes n}}\phi\left(\frac{d\pi(w',s')}{d\zeta(w')d\beta(s')}\right)d\zeta(w')d\beta(s')
	\\&+\int_{\mathcal{W}}f(w')\left(d\zeta(w')-\int_{\mathcal{Z}^{\otimes n}}d\pi(w',s')\right)\\&+\int_{\mathcal{Z}^{\otimes n}}g(s')\left(d\beta(s')-\int_{\mathcal{W}}d\pi(w',s')\right).
\end{align*}
The dual Lagrange function is given by $\min_{\pi}	\mathcal{L}(\lambda,\alpha,\beta)$ over all $\pi(w',s')\geq 0$. Note that in computing the minimum we do not require $\sum_{w',s'}\pi(w',s')=1$. Observe that
\begin{align*}
	&\min_{\pi}	\mathcal{L}(\lambda,\zeta,\beta)
	\\&=\int_{\mathcal{W}}f(w')d\zeta(w')+\int_{\mathcal{Z}^{\otimes n}}g(s')d\beta(s')\\
	&+\lambda\min_{\pi}\left(\int_{\mathcal{W}\times \mathcal{Z}^{\otimes n}}\left(\phi\left(\frac{d\pi(w',s')}{d\zeta(w')d\beta(s')}\right)+\frac{\Delta(w',s')-f(w')-g(s')}{\lambda}\frac{d\pi(w',s')}{d\zeta(w')d\beta(s')}\right)d\zeta(w')d\beta(s')\right)\\
	&=\int_{\mathcal{W}}f(w')d\zeta(w')+\int_{\mathcal{Z}^{\otimes n}}g(s')d\beta(s')-\lambda\int_{\mathcal{W}\times \mathcal{Z}^{\otimes n}}\phi^{*}\left(\frac{f(w')+g(s')-\Delta(w',s')}{\lambda}\right)d\zeta(w')d\beta(s'),
\end{align*}
where $\phi^{*}$ is the Legendre transform of $\phi$ given by
\[
\phi^{*}(y)=\sup_{x\geq 0}\big[xy-\phi(x)\big]=e^{y}-1.
\]
Thus, we obtain
\begin{align}
\min_{\pi}	\mathcal{L}(\lambda,\zeta,\beta)=
	\mathbb{E}[g(S')]+\mathbb{E}[f(W')]+\lambda-\lambda\mathbb{E}_{P_{S'}P_{W'}}\left[\exp\left(-\frac{\Delta(W',S')-g(S')-f(W')}{\lambda}\right)\right].
\end{align}
From weak duality, for every continuous functions $f$ and $g$,
\begin{align}
	&\mathbb{E}[g(S')]+\mathbb{E}[f(W')]+\lambda-\lambda\mathbb{E}_{P_{S'}P_{W'}}\left[\exp\left(-\frac{\Delta(W',S')-g(S')-f(W')}{\lambda}\right)\right]
	\nonumber\\
	&\le \min_{{P_{W'S'}\in U(P_{W'},P_{S'})}}\mathbb{E}_{P_{W'S'}}\left[L_{\mu}(W')-L_{S'}(W')\right]+\lambda D(P_{W'S'}\|P_{W'}P_{S'}).	
\end{align}
Assigning $g(s')=-\mathbb{E}_{W'\sim\zeta}L_{s'}(W')$  and $f(w')=L_{\mu}(w')$ and using the fact that 
$\Delta(w,s)=L_{\mu}(w)-L_{s}(w)$
, we obtain
\begin{align}
	&\mathbb{E}[L_{\mu}(W')-L_{\mu'}(W')]+\lambda-\lambda\mathbb{E}_{P_{S'}P_{W'}}\left[\exp\left(\frac{L_{S'}(W')-\mathbb{E}_{P_{W'}}L_{S'}(W')}{\lambda}\right)\right]
	\nonumber\\
	&\le \min_{{P_{W'S'}\in U(P_{W'},P_{S'})}}\mathbb{E}_{P_{W'S'}}\left[L_{\mu}(W')-L_{S'}(W')\right]+\lambda D(P_{W'S'}\|P_{W'}P_{S'}).
\end{align}
We give an upper bound for the exponential term as follows:
\begin{align*}
    &\mathbb{E}_{P_{S'}P_{W'}}\left[\exp\left(\frac{L_{S'}(W')-\mathbb{E}_{P_{W'}}L_{S'}(W')}{\lambda}\right)\right]= \prod_{i=1}^{n}\mathbb{E}_{P_{Z'_{i}}P_{W'}}\left[\exp\left(\frac{\ell(W',Z'_{i})-\mathbb{E}_{P_{W'}}\ell(W',Z'_{i})}{n\lambda}\right)\right]\nonumber\\
    &\le \prod_{i=1}^{n}\sup_{z'\in\mathcal{Z}}\mathbb{E}_{P_{W'}}\left[\exp\left(\frac{\ell(W',z')-\mathbb{E}_{P_{W'}}\ell(W',z')}{n\lambda}\right)\right]\le \left(\psi\left(\frac{1}{n\lambda}\right)\right)^{n}.
\end{align*}
Thus,
\begin{align*}
	&\min_{{P_{W'S'}\in U(P_{W'},P_{S'})}}\mathbb{E}_{P_{W'S'}}\left[L_{\mu}(W')-L_{S'}(W')\right]+\lambda D(P_{W'S'}\|P_{W'}P_{S'})
	 \nonumber\\
&\geq\mathbb{E}[L_{\mu}(W')-L_{\mu'}(W')]+\lambda-\lambda \left(\psi\left(\frac{1}{n\lambda}\right)\right)^{n}=
\D_3(0)+\lambda-\lambda \left(\psi\left(\frac{1}{n\lambda}\right)\right)^{n}.
\end{align*}
\end{proof}

\subsection{Proof of Theorem \ref{Thm2b}}
\label{appenDE}
The proof is similar to the proof of Theorem \ref{Thm2}. As in Section
\ref{ProofThm2}, we let $\tw=(\tilde{w}_1, \tilde{w}_2, \cdots, \tilde{w}_n)\in\mathcal{W}^n$ be a sequence of length $n$. Let
\begin{align}
\tilde{\D}_3(r)&\triangleq \inf
\frac1n\sum_{i=1}^n\mathbb{E}\left[L_{\mu}(\tilde{W}_i)-\ell(\tilde{W}_i,Z'_i)\right]\label{eqnRD222}
\end{align}
where the infimum is over $P_{\tW,S'}$ satisfying $ P_{\tilde{W}_i,S'}\in U(P_{W'},P_{S'})$ and $I(\tW;S')\leq r$. Observe that if the entries of the vector $\tW$ are all equal, the expression in \eqref{eqnRD222} reduces to the one  in \eqref{eqnRD22b}. Therefore, in \eqref{eqnRD222} we are taking the infimum over a larger set. Thus, $\tilde{\D}_3(r)\leq {\D}_3(r)$. It follows that for any algorithm $\A$ satisfying $I\big(S';\A(S')\big)\leq r$, we have
\[
\mathrm{gen}\left(\mu,\mu',\A\right)
\geq \tilde{\D}_3(r).\]
We claim that
$\tilde{\D}_3(r)=\D_3(r/n)$. The rest of the proof follows similar lines as in the proof of Theorem \ref{Thm2} given in Section
\ref{ProofThm2}. Thus, it is omitted.


\subsection{Proof of Theorem \ref{lhighprobability_on_gen}}
\label{Sec:highProb}

We would like to bound ${L}_{\mu}(\A(S'))-L_{S'}(\A(S'))$ from above. Let $S=(Z_1, Z_2, \cdots, Z_n)$ be distributed according to $\mu^{\otimes n}$, while $S'=(Z'_1, Z'_2, \cdots, Z'_n)$ was distributed according to $(\mu')^{\otimes n}$. From the subgaussian assumption, we have
\begin{align}
	\mathbb{E}_{P_{S}P_{W'}}\left[\exp\left(\lambda L_{\mu}(W')-\lambda {L}_{S}(W')\right)\right]\le \exp\left(\frac{\lambda^{2}\sigma^{2}}{2n}\right).
\end{align}
Using a change of measure argument, we obtain \begin{align}
	\mathbb{E}_{P_{W'S'}}\left[\exp\left(\lambda L_{\mu}(W')-\lambda {L}_{S'}(W')-\frac{\lambda^{2}\sigma^{2}}{2n}-\log\frac{dP_{W'S'}}{dP_{W'}dP_{S}}\right)\right]\le 1.
\end{align}
Using Markov's inequality $\mathbb{P}[X>\frac{1}{\delta}]<\mathbb{E}[X]\delta$, we deduce
\begin{align}
	\mathbb{P}_{W'S'}\left[\exp\left(\lambda L_{\mu}(W')-\lambda {L}_{S'}(W')-\frac{\lambda^{2}\sigma^{2}}{2n}-\log\frac{dP_{W'S'}}{dP_{W'}dP_{S}}\right)\ge \frac{1}{\delta}\right]\le \delta.
\end{align}
Equivalently,
\begin{align}
	\mathbb{P}_{W'S'}\left[\lambda L_{\mu}(W')-\lambda {L}_{S'}(W')\ge \frac{\lambda^{2}\sigma^{2}}{2n}+\log\frac{dP_{W'S'}}{dP_{W'}dP_{S}}+\log\left(\frac{1}{\delta}\right)\right]\le \delta.
\end{align}
Thus, the following inequality holds with probability at least $1-\frac{\delta}{2}$:
\begin{align}\label{eq99}
	L_{\mu}(W')- {L}_{S'}(W')\le_{1-\frac{\delta}{2}} \frac{\lambda\sigma^{2}}{2n}+\frac{1}{\lambda}\log\frac{dP_{W'S'}}{dP_{W'}dP_{S}}+\frac{1}{\lambda}\log\left(\frac{2}{\delta}\right).
\end{align}
Using Chernoff's bound on $\log\left(\frac{dP_{W'S'}}{dP_{W'}dP_{S'}}\right)$, we get for $1<\alpha$,
\[
\mathbb{P}_{W'S'}\left[\log\left(\frac{dP_{W'S'}}{dP_{W'}dP_{S}}\right)\ge t\right]\le\frac{\mathbb{E}_{P_{W'S'}}\left[e^{(\alpha-1)\log\left(\frac{dP_{W'S'}}{P_{W'}P_{S}}\right)}\right]}{e^{(\alpha-1)t}}=e^{(\alpha-1)\left(D_{\alpha}(P_{W'S'}\|P_{W'}P_{S})-t\right)}.
\]

Thus,
\[
\log\left(\frac{dP_{W'S'}}{dP_{W'}dP_{S}}\right)\le_{1-\frac{\delta}{2}}D_{\alpha}(P_{W'S'}\|P_{W'}P_{S}) +\frac{1}{\alpha-1}\log\left(\frac{2}{\delta}\right).
\]
For the case of no-mismatch, the above equation together with \eqref{eq99} recovers the result of \cite{esposito2019generalization} once we optimize  over $\lambda$. 

We use Lemma \ref{lemm4} to show the following inequality:
\[
D_{\alpha}(P_{W'S'}||P_{W'}P_S)\leq D_{1+(\alpha-1)p}(P_{W'S'}\|P_{W'}P_{S'})+nD_{1+(\alpha-1)q}(\mu'\|\mu),
\]
where $p$ and $q$ are non-negative and Holder conjugate. Therefore, combining with \eqref{eq99}, we get
\begin{align*}
	L_{\mu}(W')- {L}_{S'}(W')\le_{1-\delta} \frac{\lambda\sigma^{2}}{2n}+\frac{1}{\lambda}D_{1+(\alpha-1)p}(P_{W'S'}\|P_{W'}P_{S'})+\frac{n}{\lambda}D_{1+(\alpha-1)q}(\mu'\|\mu)+\frac{\alpha}{(\alpha-1)\lambda}\log\left(\frac{2}{\delta}\right).
\end{align*} 
Optimizing over $\lambda$ yields
\begin{align}\label{1000}
	L_{\mu}(W')- {L}_{S'}(W')\le_{1-\delta} \sqrt{2\sigma^{2}D_{1+(\alpha-1)q}(\mu'\|\mu)+\frac{2\sigma^{2}\left[D_{1+(\alpha-1)p}(P_{W'S'}\|P_{W'}P_{S'})+\frac{\alpha}{\alpha-1}\log\left(\frac{2}{\delta}\right)\right]}{n}}.
\end{align}

Then with $\alpha=\frac{3}{2}$ and $p=q=2$, we get from equation \eqref{1000},
\[
 \mathbb{P}[|\mathrm{gen}_{\mu}(W',S')|\geq \eta]\leq 2\exp\left(-\frac{n\left(\frac{\eta^{2}}{2}-\sigma^{2}D_{2}(\mu'\|\mu)\right)-\sigma^{2}D_{2}(P_{W'S'}\|P_{W'}P_{S'})}{3\sigma^{2}}\right).
\]
\begin{lemma}
For $1<\alpha,p,q<\infty$ with $\frac{1}{p}+\frac{1}{q}=1$, we have
\[
D_{\alpha}(P_{W'S'}||P_{W'}P_S)\leq D_{1+(\alpha-1)p}(P_{W'S'}\|P_{W'}P_{S'})+nD_{1+(\alpha-1)q}(\mu'\|\mu).
\]\label{lemm4}
\end{lemma}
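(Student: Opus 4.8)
The plan is to factor the likelihood ratio $\frac{dP_{W'S'}}{dP_{W'}P_S}$ and apply H\"older's inequality with the conjugate exponents $(p,q)$. Write $P_S=\mu^{\otimes n}$ and $P_{S'}=(\mu')^{\otimes n}$, and set
\[
u=\frac{dP_{W'S'}}{dP_{W'}P_{S'}},\qquad v=\frac{dP_{W'}P_{S'}}{dP_{W'}P_S}=\frac{d(\mu')^{\otimes n}}{d\mu^{\otimes n}},
\]
so that $\frac{dP_{W'S'}}{dP_{W'}P_S}=uv$; note that $v$ depends only on the $S'$-coordinate. (If the required absolute continuities fail, the right-hand side of the claimed inequality is $+\infty$ and there is nothing to prove, so we may assume $P_{W'S'}\ll P_{W'}P_{S'}$ and $\mu'\ll\mu$.)

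The essential step is to use the ``numerator-side'' representation of the R\'enyi divergence: for $\beta>1$ and measures $R\ll Q$,
\[
\exp\!\big((\beta-1)D_\beta(R\|Q)\big)=\int\Big(\tfrac{dR}{dQ}\Big)^{\beta}dQ=\int\Big(\tfrac{dR}{dQ}\Big)^{\beta-1}dR .
\]
Applying this with $\beta=\alpha$, $R=P_{W'S'}$, $Q=P_{W'}P_S$ gives $\exp\!\big((\alpha-1)D_\alpha(P_{W'S'}\|P_{W'}P_S)\big)=\int u^{\alpha-1}v^{\alpha-1}\,dP_{W'S'}$. H\"older's inequality against the probability measure $P_{W'S'}$ with exponents $p,q$ then bounds this by $\big(\int u^{(\alpha-1)p}\,dP_{W'S'}\big)^{1/p}\big(\int v^{(\alpha-1)q}\,dP_{W'S'}\big)^{1/q}$. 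The first integral equals $\exp\!\big((\alpha-1)p\,D_{1+(\alpha-1)p}(P_{W'S'}\|P_{W'}P_{S'})\big)$ by the same representation with $\beta=1+(\alpha-1)p$. For the second, since $v$ is a function of $S'$ alone and the $S'$-marginal of $P_{W'S'}$ is $(\mu')^{\otimes n}$, it equals $\int\big(\tfrac{d(\mu')^{\otimes n}}{d\mu^{\otimes n}}\big)^{(\alpha-1)q}\,d(\mu')^{\otimes n}=\exp\!\big((\alpha-1)q\,D_{1+(\alpha-1)q}((\mu')^{\otimes n}\|\mu^{\otimes n})\big)$, and by additivity of R\'enyi divergence over product measures, $D_{1+(\alpha-1)q}((\mu')^{\otimes n}\|\mu^{\otimes n})=n\,D_{1+(\alpha-1)q}(\mu'\|\mu)$. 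Substituting, taking logarithms, and dividing by $\alpha-1>0$ yields exactly the stated bound.

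The step I expect to be the main obstacle is getting the exponents to land correctly. One must integrate against the numerator measure $P_{W'S'}$ (the $\int(\cdot)^{\beta-1}dR$ form) rather than against $P_{W'}P_S$: the latter choice combined with the same H\"older split would produce a divergence of order $\alpha p$ instead of the desired $1+(\alpha-1)p$, which is strictly larger (for $p>1$) and does not match the statement. Beyond this, the remaining ingredients are routine bookkeeping: recognizing each H\"older factor as an exponentiated R\'enyi divergence, invoking the tensorization identity $D_\gamma(\nu_1^{\otimes n}\|\nu_2^{\otimes n})=nD_\gamma(\nu_1\|\nu_2)$, and dispatching the degenerate non-absolutely-continuous cases at the outset.
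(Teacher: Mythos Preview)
Your proposal is correct and follows essentially the same route as the paper: factor $\frac{dP_{W'S'}}{dP_{W'}P_S}=\frac{dP_{W'S'}}{dP_{W'}P_{S'}}\cdot\frac{dP_{S'}}{dP_S}$, integrate against the numerator measure $P_{W'S'}$ (the $\int(\cdot)^{\beta-1}dR$ form), apply H\"older with exponents $(p,q)$, and identify each factor as the appropriate R\'enyi divergence. Your write-up is in fact slightly more complete than the paper's, since you explicitly handle the absolute-continuity degeneracies and spell out the tensorization step $D_{1+(\alpha-1)q}((\mu')^{\otimes n}\|\mu^{\otimes n})=nD_{1+(\alpha-1)q}(\mu'\|\mu)$.
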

\begin{proof}
We use Holder's inequality for $1<p,q<\infty$ in following inequality:
\begin{align*}
    &\exp\left((\alpha-1)D_{\alpha}(P_{W'S'}||P_{W'}P_S)\right)=\int\left(\frac{dP_{W'S'}}{dP_{W'}dP_S}\right)^{\alpha-1}dP_{W'S'}
    \\&=\int\left(\frac{dP_{W'S'}}{dP_{W'}dP_S'}\frac{dP_S'}{dP_S}\right)^{\alpha-1}dP_{W'S'}\nonumber\\
    &\le \left(\int\left(\frac{dP_{W'S'}}{dP_{W'}dP_S'}\right)^{p(\alpha-1)}dP_{W'S'}\right)^{\frac{1}{p}}\left(\int\left(\frac{dP_S'}{dP_S}\right)^{q(\alpha-1)}dP_{W'S'}\right)^{\frac{1}{q}}
    \\
    &=\exp\left((\alpha-1)D_{1+p(\alpha-1)}(P_{W'S'}\|P_{W'}P_{S'})\right)\exp\left((\alpha-1)D_{1+q(\alpha-1)}(P_{S'}\|P_{S})\right).
\end{align*}
Then we get,
\[
D_{\alpha}(P_{W'S'}||P_{W'}P_S)\le D_{1+p(\alpha-1)}(P_{W'S'}\|P_{W'}P_{S'})+nD_{1+q(\alpha-1)}(\mu'\|\mu).
\]
\end{proof}

\subsection{Proof of Theorem \ref{thm8}}
\label{sec:proof8}
\begin{proof}
Let $w_{0}=\argmin_{w\in\mathcal{W}}L_{\mu}(w)$. Let $\A^*$ be an algorithm that outputs $w_0$ regardless of the training data sequence. Then, using Theorem \ref{lhighprobability_on_gen} with $\A^*$ we obtain
\begin{align}
    L_{S'}(w_{0})-L_{\mu}(w_{0})\le_{1-\delta}\sqrt{2\sigma^{2}D_2(\mu'\|\mu)+2\sigma^{2}\frac{\log\left(\frac{2}{\delta}\right)}{n}}.
\end{align}
On the other hand, from the definition of the ERM algorithm we have
\begin{align}
    L_{S'}(\A_{\mathrm{ERM}}(S'))\le L_{S'}(w_{0}).
\end{align}
Since $L_{\mu}(w_{0})=\min_{w\in\mathcal{W}}L_{\mu}(w)$, it follows that
\begin{align}
    L_{S'}(\A_{\mathrm{ERM}}(S'))\le_{1-\delta} \min_{w\in\mathcal{W}}L_{\mu}(w)+\sqrt{2\sigma^{2}D_2(\mu'\|\mu)+2\sigma^{2}\frac{\log\left(\frac{2}{\delta}\right)}{n}}.
\end{align}
Then using Theorem \ref{lhighprobability_on_gen},
\begin{align}
    &L_{\mu}(\A_{\mathrm{ERM}}(S'))-\min_{w\in\mathcal{W}}L_{\mu}(w)=L_{\mu}(\A_{\mathrm{ERM}}(S'))-L_{S'}(\A_{\mathrm{ERM}}(S'))+L_{S'}(\A_{\mathrm{ERM}}(S'))-\min_{w\in\mathcal{W}}L_{\mu}(w)\nonumber\\
    &\le_{1-\delta}\sqrt{2\sigma^{2}D_2(\mu'\|\mu)+\frac{2\sigma^{2}\left[D_{2}(P_{W'S'}\|P_{W'}P_{S'})+2\log\left(\frac{4}{\delta}\right)\right]}{n}}+\sqrt{2\sigma^{2}D_2(\mu'\|\mu)+2\sigma^{2}\frac{\log\left(\frac{4}{\delta}\right)}{n}}.
\end{align}
\end{proof}

\subsection{Proof of Theorem \ref{thm:sample_mix_comp}}
\label{appenDD}

Take some arbitrary $\mu\in \mathcal{P}$ and $\mu'\in\mathcal{P}_\gamma$. It suffices to find a bound on the difference
$\epsilon(\A, \mu', n, {\delta})-\epsilon(\A,  \mu, n, \delta)$ that depends only on $D(\mu'\|\mu)$. 
From Lemma \ref{lmm2}, given the training data $S'=(Z'_1, \cdots, Z'_n)\sim (\mu')^{\otimes n}$, we can define $S=(Z_1, \cdots, Z_n)\sim \mu^{\otimes n}$ such that $(Z_i,Z'_i)$ are i.i.d. for $1\le i\le n$ and
\begin{align}\label{coupling}
\mathbb{P}\left[Z_i\neq Z'_i\right]=\|\mu-\mu'\|_{TV}.
\end{align}

 For the first upper bound \eqref{eq100}, we write
\begin{align}\label{eq600}
    &L_{\mu'}(\A(S'))=L_{\mu'}(\A(S')-L_{\mu'}(\A(S))+L_{\mu}(\A(S))+L_{\mu'}(\A(S))-L_{\mu}(\A(S))\nonumber\\
    &\overset{(a)}{\le}_{1-\delta} \sum_{i=1}^{n}\beta_{i}+\min_{w\in\mathcal{W}}L_{\mu}(w)+\epsilon(\A,  \mathcal{P}, n, \delta)+L_{\mu'}(\A(S))-L_{\mu}(\A(S))\nonumber\\
    &\overset{(b)}{\le}\sum_{i=1}^{n}\beta_{i}+\min_{w\in\mathcal{W}}L_{\mu}(w)+\epsilon(\A,  \mathcal{P}, n, \delta)+\sqrt{2\sigma^{2}\gamma},
\end{align}
where, (a) comes from the uniform stability condition and definition of $\epsilon(\A,  \mathcal{P}, n, \delta)$. Inequality (b) is derived using Lemma \ref{lemma1} as follows
\begin{align*}
    L_{\mu'}(\A(S))-L_{\mu}(\A(S))&=\mathbb{E}_{\mu'}\left[\ell(\A(S),Z')-\mathbb{E}_{\mu}[\ell(\A(S),Z)]\right]\\
    &\le \frac{1}{\lambda}D(\mu'\|\mu)+\frac{1}{\lambda}\log\mathbb{E}_{\mu}\left[e^{\lambda\left[\ell(\A(S),Z')-\mathbb{E}_{\mu}[\ell(\A(S),Z)]\right]}\right]\\&\le \frac{1}{\lambda}D(\mu'\|\mu)+\frac{\lambda\sigma^{2}}{2}.
\end{align*}
Optimizing on $\lambda$ and $D(\mu'\|\mu)\le \gamma$, we get
\begin{align}\label{eq9911}
     L_{\mu'}(\A(S))-L_{\mu}(\A(S))\le \sqrt{2\sigma^{2}\gamma}.
\end{align}
Next, we give an upper bound for $\min_{w}{L}_{\mu}(w)$. Let $w_{0}=\argmin_{w}{L}_{\mu}(w)$ and $w'_{0}=\argmin_{w}{L}_{\mu'}(w)$. A similar argument as above gives
\begin{align}
L_{\mu}(w_{0})\le L_{\mu}(w_{0})-L_{\mu'}(w'_{0})+L_{\mu'}(w'_{0})\le L_{\mu}(w'_{0})-L_{\mu'}(w'_{0})+L_{\mu'}(w'_{0})\le \sqrt{2\sigma^{2}D(\mu'\|\mu)}+L_{\mu'}(w'_{0}).\label{eqnttt}
\end{align}
Using \eqref{eq600}, we deduce
\begin{align}
   L_{\mu'}(W(S')) \le \min_{w\in\mathcal{W}}L_{\mu'}(w)+\epsilon(\A,  \mathcal{P}, n, \delta)+ \sum_{i=1}^{n}\beta_{i}+2\sqrt{2\sigma^{2}\gamma}.
\end{align}
This completes the proof for the first upper bound.

For the second upper bound \eqref{eq101}, the population risk of the learning algorithm with respect to $\mu'$ by using Lemma \ref{lemma1} could be written as,
\begin{align}
	\lambda{L}_{\mu'}(\A(S'))&=\int_{\mathcal{Z}}\lambda\ell(\A(S'),z)\mu'(dz)\nonumber\\
	&=\mathbb{E}_{Z\sim \mu'}\left[\lambda\ell(\A(S'),Z)\right]\nonumber\\
	&\le\ln\Big(\mathbb{E}_{Z\sim\mu}\left[\exp\big(\lambda\ell(\A(S'),Z)\big)\right]\Big)+ D(\mu'\|\mu).\label{eq112}
\end{align}	

Note that both sides of \eqref{eq112} are random variables (and functions of $S'$) and $Z$ is taken to be independent of $S'$.

Considering the stability notion of algorithm from Definition \ref{def0}, the following inequality holds almost surely:
\begin{align*}
	\Big|\ell(\A(S'),z)-\ell(\A(S),z)\Big|
	\le \sum_{i=1}^{n}\beta_{i}(n)\boldsymbol{1}[Z_{i}\neq Z'_{i}], \qquad \forall z\in\mathcal{Z}. 
\end{align*}
Therefore, if we take $Z\sim\mu$ independent of $(S,S')$, we deduce that
\begin{align}
	\label{eq113}
	\Big|\ell(\A(S'),Z)-\ell(\A(S),Z)\Big|
	\le \sum_{i=1}^{n}\beta_{i}(n)\boldsymbol{1}[Z_{i}\neq Z'_{i}].
\end{align}

Next, we bound the random variable $\mathbb{E}_{Z\sim\mu}\left[\exp\big(\lambda\ell(\A(S'),Z)\right]$ in \eqref{eq112} from above as follows:
\begin{align}
	\mathbb{E}_{Z\sim\mu}\left[\exp\big(\lambda\ell(\A(S'),Z)\big)\right]&\overset{(a)}{\le} \mathbb{E}_{Z\sim\mu}\left[\exp\left(\lambda\ell(\A(S),Z)+\lambda\sum_{i=1}^{n}\beta_{i}\boldsymbol{1}[Z_{i}\neq Z'_{i}]\right)\right]\nonumber\\
	&\overset{(b)}{\le}\exp\Big(\lambda\mathbb{E}_{Z}\ell(\A(S),Z)+\lambda^{2}\sigma^{2}/2+\lambda\sum_{i=1}^{n}\beta_{i}\boldsymbol{1}[Z_{i}\neq Z'_{i}]\Big)\nonumber\\
	&\overset{(c)}{\le}_{1-\delta}\exp\Big(\lambda\min_{w}L_{\mu}(w)+\lambda\epsilon+\lambda^{2}\sigma^{2}/2+\lambda\sum_{i=1}^{n}\beta_{i}\boldsymbol{1}[Z_{i}\neq Z'_{i}]\Big),\label{eq1126}
	\end{align}
	where $(a)$ comes from \eqref{eq113}, inequality $(b)$ comes from the subgaussianity of $\ell(\A(S),Z)$ in terms of $Z$ for any fixed $S$ and  $(c)$ is derived from the definition of $\epsilon=\epsilon(\A,  \mathcal{P}, n, \delta)$. 
	
Next, from Markov's inequality we have
\begin{align}
&\exp\Big(\lambda\min_{w}L_{\mu}(w)+\lambda\epsilon+\lambda^{2}\sigma^{2}/2+\lambda\sum_{i=1}^{n}\beta_{i}\boldsymbol{1}[Z_{i}\neq Z'_{i}]\Big)\nonumber
\\
		&\overset{(d)}{\le}_{1-\delta'}\frac{1}{\delta'}\mathbb{E}_{S,S'}\exp\left(\lambda\min_{w}L_{\mu}(w)+\lambda\epsilon+\lambda^{2}\sigma^{2}/2+\lambda \sum_{i=1}^{n}\beta_{i}\boldsymbol{1}[Z_{i}\neq Z'_{i}]\right)\nonumber\\
	&=\frac{1}{\delta'}\exp(\lambda^{2}\sigma^{2}/2)\exp(\lambda\min_{w}\mathsf{L}_{\mu}(w)+\lambda\epsilon)\nonumber\\
	&\qquad\times\prod_{i=1}^{n}\left(\exp(\lambda\beta_{i})\cdot\|\mu-\mu'\|_{TV}+1-\|\mu-\mu'\|_{TV}\right),\label{eqnlast}
\end{align}
where the last equality follows from \eqref{coupling}.

Using \eqref{eq112}, \eqref{eq1126} and \eqref{eqnlast}, we find the following upper bound on ${L}_{\mu'}(\A(S'))$ with the probability at least $1-\delta-\delta'$:
\begin{align}\label{eq1112}
	{L}_{\mu'}(\A(S'))&\le\min_{w}{L}_{\mu}(w)+\epsilon+\frac{1}{\lambda}\log(1/\delta')+\lambda\sigma^{2}/2
	\\&\qquad+\frac{1}{\lambda}\sum_{i=1}^{n}\ln\big(1+(\exp(\lambda\beta_{i})-1)\|\mu-\mu'\|_{TV}\big)+\frac{1}{\lambda}{D}(\mu'\|\mu)\nonumber\\
	&\leq \min_{w}{L}_{\mu}(w)+\epsilon+\frac{1}{\lambda}\log(1/\delta')+\lambda\sigma^{2}/2+\frac{1}{\lambda}\sum_{i=1}^{n}\ln\big(1+\frac{1}{2}(\exp(\lambda \beta_{i})-1)\sqrt{\gamma}\big)+\frac{1}{\lambda}\gamma\nonumber
\end{align}
where we used  $D(\mu'\|\mu)\le \gamma$ and Pinsker's inequality. The choice of $\lambda=g(\delta')=\sqrt{\frac{2\left[\log(1/\delta')+\gamma\right]}{\sigma^{2}}}$ yields
\begin{align*}
	{L}_{\mu'}(\A(S'))&\le_{1-\delta-\delta'}\min_{w}{L}_{\mu}(w)+\epsilon(\A,  \mathcal{P}, n, \delta)\\&\qquad+\sqrt{2\sigma^{2}\left[\log(1/\delta')+\gamma\right]}+\frac{1}{g(\delta')}\sum_{i=1}^{n}\ln\left(1+\frac{1}{2}(\exp(g(\delta')\beta_{i})-1)\sqrt{\gamma}\right).
\end{align*}
Therefore, we get following upper bound for ${L}_{\mu'}(\A(S'))$ with probability  at least $1-\delta$:
\begin{align*}
	{L}_{\mu'}(\A(S'))&\le\min_{w}{L}_{\mu}(w)+\epsilon(\A,  \mathcal{P}, n, \delta/2)+\sqrt{2\sigma^{2}\left[\log(2/\delta)+\gamma\right]}\\&\qquad+\frac{1}{g(\delta/2)}\sum_{i=1}^{n}\ln\left(1+\frac{1}{2}(\exp(g(\delta/2)\beta_{i})-1)\sqrt{\gamma}\right).
\end{align*}

Finally, using \eqref{eqnttt}, we get
\begin{align}
	{L}_{\mu'}(\A(S'))\nonumber\le \min_{w}{L}_{\mu'}(w)+\epsilon(\A,  \mathcal{P}, n, \delta/2)+f(\delta)\nonumber
\end{align}
where
\[
f(\delta)\triangleq\sqrt{2\sigma^{2}\gamma}+\sqrt{2\sigma^{2}\left[\log(2/\delta)+\gamma\right]}\nonumber+\frac{1}{g(\delta/2)}\sum_{i=1}^{n}\ln\left(1+\frac{1}{2}(\exp(g(\delta/2)\beta_{i})-1)\sqrt{\gamma}\right).
\]
This completes the proof.
\subsection{Proof of Theorem \ref{thm11}}\label{appenEE}
It is clear that $\tilde{\D}_{2}(r/n)\le \D_{2}(r/n)$ from their definitions.
By the definition of $v_n$ for any arbitrary $p_{W'|S'}$ where  $S'=(Z'_1, Z'_2,\cdots, Z'_n)$ we have
\[
\mathbb{E}\left[\sum_{i=1}^n\frac1n\tilde\ell(W',Z'_i)\right]\geq v_n.
\]
It follows that
\begin{align*}
\D_{1}(r)&=
 \max_{P_{W'|S'}:~I(W';S')\leq r}
\mathbb{E}\left[L_{\mu}(W')-L_{S'}(W')\right]
\\&
=\max_{\substack{P_{W'|S'}:~I(W';S')\leq r,\\ \mathbb{E}\left[\sum_{i=1}^n\frac1n\tilde\ell(W',Z'_i)\right]\geq v_n}}
\mathbb{E}\left[L_{\mu}(W')-L_{S'}(W')\right].
\end{align*}
A similar argument as in \eqref{eqnRR12} shows that for any arbitrary $p_{W'|S'}$ we have
\[I(W';S')\geq \sum_{i=1}^n I(W';Z'_i).\]
Thus,
\begin{align*}
\D_{1}(r)&\leq
\max_{\substack{P_{W'|S'}:~\frac1n\sum_{i=1}^n I(W';Z'_i)\leq \frac{r}{n},\\ \frac1n\sum_{i=1}^n\mathbb{E}\left[\tilde\ell(W',Z'_i)\right]\geq v_n}}
\frac{1}{n}\sum_{i=1}^{n}\mathbb{E}\left[L_{\mu}(W')-\ell(W',Z'_{i})\right].
\end{align*}
Take some arbitrary $p_{W'|S'}$ and a time-sharing random variable $Q$ uniform on $\{1,2,\cdots, n\}$, independent of previously defined variables. Note that
\begin{align}
I(W';Z'_Q)&\leq \nonumber
I(Q,W';Z'_Q)\\
&=I(W';Z'_Q|Q)\label{eqnpf11a}
\\&=\frac1n\sum_{i=1}^n I(W';Z'_i)\nonumber
\\&\leq \frac{r}{n}\nonumber
\end{align}
where \eqref{eqnpf11a} follows from the fact that $Z'_i$'s are iid. We also have
\[\mathbb{E}\left[\tilde\ell(W',Z'_Q)\right]=
\frac1n\sum_{i=1}^n\mathbb{E}\left[\tilde\ell(W',Z'_i)\right]\geq v_n,
\]
Thus, the joint distribution $p_{W',Z'_Q}$ satisfies the constraints of $\tilde{\D}_2(\frac{r}{n})$. Moreover, $Z'_Q\sim\mu'$ and
\[\mathbb{E}\left[L_{\mu}(W')-\ell(W',Z'_{Q})\right]=
\frac{1}{n}\mathbb{E}\sum_{i=1}^{n}\left[L_{\mu}(W')-\ell(W',Z'_{i})\right]
\]
Thus, we deduce that $\D_1(r)\leq \tilde{\D}_2(\frac{r}{n})$ as desired.

\bibliographystyle{IEEEtran}
\bibliography{reference}
\end{document}